\newtheorem{theorem}{Theorem}
\newtheorem{proposition}{Proposition}
\newtheorem{lemma}[proposition]{Lemma}
\newtheorem{corollary}[proposition]{Corollary}
\theoremstyle{definition}
\theoremstyle{remark}
\newtheorem{remark}[proposition]{Remark}
\numberwithin{equation}{section}
\numberwithin{proposition}{section}
\newcommand\R{{\ensuremath {\mathbb R} }}
\newcommand\C{{\ensuremath {\mathbb C} }}
\newcommand\N{{\ensuremath {\mathbb N} }}
\newcommand\Z{{\ensuremath {\mathbb Z} }}
\newcommand\U{{\ensuremath {\mathbb U} }}
\renewcommand\S{\mathcal{S}}
\renewcommand\phi{\varphi}
\renewcommand\le{\leqslant}
\renewcommand\ge{\geqslant}
\renewcommand\epsilon{\varepsilon}
\renewcommand\hat{\widehat}
\renewcommand\tilde{\widetilde}
\newcommand{\gH}{\mathfrak{H}}
\newcommand{\gK}{\mathfrak{K}}
\newcommand{\gA}{\mathfrak{A}}
\newcommand{\gS}{\mathfrak{S}}
\newcommand{\cP}{\mathcal{P}}
\newcommand\ii{{\ensuremath {\infty}}}
\newcommand\proj[1]{{\ensuremath{| #1\rangle\langle #1|}}}
\newcommand\projneg[1]{{\ensuremath{\chi_{(-\ii,0]}(#1)}}}
\newcommand{\Erbdf}[2]{{\ensuremath{\cE_{\rm{rBDF}}^{#1}(#2)}}}
\newcommand\tro[1]{{\ensuremath{\tr_0(#1)}}}
\renewcommand\d[1]{{\ensuremath{\,\text{d}#1}}}
\newcommand{\cE}{\mathcal{E}}
\newcommand{\F}{\mathcal{F}}
\newcommand{\B}{\mathcal{B}}
\newcommand{\cK}{\mathcal{K}}
\newcommand{\cN}{\mathcal{N}}
\newcommand{\cC}{\mathcal{C}}
\newcommand{\cJ}{\mathcal{J}}
\newcommand{\cF}{\mathcal{F}}
\newcommand{\cV}{\mathcal{V}}
\newcommand{\CJ}{\mathscr{C}}
\newcommand{\alp}{\boldsymbol{\alpha}}
\newcommand{\ind}{\boldsymbol{1}}
\DeclareMathOperator{\supp}{supp}
\DeclareMathOperator{\tr}{Tr}
\DeclareMathOperator{\Det}{Det}
\DeclareMathOperator{\car}{\mathfrak{A}}
\begin{document}
 
\title[Static Pair Creation in Strong Fields]{Static Electron-Positron Pair Creation in \\ Strong Fields for a Nonlinear Dirac model}

\author{Julien Sabin}

\address{D\'epartement de Math\'ematiques, CNRS UMR 8088, Universit\'e de Cergy-Pontoise, 95000 Cergy-Pontoise, France}

\email{julien.sabin@u-cergy.fr}

\date{\today}

\maketitle

\begin{abstract}
 We consider the Hartree-Fock approximation of Quantum Electrodynamics, with the exchange term neglected. We prove that the probability of static electron-positron pair creation for the Dirac vacuum polarized by an external field of strength $Z$ behaves as $1-\exp(-\kappa Z^{2/3})$ for $Z$ large enough. Our method involves two steps. First we estimate the vacuum expectation of general quasi-free states in terms of their total number of particles, which can be of general interest. Then we study the asymptotics of the Hartree-Fock energy when $Z\to+\ii$ which gives the expected bounds. 
\end{abstract}

\tableofcontents

\section*{Introduction}

In 1930, Dirac \cite{Dirac-30} suggested the idea of identifying the vacuum with a sea of virtual electrons with negative kinetic energy. His theory implies that when a sufficiently strong source of energy is provided to the vacuum, some virtual electrons are excited into real electrons, leaving ``holes'' in the Dirac sea. These holes can be interpreted as positrons, the anti-particles of the electrons, which were experimentally observed in 1933 by Anderson \cite{Anderson-33}. The extraction of an electron from the Dirac sea is usually called \emph{electron-positron pair creation}. Sauter \cite{Sauter-31}, and Heisenberg-Euler \cite{HeiEul-36} considered the possibility that an external electromagnetic field could excite the Dirac sea to create those pairs. Schwinger \cite{Schwinger-51a} then computed the probability of dynamical pair creation by a constant, uniform, external electric field in the framework of Quantum Electrodynamics (QED).  The specific phenomenon of pair production triggered by an external, non-quantized field is thus labeled the \emph{Schwinger effect}. It is remarkable that this effect is different from the absorption of photons by the vacuum, which is another possible source for pair creation. The Schwinger effect is based on the fact that the vacuum acts as a polarizable medium which can decay into electron-positron pairs when excited by a sufficiently strong electric field. Although the modern formulation of QED no longer describes the vacuum as a sea of virtual particles, Dirac's theory is still valid in the mean-field approximation \cite{HaiLewSerSol-07}. 

Experimentally, pair creation in electric fields has not been observed yet because it is only non-negligible in a very strong field. However, recent progress in laser physics have permitted to create very strong fields, making the observation of the Schwinger effect possible in the near future \cite{Dunne-09,Tajima-09,Bulanov-10}.

One has to distinguish between \emph{dynamical} and \emph{static} pair creation. Dynamical pair creation consists in studying the time evolution of the vacuum state when an external field is progressively turned on, so that a pair consisting of a scattering electron and a corresponding hole in the Dirac sea is created. The external field is then progressively switched off, and one has to check if the pair still exists when the field is completely turned off. Static pair creation, on the other hand, consists in the study of the absolute ground state (the polarized vacuum) of the Hamiltonian in an external field. Therefore, it is a time-independent process. In this context, the vacuum with an additional particle is energetically more favorable than the vacuum without particle.  Static pair creation is easier to study than dynamical pair creation, but it is also a bit less relevant from the physical point of view. 

When the interactions between particles are neglected (the so-called \emph{linear} case), \emph{static} pair creation was mathematically studied by Klaus and Scharf \cite{KlaSch-77a}. They proved that the probability of pair creation becomes 1 when the strength of the positive external field sufficiently increases such that an eigenvalue of the Hamiltonian of the system crosses zero. In the linear case, \emph{dynamical} pair creation is a very involved phenomenon, whose properties were mathematically understood very recently. Nenciu \cite{Nenciu-80,Nenciu-87} proved that there is a discontinuity in the probability to create pairs as the strength of a specific external field increases, in the adiabatic limit. Later on, Pickl and D\"urr \cite{PicDur-08,PicklPhD} proved that the probability of pair creation tends to 1 in the adiabatic limit, for general over-critical external fields, by carefully studying the resonances created by the eigenvalues diving into the essential spectrum of the Hamiltonian of the system. 
 
This article is devoted to the mathematical study of \emph{static} pair creation in a \emph{nonlinear} model describing the polarized vacuum, taking into account the interactions between particles. This model was first proposed by Chaix and Iracane \cite{ChaIra-89} in 1989 and it has recently been given a solid mathematical ground in a series of papers by Gravejat, Hainzl, Lewin, S\'er\'e, and Solovej \cite{HaiLewSer-05a,HaiLewSer-05b,HaiLewSol-07,HaiLewSer-08,GraLewSer-09}. As in those papers, the main difficulty of our work is the nonlinearity of the model. The more involved study of dynamical pair creation for the same model will be the subject of future work. 

In the considered model, the polarized vacuum in a potential generated by a density of charge $Z\nu$ is described by an operator $P_Z$ on $L^2(\R^3,\C^4)$ (a density matrix). This operator is a solution to the nonlinear equation
\begin{equation*}
 \left\{\begin{array}{ccc}
         P_Z & = & \projneg{D_Z} +\delta \\
	 D_Z & = & D^0 +\alpha\left(\rho_{P_Z-\frac{1}{2}} -Z\nu\right)\star|\cdot|^{-1}
        \end{array}\right. ,
\end{equation*}
where $D^0:=-i\alp\cdot\nabla +\beta$ is the (free) Dirac operator and $\delta$ is any self-adjoint operator such that $0\le\delta\le 1$ with ${\rm rank}(\delta)\subset\ker(D_Z)$. Discarding the operator $\delta$, we see that $P_Z$ is the ground state in the grand canonical ensemble of a dressed Dirac operator with density of charge $Z\nu$ perturbed by the density $\rho_{P_Z-\frac{1}{2}}$ of the vacuum. The polarized vacuum therefore interacts with itself. 

We shall consider the operator $P_Z$ in the limit $Z\to+\ii$. Of our particular interest is the probability that pairs are generated, which is a nonlinear function of $P_Z$ (see Section \ref{sec:def-proba} below). The usual picture \cite{KlaSch-77a,SchSei-82,Hainzl-04} is that if the first eigenvalue $\lambda_1(Z)$ of $D_Z$ is negative, as showed in Figure \ref{fig:spectre}, then the vacuum becomes charged and the probability of creating at least one pair is 1. In the linear case, Hainzl \cite{Hainzl-04} showed that the charge of the vacuum in the external density $Z\nu$ is exactly the number of eigenvalues (counted with multiplicity) of the operator $D^0-tZ\nu\star|\cdot|^{-1}$ crossing 0 when we increase $t$ from $0$ to $1$. However, because of the nonlinearity of the model we study, detecting for which values of $Z$ the first eigenvalue will cross 0 is very difficult. However, the probability of pair creation can be very close to 1 without any crossing, as we will explain in Section \ref{sec:def-proba}.

\begin{figure}[h]
 \input{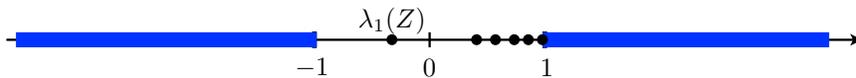}
 \caption{Spectrum of the mean-field one body Hamiltonian $D_Z$.}
\label{fig:spectre}
\end{figure}

More precisely, we prove that the probability of static pair creation behaves as $1-\exp(-\kappa Z^{2/3})$ (see Theorem \ref{th:main}), where $Z$ is the charge of a nucleus put in the vacuum, and $\kappa$ is a constant depending on different parameters of the model such as the cut-off or the shape of the nucleus. The proof relies on the large-$Z$ asymptotics of the polarized vacuum energy, which is obtained by using an appropriate trial state. This implies that the average number of particles of the polarized vacuum is of order $Z^{2/3}$. We then use general estimates showing that the probability to create pairs for a quasi-free quantum state is bigger than $1-\exp(-\kappa' N)$, where $N$ is the average number of particles of the quantum state and $\kappa'$ is a universal constant. Since for the polarized vacuum $N\simeq Z^{2/3}$, the result follows.

The paper is organized as follows. In Section 1 we introduce the Bogoliubov-Dirac-Fock model and we state our main result. In Section 2, we prove the general estimates on quasi-free states on Fock space, which are of independent interest. In the end of Section 2 we come back to our particular setting. In Section 3, we study the large-$Z$ asymptotics of the polarized vacuum energy. Finally, in Section 4 we prove Theorem \ref{th:main} using the tools developed in Section 2 and 3. In Appendix A, we recall some properties of product states, which are used in Section 2.

\bigskip

\noindent\textbf{Acknowledgments.}  I sincerely thank Mathieu Lewin for his precious guidance and constant help. I also acknowledge support from the ERC MNIQS-258023 and from the ANR ``NoNAP'' (ANR-10-BLAN 0101) of the French ministry of research.

\section{Estimate on the probability to create pairs}

\subsection{Probability to create a pair}\label{sec:def-proba}

The first quantity to define is the probability to create a pair. Let $\gH_+,\gH_-$ be (separable) Hilbert spaces, representing the one particle (resp. anti-particle) space. The natural space to describe a system with an arbitrary number of particles/anti-particles is the Fock space 
$$\cF_0:=\cF(\gH_+)\otimes\cF(\gH_-),$$
with the usual notation $\cF(\gH):=\oplus_{N\ge0}\wedge_1^N\gH$ for any Hilbert space $\gH$ and with the convention $\wedge_1^0\gH:=\C$. We also define the \emph{vacuum state} $\Omega:=\Omega_+\otimes\Omega_-\in\F_0$ where $\Omega_\pm:=1\oplus0\oplus0\cdots\in\cF(\gH_\pm)$. Recall that a state over $\cF_0$ can be defined \footnote{For convenience, we will later define a state as a linear form on the tensor product of CAR algebras $\car[\gH_+]\otimes\car[\gH_-]\subset\B(\cF_0)$, which in our context does not change anything since all the states we consider are normal.} as a positive linear functional $\omega:\B(\cF_0)\to\C$ with $\omega(\rm{Id}_{\cF_0})=1$, where $\B(\cF_0)$ is the set of all bounded linear operators on $\cF_0$. Notice that any normalized $\psi\in\cF_0$ defines a state $\omega_\psi$ (called \emph{pure} state) by the formula $\omega_\psi(A)=\langle\psi,A\psi\rangle_{\cF_0}$, where $\langle\cdot,\cdot\rangle_{\cF_0}$ is the usual inner product on $\cF_0$. Following \cite[Corollary 4.1]{PicklPhD}, \cite[Eq. (10.154)]{Thaller}, and \cite[Section 2]{Nenciu-87}, we define the probability $p(\omega)$ for a state $\omega$ to create a particle/anti-particle pair by
\begin{equation}
 \boxed{
 p(\omega):=1-\omega\left(\proj{\Omega}\right),
}
\end{equation}
where $\proj{\Omega}\in\B(\cF_0)$ is the orthogonal projection on $\C\Omega$. For a pure state $\psi=\psi_{0,0}\oplus\psi_{0,1}\oplus\psi_{1,0}\oplus\cdots\in\cF_0$, we have $p(\omega_\psi)=1-\left|\psi_{0,0}\right|^2$.Therefore, $p(\omega_\psi)=0$ if and only if $\psi=\Omega$ (the vacuum has probability zero to create pairs), while $p(\omega_\psi)=1$ if and only if $\psi_{0,0}=0$. In the latter case, notice that $\psi$ does not litterally contain pairs, in the sense that its number of particles may not be equal to its number of anti-particles. This definition merely measures the probability that a state contains real particles/anti-particles.

Typically, $\Omega$ represents the free (or bare) vacuum and we want to measure the probability of a perturbation $\Omega'$ of $\Omega$, representing the polarized (or dressed) vacuum in the presence of an external electric field, to have pairs. Assuming that $\Omega'$ is a pure quasi-free state, we have the well-known formula (see e.g. \cite[Theorem 10.6]{Thaller}, \cite[Theorem 2.2]{BacLieSol-94}, or \cite[Theorem 5]{HaiLewSer-08})
\begin{equation}\label{formula-vacuum}
\Omega'=\prod_{i\ge1}\frac{1}{\sqrt{1+\lambda_i^2}}\prod_{n=1}^N a_0^*(f_n)\prod_{m=1}^M b_0^*(g_m)\prod_{i\ge 1}\left(1+\lambda_ia_0^*(v_i)b^*_0(u_i)\right)\Omega,
\end{equation}
where $a_0^*$ (resp. $b_0^*$)  is the free particle (resp. anti-particle) creation operator, $(f_n)_n\cup(v_i)_i$ (resp. $(g_m)_m\cup(u_i)_i$) is an orthonormal set for $\gH_+$ (resp. $\gH_-$), and $(\lambda_i)_i\in\ell^2(\R_+)$.

 From the formula \eqref{formula-vacuum}, we see that $p(\omega_{\Omega'})=1$ as soon as $N>0$ or $M>0$. Moreover, in this case real particles in the states $(f_n)_n$ and real anti-particles in the states $(g_m)_m$ have been created. In the linear case, Klaus and Scharf \cite{KlaSch-77a} proved that $N,M\neq0$ if the external field is strong enough. However, there can be a high probability to create pairs even when $N=M=0$. Indeed, since in this case we have
 $$|\langle\Omega',\Omega\rangle_{\cF_0}|^2=\prod_{i\ge1}\frac{1}{1+\lambda_i^2},$$ 
 one sees that $p(\omega_{\Omega'})$ is close to 1 when the $\lambda_i$ are large enough. One simple condition is that $\sum_i\lambda_i^2(1+\lambda_i^2)^{-1}$ is large enough, by the inequality 
$$\prod_{i\ge1}\frac{1}{1+\lambda_i^2}\le\exp\left[-\sum_i\frac{\lambda_i^2}{1+\lambda_i^2}\right].$$
Note that this is indeed (half) the average total number of particle of the state $\Omega'$ (number of particle + number of anti-particle),
$$\omega_{\Omega'}(\cN)=2\sum_i\frac{\lambda_i^2}{1+\lambda_i^2},$$
where $\cN$ is the usual number operator on $\cF_0$ (see formula \eqref{eq:numberop}). Hence $p(\omega_{\Omega'})$ is close to 1 when $\omega_{\Omega'}(\cN)$ is large enough. While the non-vanishing $N,M$ case can be interpreted as the creation of real particles, this second explanation for an increasing $p(\omega_{\Omega'})$ can be interpreted as a ``virtual pair creation''.  In this article, we study an analog of ``virtual pair creation'' for more general states than those given by formula \eqref{formula-vacuum}.

\subsection{Static pair creation in the reduced BDF approximation}\label{sec:BDF-notations}

For noninteracting electrons in an external field $V$, the polarized vacuum $\Omega'$ is the unique Hartree-Fock state whose density matrix is \cite{KlaSch-77a,Hainzl-04}
$$P=\projneg{D^0+V}.$$
 In this article, we will rather use the reduced Bogoliubov-Dirac-Fock approximation, a non-linear model enabling to describe an interacting vacuum in which $V$ is a function of $P$ itself. It was introduced by Hainzl, Lewin, S\'er\'e and Solovej in a series of articles \cite{HaiLewSer-05a,HaiLewSer-05b,HaiLewSol-07,HaiLewSer-08} after the pioneering work of Chaix, Iracane, and Lions \cite{ChaIra-89,ChaIraLio-89}. We will now briefly recall the model and the results needed for our study. 

In units where $m=c=\hbar=1$, the reduced Bogoliubov-Dirac-Fock (rBDF) energy functional is the (formal) difference between the energy of the state $P$ and that of the free vacuum $P^0_-=\projneg{D^0}$, with the exchange term dropped. It depends only on the variable $Q=P-P^0_-$, 
\begin{equation}\label{rBDF-energy}
 \Erbdf{\nu}{Q}:=\tro{D^0Q}-\alpha D(\rho_Q,\nu)+\frac{\alpha}{2}D(\rho_Q,\rho_Q).
\end{equation}
Here,  $\alpha>0$ is the coupling constant and $\nu:\R^3\to\R$ is the external charge density belonging to the Coulomb space 
$$\cC:=\left\{f\in\mathscr{S}'(\R^3):\:\int_{\R^3}\frac{|\widehat{f}(k)|^2}{|k|^2}\d{k}<+\ii\right\},$$ 
endowed with the inner product $D(\rho_1,\rho_2)=\int|k|^{-2}\widehat{\rho_1}(k)\overline{\widehat{\rho_2}(k)}$ (the hat denotes the Fourier transform\footnote{Our convention is $\hat{f}(k)=(2\pi)^{-3/2}\int_{\R^3}f(x)e^{-ik\cdot x}\d{x}, \quad\forall k\in\R^3$.}). We also use the notation $\|\rho\|_\cC=D(\rho,\rho)^{1/2}$ for any $\rho\in\cC$. In order to define the domain of the rBDF energy functional, let us fix a cut-off $\Lambda>0$ and define the one-particle Hilbert space
$$\gH_\Lambda:=\left\{f\in L^2(\R^3,\C^4),\quad\supp\widehat{f}\subset B(0,\Lambda)\right\}.$$
The operator $D^0=-i\alp\cdot\nabla+\beta$ is the usual Dirac operator on $L^2(\R^3,\C^4)$, where $\alpha_1,\alpha_2,\alpha_3,\beta$ are the Dirac matrices acting on $\C^4$,
$$\alpha_i=\left(\begin{array}{cc}
   0 & \sigma_i\\
  \sigma_i & 0
  \end{array}\right),\quad i=1,2,3, \qquad
\beta=\left(\begin{array}{cc}
   \text{Id}_{\C^2} & 0\\
  0 & -\text{Id}_{\C^2}
  \end{array}\right),
$$
and $(\sigma_i)_{i=1,2,3}$ are the Pauli matrices,
$$\sigma_1=\left(\begin{array}{cc}
   0 & 1\\
   1 & 0
  \end{array}\right),\qquad
  \sigma_2=\left(\begin{array}{cc}
   0 & -i\\
   i & 0
  \end{array}\right),\qquad
  \sigma_3=\left(\begin{array}{cc}
   1 & 0\\
   0 & -1
  \end{array}\right).
$$

 The operator $D^0$ stabilizes $\gH_\Lambda$, and its restriction to $\gH_\Lambda$ defines a bounded operator on $\gH_\Lambda$, which we still denote by $D^0$. For convenience we introduce $P^0_+:=1-P^0_-$.
We denote by $\gS_p(\gH)$ the Schatten class of all bounded operators $A$ on the Hilbert space $\gH$ such that $\tr(|A|^p)<\ii$. For any operator $Q$ on $\gH_\Lambda$ and for any $\epsilon,\epsilon'\in\{+,-\}$, we let $Q_{\epsilon\epsilon'}:=P^0_\epsilon QP^0_{\epsilon'}$ and we define 
$$\gS_{1,{P^0_-}}(\gH_\Lambda):=\left\{ Q\in\gS_2(\gH_\Lambda),\quad Q_{++},Q_{--}\in\gS_1(\gH_\Lambda)\right\}.$$
It is a Banach space endowed with the norm
$$\|Q\|_{1,P^0_-}:=\|Q_{++}\|_{\gS_1}+\|Q_{--}\|_{\gS_1}+\|Q_{+-}\|_{\gS_2}+\|Q_{-+}\|_{\gS_2}.$$

For any $Q\in\gS_{1,{P^0_-}}(\gH_\Lambda)$, we define its generalized trace by
$$\tro{Q}:=\tr\left(Q_{++}+Q_{--}\right),$$
and its density $\rho_Q$ by $\rho_Q(x):=\tr_{\C^4}(Q(x,x))$ for all $x\in\R^3$, where $Q(x,y)$ denotes the $4\times4$ matrix kernel of $Q$. This density $\rho_Q$ is well defined since $\supp\hat{Q}(\cdot,\cdot)\subset B(0,\Lambda)\times B(0,\Lambda)$ implies that $Q(\cdot,\cdot)$ is smooth. Furthermore, it is proved in \cite[Lemma 1]{HaiLewSer-08} that $\rho_Q\in L^2(\R^3)\cap\cC$ for any $Q\in\gS_{1,{P^0_-}}(\gH_\Lambda)$. We conclude that the rBDF energy functional is well-defined on the convex set
\begin{equation}\label{eq:cK}
\cK:=\left\{Q\in\gS_{1,{P^0_-}}(\gH_\Lambda),\quad Q=Q^*,\quad -P^0_-\le Q\le 1-P^0_-\right\}. 
\end{equation}
Notice that the kinetic part of the rBDF energy is well defined since 
$$\tro{D^0Q}=\tr(|D^0|(Q_{++}-Q_{--})).$$
The variational set $\cK$ is the convex hull of $\{P-P^0_-,\,P=P^2=P^*,\,P-P^0_-\in\gS_2(\gH_\Lambda)\}$, where $P$ is the density matrix of a pure Hartree-Fock state, which is a Hilbert-Schmidt perturbation of the free vacuum $P^0_-$. The rigorous derivation of the rBDF energy functional and the motivation for this functional setting can be found in \cite{HaiLewSol-07,HaiLewSer-05a}. 

For any $Z>0$ and $\nu\in\cC$, the rBDF energy functional $\cE_{\text{rBDF}}^{Z\nu}$ admits global minimizers $Q_Z$ on $\cK$. Minimizers are not necessarily unique, but they always share the same density $\rho_Z:=\rho_{Q_Z}$. Any minimizer $Q_Z$  satisfies the self-consistent equation
\begin{equation}
 \left\{\begin{array}{ccc}
         Q_Z & = & \projneg{D_Z}-P^0_- +\delta \\
	 D_Z & = & D^0 +\alpha\left(\rho_Z-Z\nu\right)\star|\cdot|^{-1}
        \end{array}\right. ,
\end{equation}
where $\delta$ is a self-adjoint operator such that $0\le\delta\le 1$ and ${\rm rank}(\delta)\subset\ker(D_Z)$. Hence, uniqueness holds if and only if $\ker(D_Z)=\{0\}$. Notice that since the density $\rho_Z$ is unique, the operator $D_Z$ is itself unique. Any minimizer of $\cE_{\text{rBDF}}^{Z\nu}$ on $\cK$ is interpreted as a generalized one-particle density matrix of a BDF state $\omega_{\text{vac}}^{Z\nu}$ (see Section \ref{sec:BDFstates}) representing the \emph{polarized vacuum} in the potential $Z\nu\star|\cdot|^{-1}$.  When there is a unique minimizer $Q_Z$, it is a difference of two projectors and hence it is the generalized one-particle density matrix of a pure state. We emphasize that there is no charge constraint in this minimization problem: In fact, the polarized vacuum could (and should) be charged when $Z$ is very large. In this case, one may think that an electron-positron pair is created, with the positron sent to infinity. 

We want to estimate $p(\omega_{\text{vac}}^{Z\nu})$ in terms of $Z$, and confirm the picture that the stronger the field, the more pairs are created. As a consequence, we will fix a non-zero density $\nu$ (interpreted as the shape of the external charge density) and study $p_Z:=p(\omega_{\text{vac}}^{Z\nu})$ for $Z>0$ large. Our main result is the following. 

\begin{theorem}\label{th:main}
 Let $\alpha>0$ and $\Lambda>0$. Let $\nu\in\cC$ such that $\int_{\R^3}(1+|x|)|\nu(x)|\d{x}<\ii$ and $q:=\int_{\R^3}\nu\neq0$. Then, there exists a constant $Z_1>0$ and a constant $\kappa>0$ such that for all $Z> Z_1$ we have 
 \begin{equation}\label{est:main}
  \boxed{
 p_Z\ge 1-e^{-\kappa Z^{2/3}}.
 }
 \end{equation}
 The constant $Z_1$ is equal to $\tilde{Z_1}/\int_{\R^3}|\nu|$ where $\tilde{Z_1}$ is defined in Equation \eqref{eq:Z_1} and $\tilde{Z_1}$ depends only on $\Lambda$,  $\alpha$, $|q|$,  and $\int_{\R^3}|x||\nu|$. The constant $\kappa$ equals $0.0941248\ldots\times C(\int_{\R^3}|\nu|)^{2/3}$, where $C$ is defined in Equation \eqref{eq:cste} below.
\end{theorem}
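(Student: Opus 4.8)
The plan is to prove the bound by combining two essentially independent ingredients, following the two-step scheme announced in the introduction. First, a \emph{universal} lower bound on the pair-creation probability of an arbitrary quasi-free state $\omega$ in terms of its average particle number $N=\omega(\cN)$, of the form $p(\omega)\ge 1-e^{-\kappa' N}$ with $\kappa'$ a universal constant (the content of Section 2). Second, a lower bound on the average particle number of the polarized vacuum, $N_Z:=\omega_{\text{vac}}^{Z\nu}(\cN)\gtrsim Z^{2/3}$ (the content of Section 3). Granting these, the theorem is immediate: applying the first estimate to $\omega=\omega_{\text{vac}}^{Z\nu}$ gives $p_Z\ge 1-e^{-\kappa' N_Z}$, and inserting the second gives $p_Z\ge 1-e^{-\kappa Z^{2/3}}$ with $\kappa=\kappa'\cdot C\,(\int_{\R^3}|\nu|)^{2/3}$, where $\kappa'=0.0941248\ldots$ and $C$ is the constant in the number bound. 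So the whole proof of Theorem~\ref{th:main} reduces to establishing these two estimates and tracking the constants, which fixes the threshold $Z_1$.

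For the universal estimate I would start from the explicit pure quasi-free form \eqref{formula-vacuum}: when $N=M=0$ one has $\omega_{\Omega'}(\proj{\Omega})=\prod_i(1+\lambda_i^2)^{-1}$, and the inequality displayed just before \eqref{formula-vacuum}, together with $\omega_{\Omega'}(\cN)=2\sum_i\lambda_i^2(1+\lambda_i^2)^{-1}$, already yields $p(\omega_{\Omega'})\ge 1-e^{-N/2}$; when $N$ or $M$ is positive one even has $p=1$. The genuine work, carried out in Section 2, is to extend this to an \emph{arbitrary} (possibly mixed) quasi-free state, where the combinatorics of the underlying Bogoliubov transformation degrade the constant to the explicit value $\kappa'=0.0941248\ldots$; I would treat this as a black box here.

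The heart of the matter is $N_Z\gtrsim Z^{2/3}$, which I would obtain by squeezing the vacuum energy $E(Z):=\Erbdf{Z\nu}{Q_Z}$ between two bounds. For the \emph{upper} bound I would use a trial state; the key physical point is that the momentum cut-off forbids full screening of the external charge (a density $\rho\approx Z\nu$ would require momenta of order $Z^{1/3}\gg\Lambda$), so the optimal induced density saturates at order $\Lambda^3$ on a ball of radius $R\sim (Z\int|\nu|)^{1/3}/\Lambda$. Inserting such a $Q\in\cK$ into \eqref{rBDF-energy} and balancing the attraction $-\alpha Z D(\rho_Q,\nu)$ against the self-repulsion $\tfrac\alpha2 D(\rho_Q,\rho_Q)$ (the kinetic term being of lower order) gives
\begin{equation*}
 E(Z)\ \le\ -c\,Z^{5/3},\qquad c=c(\alpha,\Lambda,\nu)>0 .
\end{equation*}
For the \emph{lower} bound I would discard the nonnegative self-energy and control the crossed term by a Hardy--Kato inequality: writing $D(\rho_Z,\nu)=\int_{\R^3}\nu(y)\big[\int\rho_Z(x)|x-y|^{-1}dx\big]dy$ and using $|x-y|^{-1}\le\tfrac\pi2|p|\le\tfrac\pi2|D^0|$ uniformly in $y$ on the diagonal blocks yields $D(\rho_Z,\nu)\le\tfrac\pi2(\int_{\R^3}|\nu|)\,\tro{D^0 Q_Z}$, whence, using the identity $\tro{D^0Q_Z}=\tr(|D^0|(Q_{Z,++}-Q_{Z,--}))$,
\begin{equation*}
 E(Z)\ \ge\ \tro{D^0 Q_Z}\left(1-\frac{\pi\alpha Z}{2}\int_{\R^3}|\nu|\right).
\end{equation*}
Once $Z$ exceeds $Z_1\sim 2/(\pi\alpha\int|\nu|)$ the parenthesis is negative, and comparing the two displays forces $\tro{D^0 Q_Z}\gtrsim Z^{5/3}/Z=Z^{2/3}$. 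Finally, since $|D^0|\le\sqrt{\Lambda^2+1}$ on $\gH_\Lambda$ and $Q_{Z,++}\ge 0\ge Q_{Z,--}$, the same identity gives $N_Z=\tr(Q_{Z,++}-Q_{Z,--})\ge \tro{D^0Q_Z}/\sqrt{\Lambda^2+1}\ge C(\int_{\R^3}|\nu|)^{2/3}Z^{2/3}$, as needed.

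The main obstacle is the quantitative trial-state computation: the sharp $5/3$ exponent, and especially the explicit constant $C$ entering $\kappa$, rest on a careful, cut-off-sensitive analysis of the constrained electrostatic minimization, whose subleading corrections involve the first moment $\int_{\R^3}|x||\nu|$ — which is exactly why that moment, together with $|q|$, $\alpha$ and $\Lambda$, enters the explicit threshold $\tilde{Z_1}$ in the statement. The density estimate $\rho_Q\in\cC$ and the precise form of the Hardy--Kato bound for a general $Q\in\cK$ (handling the off-diagonal blocks $Q_{+-},Q_{-+}$, which I glossed over above) require some care but are routine given \cite{HaiLewSer-08}. The universal bound of Section 2 is a second, essentially disjoint, source of difficulty, and is where the numerical constant $0.0941248\ldots$ is produced.
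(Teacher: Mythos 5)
Your two-step architecture --- a universal bound $\omega(\proj{\Omega})\le e^{-\kappa'\omega(\cN)}$ for quasi-free states, combined with $N_Z\gtrsim Z^{2/3}$ extracted from the energy asymptotics $E(Z)\le -cZ^{5/3}$ --- is exactly the paper's, and your trial-state heuristic for the upper bound (saturation of the induced density at order $\Lambda^3$ over a ball of radius $\sim Z^{1/3}/\Lambda$, kinetic term of lower order) is the correct mechanism behind Propositions \ref{prop:limsup} and \ref{prop:speed}. The one place where you commit to a concrete argument that the paper does not use is the passage from $E(Z)\le -cZ^{5/3}$ to $N_Z\gtrsim Z^{2/3}$, and that is where there is a gap. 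The inequality $D(\rho_{Q},\nu)\le\tfrac{\pi}{2}(\int|\nu|)\,\tro{D^0Q}$ is not justified for general $Q\in\cK$. Writing $D(\rho_Q,\nu)=\tro{QV}$ with $V=\nu\star|\cdot|^{-1}$, the diagonal part $\tr(Q_{++}V)+\tr(Q_{--}V)$ is indeed amenable to Kato's inequality because the constraint $-P^0_-\le Q\le 1-P^0_-$ forces $Q_{++}\ge0\ge Q_{--}$; but the off-diagonal blocks $Q_{+-},Q_{-+}$ contribute to $\rho_Q$ and are \emph{not} controlled by $\tro{D^0Q}=\tr(|D^0|(Q_{++}-Q_{--}))$ times $\|\nu\|_{L^1}$. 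They are only Hilbert--Schmidt, and the natural bound on their contribution involves $\|V_{-+}\|_{\gS_2}$, which cannot be estimated by $\|\nu\|_{L^1}$ alone (note $\hat V(k)\sim q|k|^{-2}$ near $k=0$ with $q\neq0$, so $V\notin L^2$); the cancellation $P^0_-(p)P^0_+(p)=0$ on the diagonal in Fourier space is what saves the day and produces the Coulomb norm $\|\nu\|_\cC$ instead. This is precisely the content of \cite[Lemma 1]{HaiLewSer-08}, and it is why the paper's Corollary \ref{coro:number-estimate} proceeds differently: it drops the kinetic term \emph{and} the self-repulsion, bounds $|D(\rho_Q,Z\nu)|\le C_\Lambda Z\|\nu\|_\cC\|Q\|_{1,P^0_-}$, and then controls $\|Q\|_{1,P^0_-}\le X^2+\sqrt2\,X$ with $X^2=\tr(Q_{++}-Q_{--})$ via the constraint inequality $Q^2\le Q_{++}-Q_{--}$. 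That last inequality is the key ingredient missing from your sketch: it is what lets the off-diagonal Hilbert--Schmidt norms be absorbed into the same quantity $N_Z$.

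The gap is fixable (the off-diagonal contribution is $O(Z\|\nu\|_\cC\sqrt{N_Z})$, subleading, so the exponent $Z^{2/3}$ survives either route), but your route would not reproduce the specific constant $C$ of \eqref{eq:cste}, which is built from $C_\Lambda$, $\|\nu\|_\cC$ and $\tilde{Z_1}$ rather than from $\tfrac{\pi}{2}\|\nu\|_{L^1}$; since the theorem asserts $\kappa=0.0941248\ldots\times C(\int_{\R^3}|\nu|)^{2/3}$ with that particular $C$, you should run the number estimate through \eqref{eq:speed} and Corollary \ref{coro:number-estimate} as the paper does. The remaining assembly --- taking the BDF state with 1-pdm $Q$ and pairing matrix $p=0$ given by Proposition \ref{prop:BDF-existence}, applying Corollary \ref{coro:estimate-BDF} with $a=0.0941248\ldots$ from Proposition \ref{prop:HFBmixed}, and undoing the normalization $\int|\nu|=1$ --- is exactly as you describe.
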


\begin{remark}
 The assumption $|x|\nu\in L^1(\R^3)$ allows us to have an explicit estimate. If we remove this assumption, we can still prove the weaker result that $p_Z\to1$ as $Z\to\ii$. In the sequel, by rescaling $Z$ if needed, we will assume that 
 $$
\boxed{
\int_{\R^3}|\nu|=1.
}
$$ 
If $\int\nu=0$, we expect an asymptotics lower than $Z^{2/3}$, but we are unable to prove it. 
\end{remark}

\begin{remark}
 We will see that $Z_1\sim\text{const.}\times\alpha^{-3/2}$ as $\alpha\to0$.
\end{remark}

Theorem \ref{th:main} says that in a very strong field, $Z\gg1$, the probability to create at least one electron-positron pair is very close to 1. It is reasonable to think that for some sufficiently large $Z$, the first eigenvalue of $D_Z$ crosses 0 in which case $p_Z=1$. However, determining the behaviour of the eigenvalue of $D_Z$ as $Z$ increases is difficult because of the nonlinearity of the model and because we are in a regime far from being perturbative. For all these reasons, the estimate \eqref{est:main} on $p_Z$ is the best we have so far. For very large $Z$, one expects that many electron-positron pairs will be generated. We conjecture that we have indeed $\omega_{\text{vac}}^{Z\nu}(\cP_k)\to0$ as $Z\to+\ii$ for all $k\in\N$, where $\cP_k$ is the orthogonal projector on the $k$-particle space in Fock space (see Section \ref{sec:quasi-free}). This would mean that for large $Z$, the probability to create at least $k$-pairs is very close to 1. Our method of proof only gives this result for $k=0$. However, if we assume that $\omega_{\text{vac}}^{Z\nu}$ is a \emph{pure} quasi-free state for all $Z$ large enough (which is the case if $0\notin\sigma(D_Z)$), then the conjecture follows from Proposition \ref{prop:HFBpure} below.

\subsection{Strategy of the proof}

The proof is separated into two parts. The first one consists in estimating the energy of the polarized vacuum,
\begin{equation}
 E(Z):=\Erbdf{Z\nu}{Q_{\text{vac}}^{Z\nu}}=\inf\left\{\Erbdf{Z\nu}{Q},\quad Q\in\cK\right\},
\end{equation}
from above by $- cZ^{5/3}$. We will also give a lower bound $E(Z)\gtrsim -Z^{5/3}$ to show that the power $5/3$ is optimal, although we only need the upper bound for the proof of Theorem \ref{th:main}. From this estimate, we then infer that the average number of particles (counted relatively to that of the free vacuum, see Section \ref{sec:BDFstates}) in the polarized vacuum satisfies
$$\tr\left((Q_{\text{vac}}^{Z\nu})_{++}-(Q_{\text{vac}}^{Z\nu})_{--}\right)\gtrsim Z^{2/3}.$$
The precise statements of these results and their proofs can be found in Section \ref{sec:asymptotics}. 

In a second part, we prove an estimate on the vacuum expectation $\omega(\proj{\Omega})$ for a quasi-free state $\omega$, in terms of its average number of particle $\omega(\cN)$. These estimates are of independent interest and therefore we also provide several other estimates for the distribution of quasi-free states in the $k$-particle spaces. These results are contained in Section \ref{sec:quasi-free}. Finally, in Section \ref{sec:proof}, we combine the two parts and prove Theorem \ref{th:main}.

\section{On the distribution of quasi-free states in the $k$-particle spaces}\label{sec:quasi-free}

In this section, we consider general quasi-free states. Only in Section \ref{sec:BDFstates} we come back to our particular situation of pair creation. We start by introducing the notation used throughout this section.

\subsection{Notation}\label{sec:notation}
Let  $(\gH,\langle\cdot,\cdot\rangle)$ be a complex, separable Hilbert space whose inner product $\langle\cdot,\cdot\rangle$ is linear in the second argument. We also need an anti-linear operator ${J:\gH\to\gK}$ such that $J^*J=\text{Id}_\gH$, where $(\gK,\langle\cdot,\cdot\rangle_\gK)$ is another complex Hilbert space\footnote{Recall that the adjoint $J^*$ of an anti-linear operator is defined as $\langle J^*f,g\rangle:=\langle J g,f\rangle_\gK$ for all $f\in\gK$ and $g\in\gH$. Typically, one chooses $\gK=\gH$ and $J$ the complex conjugation, or $\gK=\gH^*$ and $J(f)=\langle f,\cdot\rangle$ \cite{Solovej-07}. Here we keep $\gK$ abstract because this will be useful for the construction of BDF states in Section \ref{sec:BDFstates}.}. Let $\F:=\oplus_{N\ge 0}\gH^N$ be the associated Fock space with $\gH^N:=\wedge_1^N\gH$. We still denote by $\Omega=1\oplus0\oplus\cdots\in\cF$ the vacuum vector. For $k\in\N$, we denote by $\cP_k\in\B(\cF)$ the orthogonal projection on $\gH^k\subset\cF$. We recall from Section \ref{sec:def-proba} that $\B(\cF)$ is the space of all linear bounded operators on $\cF$.  Let $\car[\gH]$ be the CAR unital $C^*$-subalgebra of $\B(\cF)$ generated by the usual creation (resp. annihilation operators) $a^*(f)$ (resp. $a(f)$), for $f\in\gH$. We denote by $\cN$ the particle number operator on $\F$,
$$\cN:=\bigoplus_{k\ge 0}k\,\text{Id}_{\gH^k}=\sum_{i\ge0}a^*(f_i)a(f_i)$$
for any orthonormal basis $(f_i)_{i\ge0}$ in $\gH$. Then $\cP_k=\ind_{\{\cN=k\}}$ for all $k\in\N$. A state on $\car[\gH]$ is a non-negative linear functional $\omega:\car[\gH]\to\C$ which is normalized: $\omega(\text{Id}_\cF)=1$. A state $\omega$ is called \emph{normal} if there exists a non-negative operator $G$ on $\F$ (sometimes called the density matrix of $\omega$) such that $\tr_\cF(G)=1$ and $\omega(A)=\tr_\cF(GA)$ for all $A\in\car[\gH]$. Of particular interest are the \emph{pure} states which are normal states with $G=\proj{\psi}$ for $\psi\in\cF$ with $\|\psi\|_\cF=1$. We define the average particle number of $\omega$ as $$\omega(\cN):=\sum_{i\ge0}\omega(a^*(f_i)a(f_i))\in[0,+\ii].$$ The \emph{one-particle density matrix} (1-pdm) $\gamma$ of $\omega$ is the operator defined by
$$\langle g,\gamma f\rangle:=\omega(a^*(f)a(g)),$$
for all $f,g\in\gH$. It is a self-adjoint operator on $\gH$, satisfying $0\le\gamma\le 1$. In the same fashion, we define its \emph{pairing matrix} $\alpha:\gK\to\gH$ which is a linear operator on $\gK$ by
$$\langle \alpha J f,g\rangle:=\omega(a^*(f)a^*(g)),$$
for all $f,g\in\gH$. It satisfies $(\alpha J)^*=-\alpha J$. Moreover, if we define the operator $\Gamma(\gamma,\alpha)$ on $\gH\oplus\gK$ by block
\begin{equation}\label{Gamma}
 \Gamma(\gamma,\alpha):=\left(
\begin{array}{cc}
 \gamma & \alpha \\
 \alpha^* & 1-J\gamma J^*
\end{array}
\right),
\end{equation}
then $0\le\Gamma(\gamma,\alpha)\le 1$, see \cite[Lemma 2.1]{BacLieSol-94}. This last relation implies that 
\begin{equation}\label{ineq:gammalpha}
\gamma^2+\alpha\alpha^*\le\gamma,
\end{equation}
 in the sense of quadratic forms on $\gH$. Notice also that $\omega(\cN)=\tr(\gamma)$. A state $\omega$ is called \emph{quasi-free} if for any operators $e_1,\ldots,e_{2p}$ which are either a $a^*(f)$ or a $a(g)$ for any $f,g\in\gH$, then $\omega(e_1 e_2\ldots e_{2p-1})=0$ for any $p\ge 1$ and 
\begin{equation}\label{wick}
 \omega(e_1 e_2\ldots e_{2p})=\sum_{\pi\in\widetilde{\S_{2p}}} (-1)^{\epsilon(\pi)}\omega(e_{\pi(1)}e_{\pi(2)})\ldots\omega(e_{\pi(2p-1)}e_{\pi(2p)}),
\end{equation}
where $\widetilde{\S_{2p}}$ is the set of permutations of $\{1,\ldots,2p\}$ which verify $\pi(1)<\pi(3)<\cdots<\pi(2p-1)$ and $\pi(2j-1)<\pi(2j)$ for all $1\leqslant j \leqslant p$, and $\epsilon(\pi)$ is the parity of the permutation $\pi$. The relation (\ref{wick}) is called the Wick formula. From this definition, we see that a quasi-free state is completely determined by its density matrices $(\gamma,\alpha)$. We  recall \cite[Theorem 2.3]{BacLieSol-94}
\begin{proposition}\label{BacLieSol}
 For any $(\gamma,\alpha)$ such that $0\le\Gamma(\gamma,\alpha)\le1$ with additionally $\tr(\gamma)<+\ii$, there exists a unique quasi-free state $\omega$ on $\car[\gH]$ with finite number of particle such that $\gamma$ is its 1-particle density matrix and $\alpha$ its pairing matrix. Furthermore, $\omega$ is normal: there exists $G:\cF\to\cF$ with $0\le G\le 1$ and $\tr_\cF(G)=1$ such that $\omega(A)=\tr_\cF(GA)$ for all $A\in\car[\gH]$.
\end{proposition}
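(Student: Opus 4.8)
The plan is to dispatch uniqueness first, since it is purely algebraic. Fix a quasi-free state $\omega$ with density matrices $(\gamma,\alpha)$, and take any monomial $e_1\cdots e_{2p}$ with each $e_\ell$ an $a^*(f)$ or an $a(g)$. Using the canonical anticommutation relations, every two-point function $\omega(e_ie_j)$ can be rewritten in terms of $\langle g,\gamma f\rangle=\omega(a^*(f)a(g))$, of $\langle\alpha Jf,g\rangle=\omega(a^*(f)a^*(g))$, and of their adjoints, hence is determined by $(\gamma,\alpha)$. By \eqref{wick} the same then holds for $\omega(e_1\cdots e_{2p})$, while odd monomials vanish. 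Therefore two quasi-free states with the same $(\gamma,\alpha)$ coincide on the $*$-algebra of polynomials in the $a^\#(f)$, which is norm-dense in $\car[\gH]$, and by continuity of states they coincide everywhere.

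For existence my plan is to reduce to the gauge-invariant case ($\alpha=0$) by a Bogoliubov rotation. I would introduce the charge conjugation $\cJ$ on $\gH\oplus\gK$, $\cJ(f\oplus g):=J^*g\oplus Jf$; the block form \eqref{Gamma} is precisely the statement that $\Gamma(\gamma,\alpha)$ carries the charge-conjugation symmetry $\cJ\,\Gamma(\gamma,\alpha)\,\cJ=1-\Gamma(\gamma,\alpha)$, the algebraic shadow of the CAR. Since $0\le\Gamma\le1$, this symmetry lets one diagonalize $\Gamma$ by a $\cJ$-compatible (i.e. Bogoliubov) unitary $\cV$ on $\gH\oplus\gK$, bringing it into block-diagonal form with a diagonalizable $0\le\gamma_0\le1$ on $\gH$ and $1-J\gamma_0J^*$ on $\gK$, and vanishing off-diagonal part. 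The crucial point is that $\cV$ is unitarily implementable on $\cF$: its particle/antiparticle-mixing block is governed by $\alpha$, and \eqref{ineq:gammalpha} gives $\alpha\alpha^*\le\gamma-\gamma^2\le\gamma$, so that $\tr(\alpha\alpha^*)\le\tr\gamma<\infty$, i.e. $\alpha\in\gS_2(\gH)$. By the Shale--Stinespring criterion there is then a unitary $\mathcal{U}_{\cV}$ on $\cF$ implementing $\cV$.

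It remains to realize the gauge-invariant quasi-free state with one-particle density matrix $\gamma_0$. I would diagonalize $\gamma_0=\sum_i\lambda_i|u_i\rangle\langle u_i|$ in an orthonormal basis $(u_i)$ of $\gH$, with $\lambda_i\in[0,1]$ and $\sum_i\lambda_i=\tr\gamma_0<\infty$. On each mode $i$ set $G_i:=(1-\lambda_i)\,|0_i\rangle\langle 0_i|+\lambda_i\,|1_i\rangle\langle 1_i|$ in the occupation-number basis and define $G_0:=\bigotimes_i G_i$ on $\cF$. Because $\sum_i\lambda_i<\infty$ — so only finitely many $\lambda_i$ equal $1$ and $\prod_i(1-\lambda_i)>0$ over the remaining modes — the operator $G_0$ is non-negative and trace class with $\tr_\cF G_0=\prod_i\big((1-\lambda_i)+\lambda_i\big)=1$, one checks it is quasi-free, and $\tr_\cF(G_0\,a^*(u_j)a(u_i))=\delta_{ij}\lambda_i$ with vanishing pairing, so its density matrices are $(\gamma_0,0)$. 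Transporting it by the implementer, I would set $G:=\mathcal{U}_{\cV}\,G_0\,\mathcal{U}_{\cV}^*$, with $\cV$ chosen so that the normal state $\omega(A):=\tr_\cF(GA)$ has one-particle density matrix $\gamma$ and pairing matrix $\alpha$; its quasi-freeness is preserved under the Bogoliubov conjugation. This simultaneously yields existence, normality, and $\omega(\cN)=\tr\gamma<\infty$.

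The main obstacle I expect is the second paragraph: the simultaneous, $\cJ$-compatible block-diagonalization of $\Gamma(\gamma,\alpha)$ and, above all, the verification that the diagonalizing Bogoliubov transformation lifts to a unitary on Fock space. This is exactly where the hypotheses $\tr\gamma<\infty$ and \eqref{ineq:gammalpha} are used, through the Hilbert--Schmidt bound $\alpha\in\gS_2(\gH)$ feeding the Shale--Stinespring criterion. A secondary but genuine technical nuisance is the careful bookkeeping of the extremal eigenvalues $\lambda_i\in\{0,1\}$ of $\gamma_0$ (the empty and the fully occupied modes), the occupied ones being finite in number precisely because $\gamma_0$ is trace class.
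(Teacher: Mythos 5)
The paper does not actually prove this proposition: it is recalled from \cite[Theorem 2.3]{BacLieSol-94}, so there is no internal proof to compare against. Your proposal is, in outline, a faithful reconstruction of the proof in that reference: uniqueness via the Wick formula \eqref{wick} together with the CAR and the norm-density of polynomials in the $a^\#(f)$; existence via a Bogoliubov normal form $\cV\Gamma(\gamma,\alpha)\cV^*=\Gamma(D,0)$ (the same normal form the paper later invokes in \eqref{eq:GammaU}), a product Gibbs-type density matrix for the gauge-invariant part, and transport by the Fock-space implementer. The uniqueness paragraph is complete. The construction of $G_0$ is correct, modulo the asserted-but-unproved quasi-freeness of $\bigotimes_iG_i$, which requires exactly the twisted tensor-product bookkeeping carried out in Appendix \ref{prodstates} (Proposition \ref{propprodstates}); that is a known but nontrivial verification.

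The one step whose justification is wrong as written is the implementability of $\cV$. You claim the mixing block $V$ of $\cV$ is ``governed by $\alpha$'' and deduce $V\in\gS_2(\gH)$ from $\alpha\in\gS_2(\gH)$. That inference fails: take $\alpha=0$ and $\gamma$ an infinite-rank orthogonal projection. Then $0\le\Gamma(\gamma,0)\le1$ and $\alpha$ is trivially Hilbert--Schmidt, yet bringing $\Gamma$ to the normal form with $D\le\tfrac12$ forces a particle--hole flip on all of $\ran\gamma$, so $V$ is \emph{not} Hilbert--Schmidt (and indeed no normal quasi-free state on $\cF$ has such a $\gamma$ as 1-pdm). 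What actually saves the argument is the hypothesis $\tr(\gamma)<+\ii$: reading off the upper-left block of $\Gamma(\gamma,\alpha)=\cV^*\Gamma(D,0)\cV$ gives $\gamma=U^*DU+V^*(1-JDJ^*)V\ge\tfrac12V^*V$ since $0\le D\le\tfrac12$, whence $\tr(V^*V)\le2\tr(\gamma)<+\ii$ and the Shale--Stinespring criterion applies. (This is the same identity the paper uses, in the reverse direction, in the proof of Proposition \ref{prop:HFBmixed}.) With the Hilbert--Schmidt bound attributed to the trace-class condition on $\gamma$ rather than to $\alpha\in\gS_2$, your argument closes.
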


Now, we need some terminology, which is not universal in the literature. We call \emph{Hartree-Fock} (HF) states the quasi-free states with $\tr(\gamma)<+\ii$ and $\alpha=0$, because when such states are pure, they are usual Slater determinants. Quasi-free states with $\tr(\gamma)<+\ii$ and $\alpha\neq0$, are called \emph{Hartree-Fock-Bogoliubov} (HFB) states. Pure HFB states are particularly simple since they are Bogoliubov rotations of the vacuum $\Omega$.  The aim of this section is to study the distribution of quasi-free states in the particle subspaces $\gH^k$, in terms of $\tr(\gamma)$. Our results are different for HF or HFB, pure or mixed states. 

\subsection{Motivation} Quasi-free states are also called \emph{Gaussian} states, in particular because they can be written as (limits of) Gibbs states of quadratic Hamiltonians (i.e. normal states with density matrices $e^{-\beta\mathbb{H}}/\tr(e^{-\beta\mathbb{H}})$, where $\mathbb{H}$ is a quadratic Hamiltonian).  The Gaussian character of quasi-free states is however deeper. In this section we will show that the distribution of a quasi-free state $\omega$ over the different $\gH^k$, that is $(\omega(\cP_k))_{k\ge0}$, also has some Gaussian characteristics. More precisely, we will provide estimates of the form 
$$\omega(\cP_k)\le c_k e^{-c'\omega(\cN)}.$$ 
This estimate means that a quasi-free state which has a large average number of particles $\omega(\cN)\gg1$ necessarily has an exponentially small vacuum expectation 
$$\omega(\proj{\Omega})\le c_0 e^{-c'\omega(\cN)}.$$

Let us explain the picture in a commutative setting. Let $f(x)=\pi^{-1/2}e^{-|x-a|^2}$ for $a\in\R_+$ be a Gaussian function such that $\int_\R f=1$. Then $a=\int_\R xf(x)\d{x}$ is the average position of $f$, as desbribed in Figure \ref{fig:gaussian}. Now if $a$ goes to $+\ii$, the whole function moves to infinity and in particular $f(0)$ becomes smaller and smaller: $f(0)=\pi^{-1/2}e^{-a^2}$. In other words, as the average position of $f$ goes to infinity, $f(0)$ goes to zero (and this is true for any $f(x_0)$ with $x_0$ fixed). We will prove a similar fact for quasi-free states. Indeed, for any quasi-free state $\omega$ we have $1=\omega(\text{Id}_\cF)=\sum_{k\ge0}\omega(\cP_k)$, which is the analog of $\int_\R f(x)\d{x}=1$. We also know that $\omega(\cN)=\sum_{k\ge0}k\omega(\cP_k)$ is the average number of particle of $\omega$; it is the analog of $\int_\R xf(x)\d{x}$. We want to prove that when $\omega(\cN)$ is large, then the main part of $\omega$ lives in the high-$k$ particle spaces, that is $\omega$ ``follows'' its average number of particles, as shown in Figure \ref{fig:gaussian}. The analog of $f(0)$ in this case is $\omega(\proj{\Omega})$ and we thus want to prove that $\omega(\proj{\Omega})$ goes to zero as $\omega(\cN)$ goes to $+\ii$, for any quasi-free state $\omega$. A natural extension of this result would be that $\omega(\cP_{k_0})$ also goes to zero for any fixed $k_0$. 

\begin{figure}[h]
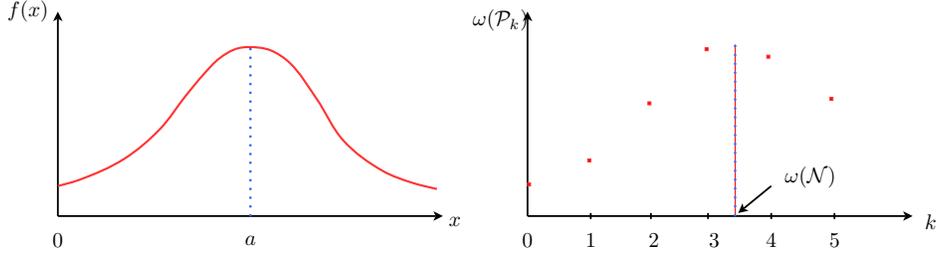

\begin{center}
\hspace{-0.075\linewidth} 
\begin{minipage}{0.45\linewidth}
\centering \scalebox{0.8}{ \input{Gaussienne.pst} }
\end{minipage}
\hspace{0.025\linewidth}
\begin{minipage}{0.45\linewidth}
 \centering \scalebox{0.8}{ \input{quasifree.pst} }
\end{minipage}
\end{center}
\caption{Analogy between a Gaussian function and a quasi-free state}
\label{fig:gaussian}
\end{figure}

We will provide explicit estimates depending on the properties of the quasi-free state (pure, mixed, HF or HFB). In the most general case of mixed HFB states, we only derive a bound on the vacuum expectation. The following table tells us where each case is treated. 

\begin{center}
\begin{tabular}{|c|c|c|}
 \hline
  & Pure  & Mixed \\
\hline
HF ($\alpha=0$) & \multicolumn{2}{c|}{Section \ref{sec:HF}} \\
\hline
HFB ($\alpha\neq0$) & Section \ref{sec:HFBpure} & Section \ref{sec:HFBmixed}\\
\hline
\end{tabular}
\end{center}

In spite of their usefulness, we have not found the following estimates in the literature. One main reason is probably that $e^{-\beta\cN}$ does \emph{not} belong to the CAR algebra, hence $\omega(e^{-\beta\cN})$ only makes sense for normal states.

\begin{remark}
 A useful tool for the proofs of the following results is the notion of \emph{product state}. A product state $\otimes_i\omega_i$ is a state on $\cF(\oplus_i\gH_i)\simeq\otimes_i\cF(\gH_i)$ when each $\omega_i$ is a state on $\cF(\gH_i)$. While this notion is intuitive, we recall how it is precisely defined in Appendix \ref{prodstates}.
\end{remark}

\subsection{Hartree-Fock case}\label{sec:HF}

\begin{proposition}[HF case]\label{prop:HF}
 Let $\omega$ be a quasi-free state with $\tr(\gamma)<+\ii$ and $\alpha=0$. Then for any $\beta\ge0$ we have 
 \begin{equation}\label{eq:HFequality}
  \boxed{
 \omega\left(e^{-\beta\cN}\right)=\Det_\gH\left(1+(e^{-\beta}-1)\gamma\right).
 }
 \end{equation}
 We also have the following estimate 
 \begin{equation}\label{eq:HFestimate}
  \boxed{
 \omega(\cP_k)\le\frac{(e\tr(\gamma))^k}{k!}e^{-\tr(\gamma)},
 }
 \end{equation}
 for all $k\ge k_0$ while $\omega(\cP_k)=0$ if $k<k_0$, where $k_0:=\dim\ker(\gamma-1)$.
\end{proposition}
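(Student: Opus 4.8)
The plan is to reduce everything to a computation over the spectral modes of $\gamma$. Since $\gamma$ is a non-negative trace-class operator with $0\le\gamma\le1$, I would first diagonalize it: pick an orthonormal basis $(f_i)_{i\ge0}$ of $\gH$ with $\gamma f_i=n_i f_i$, $n_i\in[0,1]$ and $\sum_i n_i=\tr(\gamma)<+\ii$. Because $\omega$ is a gauge-invariant quasi-free state ($\alpha=0$) whose only nonvanishing two-point function is $\omega(a^*(f_i)a(f_j))=n_i\delta_{ij}$, the Wick formula \eqref{wick} shows that correlations across distinct modes factorize; concretely $\omega$ is the (infinite) product state $\bigotimes_i\omega_i$ on $\cF(\gH)\simeq\bigotimes_i\cF(\C f_i)$, where $\omega_i$ is the unique quasi-free state on the two-dimensional Fock space $\cF(\C f_i)$ with $\omega_i(a^*(f_i)a(f_i))=n_i$. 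The rigorous handling of this infinite tensor product is exactly what Appendix \ref{prodstates} provides. Under this identification the number operator splits as $\cN=\sum_i\cN_i$ with $\cN_i:=a^*(f_i)a(f_i)$, and the $\cN_i$ become independent Bernoulli variables of parameter $n_i$; in other words the law of $\cN$ under $\omega$ is the Poisson--binomial distribution with parameters $(n_i)$. Note also that $\omega$ is normal by Proposition \ref{BacLieSol}, so that $\omega(e^{-\beta\cN})=\tr_\cF(G\,e^{-\beta\cN})$ is well defined even though $e^{-\beta\cN}\notin\car[\gH]$.

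For the identity \eqref{eq:HFequality} I would compute mode by mode. Since $\cN_i$ takes the value $0$ with weight $1-n_i$ and $1$ with weight $n_i$ under $\omega_i$, one has $\omega_i(e^{-\beta\cN_i})=1+(e^{-\beta}-1)n_i$. The commuting factors $e^{-\beta\cN_i}$ multiply, and the product rule for product states gives $\omega(e^{-\beta\cN})=\prod_i\big(1+(e^{-\beta}-1)n_i\big)$. The product converges absolutely because $(e^{-\beta}-1)\gamma$ is trace-class, and its value is by definition the Fredholm determinant $\Det_\gH(1+(e^{-\beta}-1)\gamma)$, giving \eqref{eq:HFequality}. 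A finite-mode truncation followed by a limit, justified by dominated convergence since $0\le e^{-\beta\cN}\le1$ for $\beta\ge0$, makes the infinite-product step rigorous.

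The estimate \eqref{eq:HFestimate} then follows from a Chernoff-type argument. Writing $t=e^{-\beta}$, the same product computation extends to all $t>0$ and yields the probability generating function $\sum_{j\ge0}\omega(\cP_j)\,t^j=\prod_i(1+(t-1)n_i)$. First, the modes with $n_i=1$ contribute a factor $1+(t-1)=t$, so the right-hand side is divisible by $t^{k_0}$ with $k_0=\dim\ker(\gamma-1)$; since the remaining product has nonzero constant term $\prod_{n_i<1}(1-n_i)>0$ (only finitely many $n_i$ exceed any fixed threshold), matching powers of $t$ forces $\omega(\cP_k)=0$ for $k<k_0$. Equivalently, the $k_0$ modes with $n_i=1$ are occupied almost surely, so $\cN\ge k_0$. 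Second, since every term in the generating function is non-negative, for each $t>0$ we have $\omega(\cP_k)\,t^k\le\prod_i(1+(t-1)n_i)\le\exp\big((t-1)\tr(\gamma)\big)$, using $1+x\le e^x$. Hence $\omega(\cP_k)\le t^{-k}\exp((t-1)\tr(\gamma))$, and optimizing at $t=k/\tr(\gamma)$ gives $\omega(\cP_k)\le (e\,\tr(\gamma))^k k^{-k}e^{-\tr(\gamma)}$, which implies \eqref{eq:HFestimate} because $k!\le k^k$.

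The main obstacle is the first paragraph: turning the algebraic quasi-free property into genuine statistical independence of the mode occupation numbers, i.e. identifying $\omega$ with the infinite product state over the eigenmodes of $\gamma$ and controlling the convergence of the resulting infinite products. Once this structural fact is in place, via the product-state formalism of Appendix \ref{prodstates} and the finiteness of $\tr(\gamma)$, both \eqref{eq:HFequality} and \eqref{eq:HFestimate} reduce to elementary mode-wise computations and a standard Chernoff bound.
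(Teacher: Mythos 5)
Your proof is correct. The first half (the determinant identity \eqref{eq:HFequality}) follows exactly the paper's route: diagonalize $\gamma$, invoke the product-state decomposition of Appendix \ref{prodstates}, and multiply the mode-wise factors $1+(e^{-\beta}-1)\lambda_i$. For the tail estimate \eqref{eq:HFestimate} you genuinely diverge from the paper. The paper identifies the coefficient of $e^{-\beta k}$ to get the exact Poisson--binomial formula $\omega(\cP_k)=\sum_{\#I=k}\prod_{i\in I}\lambda_i\prod_{j\notin I}(1-\lambda_j)$ and then bounds this combinatorially, using $1-\lambda_j\le e^{-\lambda_j}$ and converting the sum over $k$-subsets into $1/k!$ times a sum over $k$-tuples. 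You instead run a Chernoff/Markov argument on the generating function $\sum_j\omega(\cP_j)t^j=\prod_i(1+(t-1)\lambda_i)\le e^{(t-1)\tr(\gamma)}$ and optimize at $t=k/\tr(\gamma)$. This is cleaner and in fact yields the marginally sharper bound $(e\tr(\gamma))^k k^{-k}e^{-\tr(\gamma)}$, which implies the stated one since $k!\le k^k$. The only point to make explicit is the extension of the generating-function identity to $t>1$: there $t^{\cN}$ is unbounded and not in $\car[\gH]$, so you cannot directly apply the product-state property to it; the clean fix is to first extract the exact formula for $\omega(\cP_k)$ by matching coefficients on $t\in(0,1]$ (exactly the paper's step) and then resum for arbitrary $t>0$ by Tonelli, all terms being non-negative --- or to pass to the limit over finite-mode truncations by monotone rather than dominated convergence. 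Your treatment of the vanishing $\omega(\cP_k)=0$ for $k<k_0$ via divisibility of the generating function by $t^{k_0}$ is equivalent to the paper's observation on the exact formula.
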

\begin{remark}\label{rk:1}
 This estimate implies that for any fixed $k\in\N$, $\omega(\cP_k)\to0$ as $\tr(\gamma)\to+\ii$, which is the expected behaviour.
\end{remark}
\begin{remark}\label{rk:2}
 A more theoretical corollary of \eqref{eq:HFequality} is that for any fixed $f:\N\to\C$ vanishing at infinity, $\omega(f(\cN))$ goes to zero as $\tr(\gamma)$ goes to $+\ii$. This uses the fact that the algebra of these $f$s is generated by the $(e^{-\beta\cdot})_{\beta\ge0}$. In the same fashion, one can also prove that $\omega(K)\to0$ as $\tr(\gamma)\to+\ii$, for any fixed compact operator $K$.
\end{remark}
\begin{proof}
 Since $\gamma$ is trace-class it can be diagonalized in an orthonormal basis $(f_i)_{i\in\N}$, $\gamma=\sum_{i\ge0}\lambda_i\proj{f_i}$. For all $i$, let $\omega_i$ be the unique quasi-free state on $\F(\C f_i)$ having $\lambda_i\text{Id}_{\C f_i}$ as 1-pdm and $0_{\C f_i}$ as its pairing matrix. Then by Proposition \ref{propprodstates} in Appendix \ref{prodstates}, one has $\omega=\otimes_i\omega_i$. Moreover, since
 $$Te^{-\beta\cN}T^*=\bigotimes_{i\in\N}\left( 1+(e^{-\beta}-1)a^*(f_i)a(f_i)\right),$$
 where $T$ is the isometry between $\cF(\oplus_i\gH_i)$ and $\otimes_i\cF(\gH_i)$ defined in Appendix \ref{prodstates}, we have 
 \begin{eqnarray*}
  \omega(e^{-\beta\cN}) & = & \prod_{i\in\N}\omega_i\left( 1+(e^{-\beta}-1)a^*(f_i)a(f_i)\right)=\prod_{i\in\N}\left(1+(e^{-\beta}-1)\lambda_i\right)\\
 & = & \Det_\gH\left(1+(e^{-\beta}-1)\gamma\right).
 \end{eqnarray*}
 To prove \eqref{eq:HFestimate}, we notice that for all $\beta\ge0$, $\omega(e^{-\beta\cN})=\sum_{k\ge0}e^{-\beta k}\omega(\cP_k)$, and we identify the coefficients of $e^{-\beta k}$ in $\prod_{i\in\N}\left(1+(e^{-\beta}-1)\lambda_i\right)$. This yields
 \begin{eqnarray*}
  \omega(\cP_k) & = & \sum_{\substack{I\subset\N \\ \#I=k}}\prod_{i\in I}\lambda_i\prod_{j\notin I}(1-\lambda_j)\\
 & \le & \sum_{\substack{I\subset\N \\ \#I=k}}\left(\prod_{i\in I}\lambda_i\right)e^{-\sum_{j\notin I}\lambda_j} \\
 & \le & e^{-\tr(\gamma)}\sum_{\substack{I\subset\N \\ \#I=k}}\prod_{i\in I}\lambda_i e^{\lambda_i} \\
 & \le & \frac{e^k}{k!}e^{-\tr(\gamma)}\sum_{i_1,\ldots,i_k\in\N}\lambda_{i_1}\cdots\lambda_{i_k}=\frac{(e\tr(\gamma))^k}{k!}e^{-\tr(\gamma)},
 \end{eqnarray*}
 where we used that $0\le\lambda_i\le1$ for all $i$. Notice from the first equality that $\omega(\cP_k)=0$ if $k<\dim\ker(\gamma-1)$.
\end{proof}


\subsection{Pure Hartree-Fock-Bogoliubov case}\label{sec:HFBpure} 

\begin{proposition}[Pure HFB case]\label{prop:HFBpure}
  Let $\omega$ a quasi-free \emph{pure} state with $\tr(\gamma)<+\ii$. Then for any $\beta\ge 0$ we have 
 \begin{equation}\label{eq:HFBpureequality}
  \boxed{
  \omega(e^{-\beta\cN})=\Det_\gH\sqrt{1+(e^{-2\beta}-1)\gamma}.
  }
 \end{equation}
 We also have the following estimate for all $k=k_0+2\ell$ with $k_0:=\dim\ker(\gamma-1)$ and $\ell\ge0$
 \begin{equation}
  \boxed{
 \omega(\cP_k)\le\frac{e^{k/2}}{\ell!}\left(\frac{\tr(\gamma)}{2}\right)^\ell e^{-\frac{\tr(\gamma)}{2}},
 }
 \end{equation}
 while $\omega(\cP_k)=0$ if $k<k_0$ or $k=k_0+2\ell+1$.
\end{proposition}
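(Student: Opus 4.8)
The plan is to follow the proof of Proposition \ref{prop:HF} for the Hartree--Fock case almost verbatim, replacing the diagonalisation of $\gamma$ by the \emph{canonical normal form} of a pure Hartree--Fock--Bogoliubov state. Purity means that $\Gamma(\gamma,\alpha)$ is an orthogonal projection, so that the inequality \eqref{ineq:gammalpha} becomes the identity $\gamma^2+\alpha\alpha^*=\gamma$, i.e.\ $\alpha\alpha^*=\gamma(1-\gamma)$. The first step is to invoke the well-known structure of pure quasi-free states (the single-species analogue of \eqref{formula-vacuum}): one can diagonalise $\gamma$ and bring the antisymmetric pairing $\alpha J$ to $2\times2$-block form \emph{simultaneously}, obtaining an orthogonal decomposition
\[
\gH=V_0\oplus V_1\oplus\bigoplus_{i\ge1}W_i,
\]
where $V_0=\ker(\gamma-1)$ has dimension $k_0$, $V_1=\ker\gamma$, and each $W_i$ is a two-dimensional block spanned by an orthonormal pair $e_i,f_i$ on which $\gamma=\mu_i\,\text{Id}_{W_i}$, with $\gamma$ and $\alpha$ block-diagonal for this decomposition. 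On $V_0$ the state fills every mode, on $V_1$ it is the vacuum, and on each $W_i$ it is the elementary pair state associated with $\tfrac{1}{\sqrt{1+\lambda_i^2}}(1+\lambda_ia^*(e_i)a^*(f_i))\Omega$, where $\mu_i=\lambda_i^2/(1+\lambda_i^2)\in[0,1)$. In particular $\tr(\gamma)=k_0+2\sum_i\mu_i$.

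Because $\gamma$ and $\alpha$ respect this orthogonal decomposition, Proposition \ref{propprodstates} in Appendix \ref{prodstates} shows that $\omega$ is the product of the corresponding factor states, exactly as in the Hartree--Fock case. It then suffices to compute each factor of $\omega(e^{-\beta\cN})$ separately. A filled mode contributes $e^{-\beta}=\sqrt{1+(e^{-2\beta}-1)}$, an empty mode contributes $1$, and a direct two-level computation gives
\[
\omega_i(e^{-\beta\cN})=\frac{1+\lambda_i^2e^{-2\beta}}{1+\lambda_i^2}=1+(e^{-2\beta}-1)\mu_i=\sqrt{{\det}_{W_i}\bigl(1+(e^{-2\beta}-1)\gamma\bigr)},
\]
the last equality holding because $\gamma=\mu_i\,\text{Id}_{W_i}$ has multiplicity two on $W_i$. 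Multiplying these factors and using that the Fredholm determinant factorises over the blocks yields \eqref{eq:HFBpureequality}.

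For the distribution $\omega(\cP_k)$, I would read the particle number off directly from the product structure: each filled mode carries exactly one particle and each pair carries either $0$ or $2$, so that the total number is $k_0+2\ell$, where $\ell$ is the number of occupied pairs. This gives at once $\omega(\cP_k)=0$ unless $k\ge k_0$ and $k-k_0$ is even, and for $k=k_0+2\ell$
\[
\omega(\cP_{k_0+2\ell})=\sum_{\substack{I\subset\N\\ \#I=\ell}}\prod_{i\in I}\mu_i\prod_{j\notin I}(1-\mu_j).
\]
This expression is formally identical to the one obtained for $\omega(\cP_k)$ in the proof of Proposition \ref{prop:HF}, with the $\mu_i$ in place of the Hartree--Fock occupation numbers and $\ell$ in place of $k$. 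Running the same chain of inequalities (using $0\le\mu_i<1$, $\prod_j(1-\mu_j)\le e^{-\sum_j\mu_j}$, and bounding the symmetric sum by $\tfrac{1}{\ell!}(\sum_i\mu_i)^\ell e^{\sum_i\mu_i}$) produces $\tfrac{e^\ell}{\ell!}\bigl(\tfrac{\tr(\gamma)-k_0}{2}\bigr)^\ell e^{-(\tr(\gamma)-k_0)/2}$. Since $\tr(\gamma)-k_0\le\tr(\gamma)$, the exponentials recombine as $e^{-(\tr(\gamma)-k_0)/2}=e^{k_0/2}e^{-\tr(\gamma)/2}$ and $e^\ell e^{k_0/2}=e^{k/2}$, which is exactly the claimed bound.

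I expect the only genuinely delicate point to be the first step, namely justifying the simultaneous normal form: that a pure HFB state with $\tr(\gamma)<\ii$ factorises into filled single modes, empty modes, and disjoint creation pairs \emph{with no residual pairing between distinct blocks}. This is the content of the canonical reduction of the antisymmetric operator $\alpha J$, made possible by the purity identity $\gamma^2+\alpha\alpha^*=\gamma$; once it is established, the remaining computations reduce to the one- and two-mode cases and to the combinatorics already carried out for Proposition \ref{prop:HF}.
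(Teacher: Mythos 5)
Your proposal is correct and follows essentially the same route as the paper: decompose $\gH$ into $\le 2$-dimensional blocks simultaneously reducing $\gamma$ and $\alpha J$, write $\omega$ as a product state via Proposition \ref{propprodstates}, compute each factor, and run the same coefficient identification and inequality chain as in Proposition \ref{prop:HF}. The one step you flag as delicate is exactly where the paper works: it obtains the simultaneous $2\times2$ normal form not from $\gamma^2+\alpha\alpha^*=\gamma$ alone but from the full purity condition $\Gamma(\gamma,\alpha)^2=\Gamma(\gamma,\alpha)$, which also yields the commutation relation $[\gamma,\alpha J]=0$ needed to reduce the two operators on common blocks (your phrasing omits this, but the normal form you invoke is the correct and standard one).
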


\begin{proof}
 It is well-known \cite[Theorem 2.6]{BacLieSol-94} that $\omega$ is pure if and only if $\Gamma(\gamma,\alpha)^2=\Gamma(\gamma,\alpha)$, which is equivalent to $\gamma^2+\alpha\alpha^*=\gamma$ and $[\gamma,\alpha J]=0$. The operator $\gamma$ is trace-class and $\alpha J$ is anti-hermitian and Hilbert-Schmidt, hence both $\gamma$ and $\alpha J$ are diagonalizable. Since they commute, they are simultaneously diagonalizable. Remember that any anti-hermitian can be diagonalized in $1\times1$ blocks corresponding to its kernel and $2\times2$ blocks. Hence there exists a decomposition $\gH=\oplus_{i\ge 0}\gH_i$, with $\dim(\gH_i)\le 2$ such that 
\begin{itemize}
 \item For all $i$, $\gamma$ and $\alpha J$ stabilize $\gH_i$;
 \item If $\dim(\gH_i)=1$ then $\gamma_{|\gH_i}=\lambda_i\text{Id}_{\gH_i},\alpha J_{|\gH_i}=0$;
 \item If $\dim(\gH_i)=2$ then $\gamma_{|\gH_i}=\left(
 \begin{array}{cc}
  \lambda_i & 0\\
  0 & \lambda_i
 \end{array}\right), 
 \alpha J_{|\gH_i}=\left(
 \begin{array}{cc}
  0 & \alpha_i\\
 -\alpha_i & 0
 \end{array}\right)$ with ${\alpha_i\in\R}$ and $\alpha_i^2=\lambda_i-\lambda_i^2$. 
\end{itemize} 
In particular, $\omega=\otimes_{i\ge 0}\omega_i$ where $\omega_i$ is the quasi-free state on $\F(\gH_i)$ with 1-pdm $\gamma_{|\gH_i}=:\gamma_i$ and pairing matrix $\alpha J_{|\gH_i}$. Let us now prove that for all $i$
 \begin{equation}\label{eq:HFB-pure-ineq_i}
  \omega_i( e^{-\beta\cN_i})=\Det_{\gH_i}\sqrt{1+(e^{-2\beta}-1)\gamma_i},
 \end{equation}
 where $\cN_i$ is the number operator on $\cF(\gH_i)$. First we consider the case $\dim(\gH_i)=1$, and let $f_i\in\gH_i$ be a normalized vector. Then $\F(\gH_i)=\C\oplus\C f_i$ and $\cN_i=a^*(f_i)a(f_i)$ so that $e^{-\beta\cN_i}=1+(e^{-\beta}-1)a^*(f_i)a(f_i)$. Therefore
 $$\omega_i(e^{-\beta\cN_i})=1+(e^{-\beta}-1)\lambda_i.$$
 Since $\gamma^2+\alpha\alpha^*=\gamma$, we have $\alpha_i=0$ if $\dim(\gH_i)=1$ hence $\lambda_i=0$ or $\lambda_i=1$. In both cases we have
 $$\omega_i( e^{-\beta\cN_i})=\Det_{\gH_i}\sqrt{1+(e^{-2\beta}-1)\gamma_i}.$$
 Now suppose $\dim(\gH_i)=2$ and let $(f_i,g_i)$ be an orthonormal basis of $\gH_i$ such that in this basis $\gamma_{|\gH_i}$ and $\alpha J_{|\gH_i}$ have the form given above. Then $\F(\gH_i)=\C\oplus\C f_i\oplus\C g_i\oplus\C f_i\wedge g_i$ and $\cN_i=a^*(f_i)a(f_i)+a^*(g_i)a(g_i)$, so that 
 $$e^{-\beta\cN_i}=(1+(e^{-\beta}-1)a^*(f_i)a(f_i))(1+(e^{-\beta}-1)a^*(g_i)a(g_i)).$$
 We deduce that
 \begin{eqnarray*}
  \omega_i(e^{-\beta\cN_i}) & = & 1+(e^{-\beta}-1)\omega_i(\cN_i)+(e^{-\beta}-1)^2\omega_i(a^*(f_i)a(f_i)a^*(g_i)a(g_i)) \\
 & = & 1+(e^{-\beta}-1)\omega_i(\cN_i)+ (e^{-\beta}-1)^2[\omega_i(a^*(f_i)a(f_i))\omega_i(a^*(g_i)a(g_i))\\
 & & -\omega_i(a^*(f_i)a^*(g_i))\omega_i(a(f_i)a(g_i))+ \omega_i(a^*(f_i)a(g_i))\omega_i(a(f_i)a^*(g_i))] \\
 & = & 1+2(e^{-\beta}-1)\lambda_i+(e^{-\beta}-1)^2(\lambda_i^2+\alpha_i^2) \\
 & = & 1+2(e^{-\beta}-1)\lambda_i+(e^{-\beta}-1)^2\lambda_i \\
 & = & 1+(e^{-2\beta}-1)\lambda_i=\Det_{\gH_i}\sqrt{1+(e^{-2\beta}-1)\gamma_i},
 \end{eqnarray*}
where in the second equality we used Wick's relation for $\omega_i$. The equality \eqref{eq:HFBpureequality} then follows by taking the product of the relations \eqref{eq:HFB-pure-ineq_i}.  Putting aside the indices $i$ such that $\lambda_i=1$, we obtain
$$\omega(e^{-\beta\cN})=e^{-\beta k_0}\prod_{i\in\N}(1+(e^{-2\beta}-1)\lambda_i),$$
where $k_0:=\dim\ker(\gamma-1)$.  Identifying the coefficient of $e^{-\beta k}$ in both sides, as in the proof of Proposition \ref{prop:HF}, we find that $\omega(\cP_k)=0$ for all $k<k_0$, and that
\begin{eqnarray*}
 \omega(\cP_k) & = & \sum_{\substack{I\subset\N \\ \#I=\ell}}\prod_{i\in I}\lambda_i\prod_{j\notin I}(1-\lambda_j)\\
 & \le & \frac{e^\ell}{\ell !}\left(\frac{\tr(\gamma)-k_0}{2}\right)^\ell e^{-\frac{\tr(\gamma)-k_0}{2}} \le \frac{e^{k/2}}{\ell!}\left(\frac{\tr(\gamma)}{2}\right)^\ell e^{-\frac{\tr(\gamma)}{2}},
\end{eqnarray*}
for $k=k_0+2\ell$ with $\ell\ge0$. This concludes the proof of Proposition \ref{prop:HFBpure}.
\end{proof}

\subsection{Mixed Hartree-Fock-Bogoliubov case}\label{sec:HFBmixed}

In the most general case of a mixed HFB state, we cannot apply the same strategy as in the previous cases, i.e. identify $\omega$ as a product of states living on smaller  dimensional spaces. Indeed, $\gamma$ and $\alpha J$ can have no common stable finite-dimensional subspaces. However, we can still prove an estimate on the vacuum expectation.

\begin{proposition}[Vacuum Expectation of a Mixed HFB State]\label{prop:HFBmixed}
 Let $\omega$ a quasi-free state with $\tr(\gamma)<+\ii$. Then we have the following estimate
  \begin{equation}
  \boxed{
   \omega(\proj{\Omega})\le e^{-a\tr(\gamma)}
  }
 \end{equation}
where  $a=\max_{\beta\ge0}\frac{3(e^{\beta}-1)}{4e^{3\beta}+7e^{\beta}-8}=0.0941248\ldots$.
\end{proposition}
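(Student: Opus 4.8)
The plan is to reduce the statement to a bound on the generating function $\omega(e^{-\beta\cN})$ and then optimize over $\beta$. The starting point is the elementary operator inequality $\proj{\Omega}=\cP_0\le e^{-\beta\cN}$, valid for every $\beta\ge0$ because $e^{-\beta\cN}=\sum_{k\ge0}e^{-\beta k}\cP_k\ge\cP_0$. Applying the positive normal functional $\omega$ gives $\omega(\proj{\Omega})\le\omega(e^{-\beta\cN})$ for all $\beta\ge0$, so it suffices to prove $\omega(e^{-\beta\cN})\le e^{-f(\beta)\tr(\gamma)}$ with $f(\beta)=\tfrac{3(e^{\beta}-1)}{4e^{3\beta}+7e^{\beta}-8}$ and then pass to the supremum in $\beta$. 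This is exactly what produces the $\max_{\beta\ge0}$ in the definition of $a$.

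The second step is to evaluate $\omega(e^{-\beta\cN})$. Unlike the Hartree-Fock and pure HFB cases, we cannot write $\omega$ as a product of states on finite-dimensional mode blocks, since $\gamma$ and $\alpha J$ need not share invariant subspaces. The way around this is that $\omega$ is normal with a Gaussian density matrix (Proposition \ref{BacLieSol}) and $e^{-\beta\cN}$ is itself a second-quantized (Gaussian) operator, so $\omega(e^{-\beta\cN})$ is a Gaussian fermionic expectation and can still be expressed as a determinant over the one-particle space, written in terms of $\gamma$ and $\alpha$ alone. Concretely, after diagonalizing $\gamma=\sum_i\lambda_i\proj{f_i}$ one writes $e^{-\beta\cN}=\prod_i\bigl(1-(1-e^{-\beta})\,a^*(f_i)a(f_i)\bigr)$ and applies Wick's formula \eqref{wick}: the ``normal'' contractions rebuild the HF determinant $\Det_\gH\bigl(1-(1-e^{-\beta})\gamma\bigr)$, while the anomalous contractions introduce the contribution of $\alpha$. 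Since the determinant identity holds whether or not the state factorizes, this is precisely how the non-factorization obstruction is bypassed.

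The third and central step is to bound the resulting determinant by $e^{-f(\beta)\tr(\gamma)}$. Taking logarithms and using the fundamental constraint $\gamma^2+\alpha\alpha^*\le\gamma$ coming from $0\le\Gamma(\gamma,\alpha)\le1$ (inequality \eqref{ineq:gammalpha}), the problem collapses to a scalar estimate over the spectrum of $\Gamma(\gamma,\alpha)$: one shows that a spectral mode of strength $\nu$ contributes at least $f(\beta)$ times its weight in $\tr(\gamma)$ to $-\log\omega(e^{-\beta\cN})$. The rational function $f(\beta)$ is the worst-case ratio in this one-variable problem, with the extremal $\nu$ an interior critical point fixed by a cubic relation — which is why a cubic in $e^{\beta}$ surfaces in the denominator. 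Finally one optimizes $f$ over $\beta\ge0$: its critical-point equation is $8e^{3\beta}-12e^{2\beta}+1=0$, giving $e^{\beta}\approx1.44$ and $a=0.0941248\ldots$, and the final calculus step is routine.

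I expect the third step to be the main obstacle. Because the mixed state does not factorize, one cannot argue mode by mode as in Propositions \ref{prop:HF} and \ref{prop:HFBpure}, and the extremal configuration is genuinely many-mode rather than two-mode: a low-order expansion in $\alpha$ (or any two-mode computation) only sees a constant of size $\tfrac12$, so the true constant $0.094\ldots$ is invisible at that level. The difficulty is therefore to put the Gaussian determinant into a form in which the quadratic-form inequality $\gamma^2+\alpha\alpha^*\le\gamma$ can be exploited spectrally, and to check that the resulting scalar optimum coincides exactly with $f(\beta)$; once this is done, the reduction via $\proj{\Omega}\le e^{-\beta\cN}$ and the optimization over $\beta$ are straightforward.
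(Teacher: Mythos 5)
There is a genuine gap, and it sits exactly where you predicted: your third step. Your opening reduction $\proj{\Omega}\le e^{-\beta\cN}$, hence $\omega(\proj{\Omega})\le\omega(e^{-\beta\cN})$, is correct but commits you to proving $\omega(e^{-\beta\cN})\le e^{-f(\beta)\tr(\gamma)}$ for a general \emph{mixed} HFB state — and this is precisely the statement the paper singles out as inaccessible: the remark following Proposition \ref{prop:HFBmixed} says explicitly that the authors cannot even show $\omega_n(e^{-\beta\cN})\to0$ as $\omega_n(\cN)\to\ii$ for mixed quasi-free states, let alone with an explicit exponential rate. You give no actual derivation of the determinant/Pfaffian formula for $\omega(e^{-\beta\cN})$ in the presence of a nonzero pairing $\alpha$ that does not commute with $\gamma$, no proof that the inequality $\gamma^2+\alpha\alpha^*\le\gamma$ suffices to control it, and no verification that the "worst-case spectral mode" produces $f(\beta)$. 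Your explanation of where the cubic $4e^{3\beta}+7e^{\beta}-8$ comes from ("an interior critical point fixed by a cubic relation") is reverse-engineered and does not match its true origin: in the paper it arises from multiplying through by $e^{\beta}$ in the interpolation weight $\theta=3/(7-8e^{-\beta}+4e^{2\beta})$, which has nothing to do with a spectral extremal problem. The only parts of your argument that are actually carried out — the monotone reduction to $e^{-\beta\cN}$ and the final one-variable optimization $8e^{3\beta}-12e^{2\beta}+1=0$ — are the routine bookends.

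The paper's route is structurally different and avoids your obstruction. It Bogoliubov-diagonalizes $\Gamma(\gamma,\alpha)$ into $\Gamma(D,0)$ with $0\le\lambda_i\le1/2$, writes $\omega(\proj{\Omega})=\omega'(\proj{\Omega_\U})$ with $\omega'$ the HF state of 1-pdm $D$ and $\Omega_\U=\U\Omega$ the rotated vacuum, and then produces \emph{two} complementary bounds: first, using the explicit two-mode product structure of the \emph{pure} vector $\Omega_\U$, the operator inequality $\proj{\Omega_\U}\le e^{\frac{e^{2\beta}-1}{2}\tr(V^*V)}e^{-\beta\cN}$ combined with the HF identity \eqref{eq:HFequality} for $\omega'$; second, swapping the roles of state and observable, the bound $Z^{-1}\Upsilon(M)\le e^{-(\ln2)\cN}$ (using $\lambda_i\le1/2$) combined with the pure-HFB identity \eqref{eq:HFBpureequality} for $\omega_\U$, which yields $e^{-\frac38\tr(V^*V)}$. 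Interpolating the two and using $\tr(D)+\tr(V^*V)\ge\tr(\gamma)$ gives the constant. The essential point is that the paper never needs a generating-function identity for the mixed state itself; it only ever evaluates $e^{-\beta\cN}$ against states that \emph{do} factorize (the HF state $\omega'$ and the pure state $\omega_\U$), and the rank-one structure of $\proj{\Omega}$ is what makes the role-swap possible. Your plan discards that structure at the very first step.
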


\begin{remark}
 We believe that for any sequence of quasi-free state $(\omega_n)_n$ with finite number of particles, we have $\omega_n(e^{-\beta\cN})\to0$ as $\omega_n(\cN)\to+\ii$ for all $\beta>0$, as it was the case in the previous sections. However, our method does not provide this result. 
\end{remark}

\begin{proof}
  Let $\omega$ be a quasi-free state with $\tr(\gamma)<+\ii$. According to \cite[Eq. (2b.26)]{BacLieSol-94}, there exists a Bogoliubov map $\cV:\gH\oplus\gK\to\gH\oplus\gK$, i.e. a unitary operator of the form 
  \begin{equation}\label{bogo}
  \cV=\left(
  \begin{array}{cc}
  U & J^* V J^*\\
  V &  JUJ^*
  \end{array}\right).
  \end{equation}
  with $\tr(V^*V)<+\ii$, such that 
 \begin{equation}\label{eq:GammaU}
  \cV\Gamma(\gamma,\alpha)\cV^*=\Gamma(D,0)
 \end{equation}
 with $D=\text{diag}(\lambda_i)_{i\ge 0}$, $\sum_i\lambda_i<+\ii$, and  $0\le\lambda_i\le1/2$ for all $i$. Let $\omega'$ be the unique HF state associated with $\Gamma(D,0)$ given by Proposition \ref{BacLieSol}. Let us also denote by $\U$ the unitary operator on $\cF$ lifting $\cV$ \cite[Theorem 9.5]{Solovej-07}. Then
  $$\omega(\proj{\Omega})=\omega'(\U|\Omega\rangle\langle\Omega|\U^*)=\omega'(|\Omega_\U\rangle\langle\Omega_\U|),$$
  with $|\Omega_\U\rangle=\U|\Omega\rangle$. We now estimate $|\Omega_\U\rangle\langle\Omega_\U|$ by $e^{-\beta\cN}$ for any $\beta\ge0$. Thus, let us fix $\beta\ge0$. At the end, we will optimize over $\beta$. By \cite[Eq. (67)]{Solovej-07} we can write
  $$|\Omega_\U\rangle=\prod_{i=-K}^{-1}a^*(\eta_i)\prod_{i\ge0}(\alpha_i-\beta_i a^*(\eta_{2i})a^*(\eta_{2i+1}))|\Omega\rangle,$$
  where $(\eta_i)_{i\in\Z}$ is an orthonormal basis in $\gH$, $(\eta_i)_{-K\le i\le-1}$ are eigenvectors of $V^*V$ for the eigenvalue 1, $\alpha_i^2+\beta_i^2=1$ for all $i\ge0$, and $(\beta_i^2)_{i\ge0}$ are the eigenvalues of $V^*V$ strictly between 0 and 1, which are all of multiplicity 2.  We interpret this equality by saying that 
  $$|\Omega_\U\rangle=T^*(\otimes_i|\psi_i\rangle)\in T^*(\otimes_i\F(\gH_i))=\F,$$
  where $\gH_i=\C\eta_i$, $|\psi_i\rangle=\eta_i$ for $-K\le i\le -1 $, $\gH_i=\C\eta_{2i+1}\oplus\C\eta_{2i}$, $|\psi_i\rangle=\alpha_i-\beta_i\eta_{2i}\wedge\eta_{2i+1}$ for $i\ge0$, and $\gH_i=\C\eta_i$, $|\psi_i\rangle=\Omega_i$ (the vacuum in $\F(\gH_i)$) if $i< -K$. Recall that the operator $T$ is the unitary transformation between $\cF(\oplus_i\gH_i)$ and $\otimes_i\cF(\gH_i)$ defined in Appendix \ref{prodstates} . If $-K\le i\le -1$, then $\F(\gH_i)=\C\oplus\C\eta_i$ and the matrix of $|\psi_i\rangle\langle\psi_i|$ in the basis $(1,\eta_i)$ can be dominated by
  $$|\psi_i\rangle\langle\psi_i|=\left(
  \begin{array}{cc}
   0 & 0\\
   0 & 1
  \end{array}\right)
  \le e^\beta\left(
  \begin{array}{cc}
   1 & 0\\
   0 & e^{-\beta}
  \end{array}\right)
  =e^{\beta-\beta\cN_i}\le e^{\frac{e^{2\beta}-1}{2}-\beta\cN_i},$$
  where as usual $\cN_i$ is the number operator on $\F(\gH_i)$. If $i<-K$ then $\F(\gH_i)=\C\oplus\C\eta_i$ and we have
  $$|\psi_i\rangle\langle\psi_i|=\left(
  \begin{array}{cc}
   1 & 0\\
   0 & 0
  \end{array}\right)
  \le \left(
  \begin{array}{cc}
   1 & 0\\
   0 & e^{-\beta}
  \end{array}\right)
  =e^{-\beta\cN_i}.$$
  Now if $i\ge0$ then $\F(\gH_i)=\C\oplus\C\eta_{2i}\oplus\C\eta_{2i+1}\oplus\C\eta_{2i}\wedge\eta_{2i+1}$. In the basis $(1,\eta_{2i},\eta_{2i+1},\eta_{2i}\wedge\eta_{2i+1})$ the matrix of $|\psi_i\rangle\langle\psi_i|$ can be dominated by
 \begin{eqnarray*}
  |\psi_i\rangle\langle\psi_i|  =  \left(
  \begin{array}{cccc}
   \alpha_i^2 & 0 & 0 & -\alpha_i\beta_i\\
   0 & 0 & 0 & 0\\
   0 & 0 & 0 & 0\\
   -\alpha_i\beta_i & 0 & 0 & \beta_i^2
  \end{array}\right) & \le & e^{(e^{2\beta}-1)\beta_i^2}  \left(
  \begin{array}{cccc}
   1 & 0 & 0 & 0\\
   0 & e^{-\beta} & 0 & 0\\
   0 & 0 & e^{-\beta} & 0\\
   0 & 0 & 0 & e^{-2\beta}
  \end{array}\right)\\ 
  & = & e^{(e^{2\beta}-1)\beta_i^2-\beta\cN_i},
 \end{eqnarray*}
 since with
 $$A=\left( \begin{array}{cc}
             \alpha_i^2 & -\alpha_i\beta_i\\
	    -\alpha_i\beta_i & \beta_i^2
            \end{array}
\right), \qquad
B=\left(\begin{array}{cc}
         1 & 0\\
	 0 & e^{-2\beta}
        \end{array}
 \right)$$
 we have $B^{-1/2}AB^{-1/2}\le (1+(e^{2\beta}-1)\beta_i^2)\text{Id}_{\C^2}\le e^{(e^{2\beta}-1)\beta_i^2}\text{Id}_{\C^2}$. We know that
  $$\tr(V^*V)=\sum_{i=-K}^{-1}1+\sum_{i\ge0}2\beta_i^2,$$
  so that
  \begin{eqnarray*}
  |\Omega_\U\rangle\langle\Omega_\U| & = & T^*\left[\bigotimes_{i\in\Z}|\psi_i\rangle\langle\psi_i|\right]T\\
   & \le & T^*\left[\bigotimes_{i=-K}^{-1}e^{\frac{e^{2\beta}-1}{2}-\beta\cN_i}\otimes\bigotimes_{i\ge0}e^{\frac{e^{2\beta}-1}{2}2\beta_i^2-\beta\cN_i}\otimes\bigotimes_{i<-K}e^{-\beta\cN_i}\right]T\\
   & = & \exp\left(\frac{e^{2\beta}-1}{2}\left(\sum_{i=-K}^{-1}1+\sum_{i\ge0}2\beta_i^2\right)-\beta\sum_i\cN_i\right)\\
   & = & e^{\frac{e^{2\beta}-1}{2}\tr(V^*V)}e^{-\beta\cN}.
  \end{eqnarray*}
  Here we have used that for any operators $A,B,C,D,$ such that $0\le A\le B$ and $0\le C\le D$ it holds $0\le A\otimes C\le B\otimes D$. We obtain the estimate
  \begin{equation}\label{firstestimate}
   \omega(\proj{\Omega})\le e^{\frac{e^{2\beta}-1}{2}\tr(V^*V)}\omega'(e^{-\beta\cN})\le e^{\frac{e^{2\beta}-1}{2}\tr(V^*V)+(e^{-\beta}-1)\tr(D)},
  \end{equation}
  where we have used that $\omega'(e^{-\beta\cN})\le e^{(e^{-\beta}-1)\omega'(\cN)}$, which is a consequence of the equality \eqref{eq:HFequality} applied to the HF state $\omega'$. Unfortunately, the estimate (\ref{firstestimate}) is not good enough because the constant $(e^{2\beta}-1)/2$ in front of $\tr(V^*V)$ is positive, while $\tr(V^*V)$ represents the number of particles of $|\Omega_\U\rangle$. We will thus get another estimate by exchanging the roles of $\omega'$ and $|\Omega_\U\rangle$. The idea is to see $|\Omega_\U\rangle$ as the state and $\omega'$ as the observable. Since $\ker(D-1)=\{0\}$, it is well-known that $\omega'$ is a normal state with density matrix $G=Z^{-1}\Upsilon(M)$ where $M=\frac{D}{1-D}$, $Z=\tr_\F(\Upsilon(M))$, and \begin{equation}\label{Upsilon}
    \Upsilon(M):=\oplus_{N\ge0}M^{\otimes N}:\cF\to\cF                                                                                                                                                                                                                                                                                                                                                                                                                                                                                                                                                                                                                                                                                                                                                                                                                                                                                                                                                                                                                                                                                                                                                                                                                                                                                                                                                                                                                                                                                                                                                        \end{equation}
 (see for instance \cite[Proposition 6.6 (1)]{JakOgaPauPil-11}). Hence we can write
  $$\omega(\proj{\Omega})=\omega'(|\Omega_\U\rangle\langle\Omega_\U|)=\tr(Z^{-1}\Upsilon(M)|\Omega_\U\rangle\langle\Omega_\U|)=\omega_\U(Z^{-1}\Upsilon(M)),$$
  where $\omega_\U$ is the pure state associated with the vector $|\Omega_\U\rangle$. We know that $\omega_\U(\cN)=\tr(V^*V)$. Hence if we can dominate $Z^{-1}\Upsilon(M)$ by $e^{-\beta'\cN}$ for a certain $\beta'\ge0$, we will get another estimate on $\omega(\proj{\Omega})$ by applying equality \eqref{eq:HFBpureequality} to the pure state $\omega_\U$. Let $(\mu_i)_i$ the eigenvalues of $M$, $\mu_i=\frac{\lambda_i}{1-\lambda_i}$. The spectrum of $\Upsilon(M)$ on $\gH^k$ is
  $$\sigma(\Upsilon(M)_{|\gH^k})=\left\{\prod_{i\in I}\mu_i,\quad I\subset\N,\quad\# I=k\right\},$$
  so that for $I\subset\N$ of cardinal $k$ we have
  $$Z^{-1}\prod_{i\in I}\mu_i=\prod_{i\ge 0}\frac{1}{1+\mu_i}\prod_{i\in I}\mu_i=\prod_{i\ge 0}(1-\lambda_i)\prod_{i\in I}\frac{\lambda_i}{1-\lambda_i}=\prod_{i\notin I}(1-\lambda_i)\prod_{i\in I}\lambda_i.$$
  Using $0\le\lambda_i\le 1/2$ for all $i$, we finally get $Z^{-1}\Upsilon(M)_{|\gH^k}\le 1/2^k$ so that 
  $$Z^{-1}\Upsilon(M)\le e^{-(\ln 2)\cN}.$$
  Equality \eqref{eq:HFBpureequality} now implies that $\omega_\U(e^{-\beta'\cN})\le e^{\frac{e^{-2\beta'}-1}{2}\omega_\U(\cN)}$ for all $\beta'\ge0$. Choosing $\beta'=\ln 2$, one gets
  \begin{equation}\label{secondestimate}
   \omega(\proj{\Omega})\le\omega_\U(e^{-(\ln 2)\cN})\le e^{-\frac{3}{8}\tr(V^*V)}.
  \end{equation}
  Interpolating the inequalities (\ref{firstestimate}) and (\ref{secondestimate}) we get
  $$\omega(\proj{\Omega})\le e^{\theta(e^{-\beta}-1)\tr(D)+\left[\theta\frac{e^{2\beta}-1}{2}-(1-\theta)\frac{3}{8}\right]\tr(V^*V)},$$
  for all $\beta\ge0$ and $0\le\theta\le1$. We choose $\theta$ such that the coefficients before $\tr(D)$ and $\tr(V^*V)$ are equal since we have $\tr(V^*V)+\tr(D)\ge\tr(U^*DU)+\tr(V^*(1-D)V)=\tr(\gamma)$ by \eqref{eq:GammaU}, using $UU^*\le 1$ and $1-D\le 1$. We thus choose
  $$\theta=\frac{3}{7-8e^{-\beta}+4e^{2\beta}},$$
 and we obtain
 $$\omega(\proj{\Omega})\le e^{\frac{3(e^{\beta}-1)}{8-7e^{\beta}-4e^{3\beta}}\tr(\gamma)},$$
 for all $\beta\ge0$. Optimizing the coefficient before $\tr(\gamma)$, we get the desired estimate with $\beta\simeq0.36443$ and $\theta\simeq0.308194$.
  \end{proof}

\subsection{Bogoliubov-Dirac-Fock case}\label{sec:BDFstates} In this section we introduce the correct setup for studying electron-positron pair creation. Let $\gH$ be a Hilbert space and $\Pi$ be an orthogonal projection on $\gH$. We also need an anti-unitary operator $J$ as in Section \ref{sec:notation}, with the additional assumptions that $\gK=\gH$ (i.e. $J$ maps $\gH$ to $\gH$) and that $J\Pi J^*=\Pi$ or $J\Pi J^*=1-\Pi$. The particle/anti-particle spaces are given by $\gH_+=(1-\Pi)\gH$ and $\gH_-=J \Pi\gH$. Notice that $\gH_-=\Pi\gH$ or $\gH_-=(1-\Pi)\gH$. In the context given by Section \ref{sec:BDF-notations}, we have $\gH=\gH_\Lambda$, $\Pi=P^0_-$ and $J=i\beta\alpha_2\CJ$ the charge conjugation operator on $\gH_\Lambda$ (i.e. such that $J(D^0+V)J^*=-(D^0-V)$ for any scalar potential $V$), with $\CJ$ the complex conjugation on $\gH_\Lambda$. With this choice, vectors of $\gH_-$ are interpreted as states with a positive energy relatively to the Hamiltonian with an opposite charge. Hence, they represent positronic states. Notice that this specific $J$ verifies $JP^0_-J^*=P^0_+=1-P^0_-$. In the sequel, we will keep a triplet $(\gH,\Pi,J)$ satisfying the assumptions given above. 

The mathematical description of Bogoliubov-Dirac-Fock states is a special case of the well known Araki-Wyss representation \cite{Ara-64} (see \cite[Section 6.4]{JakOgaPauPil-11} for a review). Let $\cF_0=\cF(\gH_+)\otimes\cF(\gH_-)$. For $f\in\gH_+$ and $g\in\gH_-$ we denote by $a_+^*(f)$ and $a_-^*(g)$ the usual creation operators on $\F(\gH_+)$ and $\F(\gH_-)$, respectively. We now define the ``creation operator'' on $\F_0$ for all $f\in\gH$ by
$$\psi^*(f):=a_+^*((1-\Pi)f)\otimes \text{Id}_{\F(\gH_-)}+\Upsilon\left(-\text{Id}_{\F(\gH_+)}\right)\otimes a_-(J \Pi f),$$
where $\Upsilon$ is the operation defined by Equation \eqref{Upsilon}. The operators $(\psi^*(f))_f$ are not exactly the usual creation operators in the full Fock space since they create a particle in the state $(1-\Pi)f$, and at the same time they annihilate a anti-particle in the state $J \Pi f$, according to the ``particle-hole'' picture of Dirac's theory. However, they still satisfy the CAR thanks to the ``twist'' $\Upsilon\left(-\text{Id}_{\F(\gH_+)}\right)$ on the second term. A \emph{Bogoliubov-Dirac-Fock} (BDF) state is, by definition, a quasi-free state $\omega$ on the $C^*$-algebra $\gA_0\subset\B(\F_0)$ generated by the $(\psi(f))_{f\in\gH}$. We define the normal ordering $:\hspace{-0.1cm}\psi^*(f)\psi(g)\hspace{-0.1cm}:$ of the operator $\psi^*(f)\psi(g)$ by
\begin{multline*}
:\hspace{-0.1cm}\psi^*(f)\psi(g)\hspace{-0.1cm}:\: =a^*_+(f_+)a_+(g_+)\otimes\text{Id}_{\F(\gH_-)}+a^*_+(f_+)\Upsilon\left(-\text{Id}_{\F(\gH_+)}\right)\otimes a^*_-(J g_-)\\
 +\Upsilon\left(-\text{Id}_{\F(\gH_+)}\right)a_+(g_+)\otimes a_-(J f_-)-\text{Id}_{\F(\gH_+)}\otimes a^*_-(J g_-)a_-(J f_-),
\end{multline*}
where $h_+=(1-\Pi)h$ and $h_-=\Pi h$ for all $h$. It corresponds to moving all the creation operators $a^*$ to the left of annihilation operators $a$. For any BDF state $\omega$, we define its renormalized one-particle density matrix $Q:\gH\to\gH$ by
$$\langle g,Q f\rangle=\omega(:\hspace{-0.1cm}\psi^*(f)\psi(g)\hspace{-0.1cm}:).$$
and its pairing matrix $p:\gH\to\gH$ by the usual formula
$$\langle p J f, g\rangle = \omega(\psi^*(f)\psi^*(g)).$$
Recall that we have already defined the 1-pdm $\gamma$ by $\langle g,\gamma f\rangle=\omega(\psi^*(f)\psi(g))$. Therefore, we have the relation $\gamma=\Pi+Q$. If $\cN$ is the number operator on $\F_0$, 
\begin{equation}\label{eq:numberop}
\cN=\sum_i a^*_+(\phi_{i,+})a_+(\phi_{i,+})\otimes\text{Id}_{\F(\gH_-)}+\text{Id}_{\F(\gH_+)}\otimes a^*_-(\phi_{i,-})a_-(\phi_{i,-}), 
\end{equation}
where $(\phi_{i,+})_i$, $(\phi_{i,-})_i$ are orthonormal basis for, respectively, $\gH_+$ and $\gH_-$, then
$$\omega(\cN)=\tr(Q_{++}-Q_{--}),$$
where $Q_{++}:=(1-\Pi)Q(1-\Pi)$ and $Q_{--}=\Pi Q\Pi$. Hence, $\omega(\cN)<+\ii$ is equivalent to having $Q\in\gS_{1,\Pi}(\gH):=\{Q\in\gS_2(\gH),\,Q_{++},Q_{--}\in\gS_1(\gH)\}$. Notice also that as in the HFB case we have $pp^*\le\gamma-\gamma^2=Q_{++}-Q_{--}-Q^2$, thus $p\in\gS_2(\gH)$ as soon as $\omega(\cN)<+\ii$. 
\begin{remark}
 The other natural number operator $\sum_i \psi^*(f_i)\psi(f_i)$ gives the total number of particles in the system, that is also those of the vacuum $\Pi$. If $\dim(\gH_-)=+\ii$, this number is just $+\ii$. However, we only want to count the number of particles relative to the vacuum $\Pi$. That is why the operator $\cN$ is chosen here: It counts the number of ``real'' electrons $\phi_{i,+}$ and the number of ``holes'' $\phi_{i,-}$ in the vacuum. 
\end{remark}
The following proposition in an easy adaptation of the arguments given in \cite[pp. 449--450]{BacBarHelSie-99}. We give the complete proof here to clarify the link between BDF states and HFB states.  
\begin{proposition}[Link between HFB and BDF States]\label{prop:BDF-existence}
 Let $(Q,p)\in\gS_{1,\Pi}(\gH)\times\gS_2(\gH)$ such that 
 $$0\le\Gamma(\Pi+Q,p)=\left(
 \begin{array}{cc}
  \Pi+Q & p\\
  p^* & 1- J(\Pi+Q)J^*
 \end{array}
\right)\le 1$$
as an operator on $\gH\oplus\gH$. Then there exists a unique, normal, BDF state on $\F_0$ having $Q$ as renormalized 1-pdm and $p$ as pairing matrix.
\end{proposition}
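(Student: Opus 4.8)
The plan is to identify BDF states with ordinary quasi-free states on the full CAR algebra over $\gH_+\oplus\gH_-$ and then to invoke Proposition \ref{BacLieSol}. First I would observe that the operators $\psi^*(f)$, as $f$ ranges over $\gH$, generate the whole algebra $\gA_0=\car[\gH_+]\otimes\car[\gH_-]$: taking $f\in(1-\Pi)\gH$ produces every $a_+^*(h)$, while $f\in\Pi\gH$ produces every $\Upsilon(-\text{Id})\otimes a_-(h)$, hence also every $a_-^*(h)$ and $a_-(h)$. Thus $f\mapsto\psi^*(f)$ is a (particle--hole) Bogoliubov relabelling of the canonical generators, and a state is quasi-free with respect to the $\psi^*(f)$ if and only if it is quasi-free with respect to the $a_\pm^{(*)}$. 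A BDF state is therefore exactly an ordinary quasi-free state on $\car[\gH_+\oplus\gH_-]$, only read off in a twisted generating set.

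The core of the argument is the dictionary between the ``$\psi$-data'' $(\gamma,p)=(\Pi+Q,p)$ and the ``genuine'' data $(\tilde\gamma,\tilde\alpha)$ on $\gH_+\oplus\gH_-$. I would compute the two-point functions directly from the definition of $\psi^*(f)$, writing $f_+=(1-\Pi)f$ and $\hat f_-=J\Pi f\in\gH_-$ and using that $\Upsilon(-\text{Id})^2=\text{Id}$ and that $\Upsilon(-\text{Id})$ anticommutes with the $a_+^{(*)}$. Expanding $\psi^*(f)\psi(g)$ and applying the CAR $a_-(\hat f_-)a_-^*(\hat g_-)=\langle\hat f_-,\hat g_-\rangle-a_-^*(\hat g_-)a_-(\hat f_-)$, the scalar term is exactly $\langle g,\Pi f\rangle$ (since $J$ is antiunitary, $\langle\hat f_-,\hat g_-\rangle=\langle g,\Pi f\rangle$), which is the source of the splitting $\gamma=\Pi+Q$; the remaining expectations assemble into $(\tilde\gamma,\tilde\alpha)$, obtained from $(Q,p)$ by the particle--hole transformation, which turns antiparticle occupation into hole occupation through $J$ and mixes the $1$-pdm and pairing blocks. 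The very same normal-ordering cancellation removes the filled sea from $:\!\psi^*(f)\psi(g)\!:$, so that the renormalized $1$-pdm is precisely $Q$ and the pairing is $p$.

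Two points must be checked before applying Proposition \ref{BacLieSol}. First, the genuine state has finite particle number: $\tr(\tilde\gamma)=\tr(Q_{++})+\tr(-Q_{--})=\tr(Q_{++}-Q_{--})=\omega(\cN)$, which is finite precisely because $Q\in\gS_{1,\Pi}(\gH)$ (using $Q_{--}\le0$, a consequence of $0\le\Pi+Q\le1$). Second, one needs $0\le\Gamma(\tilde\gamma,\tilde\alpha)\le1$. I would deduce this from the hypothesis $0\le\Gamma(\Pi+Q,p)\le1$ by exhibiting the particle--hole map as an explicit unitary of the Bogoliubov form \eqref{bogo}, induced by $\Pi$ and $J$, that conjugates $\Gamma(\Pi+Q,p)$ into $\Gamma(\tilde\gamma,\tilde\alpha)$ on the doubled space; positivity and the bound $\le1$ are unitary invariants, so they transfer automatically. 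Granting these, Proposition \ref{BacLieSol} gives a unique normal quasi-free state $\tilde\omega$, and reading it back through the dictionary produces a normal BDF state $\omega$ with renormalized $1$-pdm $Q$ and pairing $p$. Uniqueness is immediate: any BDF state with these data is quasi-free, hence determined by $(\tilde\gamma,\tilde\alpha)$, which the dictionary recovers from $(Q,p)$; it therefore coincides with $\tilde\omega$.

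The step I expect to be the main obstacle is the explicit verification of this dictionary: keeping the bookkeeping of the antilinear $J$, the twist $\Upsilon(-\text{Id})$ defined in \eqref{Upsilon}, and the normal ordering mutually consistent, and---most delicately---writing the particle--hole transformation as a genuine unitary of the form \eqref{bogo} so that the constraint $0\le\Gamma\le1$ is conjugated correctly, including the off-diagonal pairing blocks and the action of $J$ upon them. Once this is in place, the finiteness of $\tr(\tilde\gamma)$ and the translation back to the $\psi$-picture are routine.
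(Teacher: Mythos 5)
Your proposal is correct and follows essentially the same route as the paper: the paper likewise identifies BDF states with quasi-free states on $\car[\gH_+\oplus\gH_-]$ via a unitary $T$ implementing $Ta(\phi)T^*=\psi(\phi)$, writes out the particle--hole dictionary explicitly as block matrices $(\gamma_0,\alpha_0)$ built from $Q_{\epsilon\epsilon'}$, $p_{\epsilon\epsilon'}$ and $J$, transfers $0\le\Gamma\le1$ by conjugation with an explicit unitary $W$, checks $\tr(\gamma_0)=\tr(Q_{++}-Q_{--})<+\ii$, and applies Proposition \ref{BacLieSol}. The only differences are cosmetic: the paper carries out the bookkeeping you flag as the main obstacle, and its conjugating unitary $W$ is a plain block permutation of $\gH_+^0\oplus\gH_-^0\oplus\gH_+^0\oplus\gH_-^0$ rather than a map of the form \eqref{bogo}.
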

\begin{proof}
 We are going to construct an HFB state on $\car[\gH_+\oplus\gH_-]$ using Proposition \ref{BacLieSol}, and then transform it into a BDF state having the desired property via the unitary transformation
$$T:\F(\gH_+\oplus\gH_-)\to\F_0=\F(\gH_+)\otimes\F(\gH_-),$$
defined linearly by its action on each $N$-particle space
\begin{multline*}
T\phi_{i_1,+}\wedge\cdots\wedge\phi_{i_k,+}\wedge\phi_{j_1,-}\wedge\cdots\wedge\phi_{j_{N-k},-}=\\
\left(\phi_{i_1,+}\wedge\cdots\wedge\phi_{i_k,+}\right)\otimes\left(\phi_{j_1,-}\wedge\cdots\wedge\phi_{j_{N-k},-}\right),
\end{multline*}
with the convention $T\Omega=\Omega_+\otimes\Omega_-$. Since $T$ leaves the scalar product invariant and maps an orthonormal basis for $\F(\gH_+\oplus\gH_-)$ onto an orthonormal basis for $\F_0$, it naturally extends to a unique unitary operator on $\cF(\gH_+\oplus\gH_-)$. It induces the following transformation of the CAR:
\begin{equation}\label{AW-CAR}
 T a(\phi_+\oplus\phi_-)T^*=\psi(\phi_+\oplus\phi_-).
\end{equation}
 Define $p_{++}=(1-\Pi)  p (1-\Pi)$, $p_{+-}=(1-\Pi)p\Pi$, etc. Suppose also that we are in the case where $J\Pi J^*=1-\Pi$, and write $\gH^0_+=(1-\Pi)\gH=\gH_+$, $\gH^0_-=\Pi\gH$. We now introduce
 $$\gamma_0:=\left( 
 \begin{array}{cc}
  Q_{++} & p_{++}\\
  p_{++}^* & -J Q_{--}J^*
 \end{array}
\right),\quad \alpha_0:=\left(
\begin{array}{cc}
 p_{+-} & Q_{+-}\\
 -J Q_{-+}J^* & p_{-+}^*
\end{array}
\right),
$$
where $\gamma_0:\gH_+^0\oplus\gH_+^0\to\gH_+^0\oplus\gH_+^0$ and $\alpha_0:\gH_-^0\oplus\gH_-^0\to\gH_+^0\oplus\gH_+^0$. Let us define $\cJ:\gH_+^0\oplus\gH_+^0\to\gH_-^0\oplus\gH_-^0$ by
$$ \cJ=\left(
\begin{array}{cc}
 J & 0\\
 0 & J^*
\end{array}
\right),
$$
and the operator $\Gamma_0$ on $(\gH_+^0\oplus\gH_+^0)\oplus(\gH_-^0\oplus\gH_-^0)$ by
$$\Gamma_0=\Gamma(\gamma_0,\alpha_0)=\left(
\begin{array}{cc}
 \gamma_0 & \alpha_0\\
 \alpha_0^* & 1-\cJ^*\gamma_0\cJ
\end{array}
\right).$$
We now show that there exists a HFB state $\omega_0$ on $\car[\gH_+^0\oplus\gH_+^0]$ having $\Gamma_0$ as density matrix. Hence we prove that $0\le\Gamma_0\le 1$. Let us first write the block decomposition of $\Gamma(\Pi+Q,p)$ as an operator on $\gH_+^0\oplus\gH_-^0\oplus\gH_+^0\oplus\gH_-^0=\gH\oplus\gH$
$$\Gamma(\Pi+Q,p)=\left(
\begin{array}{cccc}
 Q_{++} & Q_{+-} & p_{++} & p_{+-}\\
 Q_{-+} & 1+Q_{--} & p_{-+} & p_{--}\\
 p^*_{++} & p^*_{-+} & -JQ_{--}J^* & -JQ_{-+}J^*\\
 p^*{+-} & p^*_{--} & -JQ_{+-}J^* & 1-JQ_{++}J^*
\end{array}\right),
$$
where we have used that $J\gH^0_\pm=\gH^0_\mp$ to write the lower right block. Let us consider the unitary operator $W:\gH_+^0\oplus\gH_+^0\oplus\gH_-^0\oplus\gH_-^0\to\gH_+^0\oplus\gH_-^0\oplus\gH_+^0\oplus\gH_-^0$ whose matrix is
$$
W=\left(
\begin{array}{cccc}
 1 & 0  & 0 & 0\\
 0 & 0 & 1 & 0\\
 0 & 0 & 0 & 1\\
 0 & 1 & 0 & 0\\
\end{array}\right).
$$
Then, as it was noticed in \cite{BacBarHelSie-99}, we have the relation
$$\Gamma_0=W^*\Gamma(\Pi+Q,p) W.$$
By assumption, we have $0\le\Gamma(\Pi+Q,p)\le 1$, so that $0\le\Gamma_0\le 1$ as well. Now we have $\tr(\gamma_0)=\tr(Q_{++}-Q_{--})<\ii$ hence by Proposition \ref{BacLieSol} with $\gH=\gH^0_+\oplus\gH^0_+=\gH_+\oplus\gH_-$ and $\gK=\gH^0_-\oplus\gH^0_-$, there exists a unique, normal,  HFB state $\omega_0$ on $\F(\gH_+\oplus\gH_-)$ with finite number of particles having $\Gamma_0$ as density matrix. We define a state on $\F_0$ via the unitary operator $T$ by
$$\omega(A)=\omega_0(T^*AT),\quad\forall A\in\B(\F_0).$$ 
By (\ref{AW-CAR}), $\omega$ is the quasi-free state with renormalized 1-pdm $Q$ and pairing matrix $p$. Furthermore, $\omega$ is obviously normal since $\omega_0$ is normal.  This concludes the proof of Proposition \ref{prop:BDF-existence}, in the case where $J\Pi J^*=1-\Pi$. If $J\Pi J^*=\Pi$, the proof is the same with $W:\gH_+^0\oplus\gH_-^0\oplus\gH_+^0\oplus\gH_-^0\to\gH_+^0\oplus\gH_-^0\oplus\gH_+^0\oplus\gH_-^0$ whose matrix is 
$$
W=\left(
\begin{array}{cccc}
 1 & 0  & 0 & 0\\
 0 & 0 & 0 & 1\\
 0 & 0 & 1 & 0\\
 0 & 1 & 0 & 0\\
\end{array}\right),
$$
which is the same as in \cite{BacBarHelSie-99} where $J$ was the complex conjugation. With this choice of $W$, $W^*\Gamma(\Pi+Q,p) W$ is the density matrix of a HFB state with $\gH=\gH^0_+\oplus\gH^0_-=\gK$.
\end{proof}

\begin{corollary}[Vacuum Expectation of a BDF State]\label{coro:estimate-BDF}
 Let $\omega$ be a BDF state with renormalized one-particle density matrix $Q$ such that $\tr(Q_{++}-Q_{--})<+\ii$. Then we have the estimate
 \begin{equation}\label{eq:BDF-ineq}
  \boxed{
 \omega(\proj{\Omega_0})\le e^{-a\tr(Q_{++}-Q_{--})},
 }
 \end{equation}
 where $a$ is the same constant as in Proposition \ref{prop:HFBmixed} and $\Omega_0$ is the vacuum in $\cF_0$.
\end{corollary}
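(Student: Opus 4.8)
The plan is to reduce the statement directly to Proposition \ref{prop:HFBmixed} via the unitary $T$ built in the proof of Proposition \ref{prop:BDF-existence}. Recall that this proof produces a normal HFB state $\omega_0$ on $\F(\gH_+\oplus\gH_-)$, with density matrix $\Gamma(\gamma_0,\alpha_0)$, such that $\omega(A)=\omega_0(T^*AT)$ for all $A\in\B(\F_0)$. The key observation is that $T$ maps the vacuum of $\F(\gH_+\oplus\gH_-)$ onto the vacuum $\Omega_0=\Omega_+\otimes\Omega_-$ of $\F_0$, by the convention $T\Omega=\Omega_+\otimes\Omega_-$ fixed in that proof. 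Hence $T^*\Omega_0=\Omega$ and
\[
\omega(\proj{\Omega_0})=\omega_0(T^*\proj{\Omega_0}T)=\omega_0\big(|T^*\Omega_0\rangle\langle T^*\Omega_0|\big)=\omega_0(\proj{\Omega}).
\]

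Next I would identify $\tr(\gamma_0)$ with the renormalized particle number $\tr(Q_{++}-Q_{--})$. From the explicit block form
\[
\gamma_0=\left(\begin{array}{cc} Q_{++} & p_{++}\\ p_{++}^* & -JQ_{--}J^*\end{array}\right),
\]
one has $\tr(\gamma_0)=\tr(Q_{++})-\tr(JQ_{--}J^*)$. Since $J$ is anti-unitary and $Q_{--}=\Pi Q\Pi$ is self-adjoint, so that $\tr(Q_{--})\in\R$, the conjugation by $J$ leaves the trace unchanged: diagonalizing $Q_{--}=\sum_j\lambda_j\proj{\phi_j}$ and using $J\proj{\phi_j}J^*=\proj{J\phi_j}$ together with the fact that $(J\phi_j)_j$ is again an orthonormal basis gives $\tr(JQ_{--}J^*)=\sum_j\lambda_j=\tr(Q_{--})$. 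Therefore $\tr(\gamma_0)=\tr(Q_{++}-Q_{--})<+\ii$ by hypothesis.

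Finally, $\omega_0$ is a quasi-free state on $\car[\gH_+\oplus\gH_-]$ with $\tr(\gamma_0)<+\ii$, so Proposition \ref{prop:HFBmixed} applies with the same constant $a$ and yields
\[
\omega_0(\proj{\Omega})\le e^{-a\tr(\gamma_0)}=e^{-a\tr(Q_{++}-Q_{--})}.
\]
Combining this with the vacuum-to-vacuum identity of the first paragraph gives the claimed bound \eqref{eq:BDF-ineq}. There is essentially no genuine obstacle here, since the substantive content has already been packaged into Propositions \ref{prop:BDF-existence} and \ref{prop:HFBmixed}; the only points requiring care are the bookkeeping that $T$ preserves the vacuum and that the anti-unitary conjugation by $J$ does not alter the trace of $Q_{--}$.
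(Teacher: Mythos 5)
Your proof is correct and is exactly the paper's argument: the paper's proof of Corollary \ref{coro:estimate-BDF} consists of the single instruction to apply Proposition \ref{prop:HFBmixed} to the HFB state $\omega_0$ built in the proof of Proposition \ref{prop:BDF-existence}, and your two bookkeeping points (that $T$ intertwines the vacua, and that $\tr(\gamma_0)=\tr(Q_{++}-Q_{--})$, which the paper also records in that proof) are precisely the details being left implicit.
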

\begin{proof}
 Use Proposition \ref{prop:HFBmixed} to the HFB mixed state $\omega_0$ constructed in the proof of Proposition \ref{prop:BDF-existence}.
\end{proof}

\begin{remark}
 In \cite{HaiLewSol-07}, the rBDF energy \eqref{rBDF-energy} is derived by evaluating the QED Hamiltonian on BDF states. As usual in mean-field theories, this energy only depends on $(Q,p)$. However, since the interaction between the particules is repulsive, any minimizer of the rBDF energy has $p=0$. This explains why we can take $p=0$ for the polarized vacuum in the proof of Theorem \ref{th:main}.
\end{remark}

\section{Asymptotics of the Polarized Vacuum Energy in strong external fields}\label{sec:asymptotics}

\subsection{Main Results}

In this section, we study the asymptotics of the reduced BDF ground state energy $E(Z)$ for large $Z$, where we recall that $E(Z) = \inf \{ \Erbdf{Z\nu}{Q}, Q\in\cK \}$, with $\cK$ defined by Equation \eqref{eq:cK}.

\begin{proposition}[Upper Bound]\label{prop:limsup}
  Let $\alpha>0$ and $\Lambda>0$. Let $\nu\in\cC\cap L^1(\R^3,\R)$ be such that $\int|\nu|=1$ and $q=\int\nu\neq0$. Then 
\begin{equation}
\boxed{
 \limsup_{Z\to+\ii}\frac{E(Z)}{Z^{5/3}}\le-c_1\alpha\Lambda |q|^{5/3},
}
\end{equation}
 where $c_1=2^{-23/3}3^{1/3}\pi^{-4/3}=0.001543\ldots$. 
\end{proposition}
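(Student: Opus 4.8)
Because the statement is an upper bound, the plan is to exhibit one admissible configuration $Q\in\cK$ and bound $\Erbdf{Z\nu}{Q}$ from above; since $E(Z)\le\Erbdf{Z\nu}{Q}$, dividing by $Z^{5/3}$ and letting $Z\to+\ii$ then gives the $\limsup$ bound. Using the charge-conjugation symmetry $J$, which replaces the external potential by its opposite, I may assume $q>0$ and build a trial cloud of genuine \emph{electrons}. Concretely I would take $Q=\gamma$ with $0\le\gamma\le P^0_+$ and $\ran\gamma\subset\ran P^0_+$, so that $Q_{--}=Q_{+-}=Q_{-+}=0$, $Q_{++}=\gamma$, and $Q\in\cK$ automatically (its pairing part is zero, so this is a Hartree--Fock trial state).

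The operator $\gamma$ should fill, as densely as the cut-off permits, the positive-energy modes of momentum $\lesssim\Lambda$ inside a ball of radius $R$. Since $\supp\hat f\subset B(0,\Lambda)$ forbids sharp spatial localization, I would realize this by a semiclassical superposition of smeared coherent states adapted to $\gH_\Lambda$, $\gamma=\int P^0_+\proj{f_{x,p}}P^0_+\,d\mu(x,p)$, with the phase-space measure $\mu$ supported on (a regularization of) $B(0,R)\times B(0,\Lambda)$ and normalized so that $\gamma\le P^0_+$. The Pauli principle together with the cut-off caps the attainable density at $\rho_{\max}=\Lambda^3/(3\pi^2)$, so that $\rho_\gamma\approx\rho_{\max}\ind_{B(0,R)}$ and $N:=\tr\gamma\approx\tfrac{4}{9\pi}\Lambda^3R^3$; it is exactly this cap that makes $\Lambda$ enter the final constant linearly, through $\rho_{\max}^{1/3}\propto\Lambda$.

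Next I would estimate the three terms of \eqref{rBDF-energy}. The kinetic term is only needed crudely: $0\le\tro{D^0\gamma}=\tr(\sqrt{1-\Delta}\,\gamma)\le\sqrt{1+\Lambda^2}\,N$, which is $O(\Lambda Z)$ at the optimum, hence $o(Z^{5/3})$. The attraction is the main gain: writing $D(\rho_\gamma,\nu)=\rho_{\max}\int_{B(0,R)}(\nu\star|\cdot|^{-1})(x)\,dx$ and using that $(2\pi R^2)^{-1}\int_{B(0,R)}|x-y|^{-1}\,dx\to1$ boundedly for every $y$, dominated convergence (here only $\nu\in L^1$ is needed) gives $D(\rho_\gamma,\nu)=2\pi\rho_{\max}\,q\,R^2(1+o(1))$ as $R\to\ii$. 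Finally the self-energy is the Coulomb energy of a uniformly charged ball, $\tfrac{\alpha}{2}D(\rho_\gamma,\rho_\gamma)\approx\tfrac{3\alpha}{10}N^2/R$. Collecting, $\Erbdf{Z\nu}{Q}\le-2\pi\alpha Zq\,\rho_{\max}R^2+\tfrac{3\alpha}{10}N^2/R+o(Z^{5/3})$.

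It remains to optimize over $R$. As $N\propto\Lambda^3R^3$, the attraction scales like $R^2$ and the repulsion like $R^5$; balancing them gives $R\propto(Zq/\Lambda^3)^{1/3}$, hence $N\propto Zq$ and, after substitution, $\Erbdf{Z\nu}{Q}\le-c\,\alpha\Lambda\,q^{5/3}Z^{5/3}(1+o(1))$ for an explicit positive $c$, which yields the stated bound with $|q|^{5/3}$ once the sign reduction is undone. I expect the genuine difficulty to lie entirely in the second paragraph: reconciling the sharp momentum cut-off with spatial confinement forces a coherent-state construction, and every approximation --- the positive-energy projection $P^0_+$, the smearing of the $f_{x,p}$, the deviation of $\rho_\gamma$ from $\rho_{\max}\ind_{B(0,R)}$, and the asymptotic replacement of the nuclear potential by its point-charge value --- must be shown to perturb the energy only at order $o(Z^{5/3})$. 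Carrying this out rigorously, at the price of a non-sharp numerical constant, is what produces the explicit value $c_1$ in the statement.
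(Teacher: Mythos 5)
Your proposal follows essentially the same route as the paper: the paper's trial state, defined by the Fourier kernel $\hat{Q}(p,q)=\hat{\gamma}(p,q)X(p)X(q)^*$ with $\hat{\gamma}(p,q)=(2\pi)^{-3/2}\int_{B(0,\Lambda/2)}g_r(p-\ell)\hat{F_r}(p-q)g_r(q-\ell)\,\mathrm{d}\ell$ and $X(p)$ a positive-energy spinor, is precisely the rigorous realization of your phase-space cloud of positive-energy electrons filling $B(0,R)\times B(0,\Lambda/2)$ with $R\sim 1/r\sim (qZ/\Lambda^3)^{1/3}$, and the energy balance (kinetic $O(Z)$, attraction $\sim -Z\Lambda^3R^2$, self-repulsion $\sim \Lambda^6R^5$) is identical, as is the charge-conjugation reduction for $q<0$. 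The only ingredient you do not name explicitly is the error density $\rho_2$ produced by the $p$-dependence of the spinor $X(p)$, which the paper controls via $\|X(p+k)-X(p)\|\le \tfrac{7}{\sqrt{2}}|k|$ and shows to contribute only $O(Z^{4/3})$; this falls under the error terms you correctly identify as the technical core, and carrying them out with the paper's specific choices of indicator functions for $g$ and $F$ is what yields the particular value of $c_1$.
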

To estimate the convergence rate of $E(Z)Z^{-5/3}$, we need a further assumption on the decay of $\nu$ at infinity.
\begin{proposition}[Convergence Rate]\label{prop:speed}
  Let $\alpha>0$ and $\Lambda>0$. Let $\nu\in\cC\cap L^1(\R^3,\R)$ be such that $\int|x||\nu(x)|\d{x}<+\ii$, $\int|\nu|=1$ and $q=\int\nu\neq0$. Then there exists a constant $\tilde{Z_1}=\tilde{Z_1}(\Lambda,\alpha,|q|,\||x|\nu\|_1)>0$ such that 
 \begin{equation}\label{eq:speed}
  \forall Z> \tilde{Z_1},\qquad E(Z)\le-\frac{c_1}{2}\alpha\Lambda |q|^{5/3}Z^{5/3},
 \end{equation}
  where $c_1$ is defined in Proposition \ref{prop:limsup}.
\end{proposition}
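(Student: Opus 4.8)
The plan is to prove Proposition~\ref{prop:speed} by turning the asymptotic upper bound of Proposition~\ref{prop:limsup} into a quantitative one: I would re-run the trial-state computation that yields the limsup, but instead of passing to the limit $Z\to+\ii$ and discarding lower-order terms I would keep them and bound each explicitly in terms of $\Lambda$, $\alpha$, $|q|$ and $\||x|\nu\|_1:=\int_{\R^3}|x|\,|\nu(x)|\d{x}$. Let $\bar Q_Z\in\cK$ denote the trial state produced in the proof of Proposition~\ref{prop:limsup}; it is built so that its density $\rho_{\bar Q_Z}$ is a radial profile of total mass and particle number of order $Z$, spread over a ball of radius $R_Z=O((|q|Z)^{1/3}\Lambda^{-1})$, at a height saturating the density permitted by the momentum cut-off $\Lambda$. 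Its energy splits as
\begin{equation*}
\Erbdf{Z\nu}{\bar Q_Z}=\tro{D^0\bar Q_Z}-\alpha Z\,D(\rho_{\bar Q_Z},\nu)+\frac{\alpha}{2}D(\rho_{\bar Q_Z},\rho_{\bar Q_Z}),
\end{equation*}
where the combination of the last two (Coulomb) terms yields the leading contribution $-c_1\alpha\Lambda|q|^{5/3}Z^{5/3}$, while the kinetic term and the finite-size defects of the Coulomb terms are the corrections that must be dominated.

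The kinetic term is the simpler correction. Since $\bar Q_Z$ is obtained from $P^0_-$ by filling positive-energy states inside the cut-off, the identity $\tro{D^0\bar Q_Z}=\tr\big(|D^0|((\bar Q_Z)_{++}-(\bar Q_Z)_{--})\big)$ recalled in Section~\ref{sec:BDF-notations} together with $|D^0|\le\sqrt{1+\Lambda^2}$ on $\gH_\Lambda$ gives
\begin{equation*}
0\le\tro{D^0\bar Q_Z}\le\sqrt{1+\Lambda^2}\,\tr\big((\bar Q_Z)_{++}-(\bar Q_Z)_{--}\big)=O\big(\sqrt{1+\Lambda^2}\,Z\big),
\end{equation*}
which is $O(Z)$ and therefore negligible against $Z^{5/3}$. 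Requiring it to stay below, say, a quarter of the leading term forces $Z^{2/3}\gtrsim\sqrt{1+\Lambda^2}\,(\alpha\Lambda|q|^{2/3})^{-1}$, i.e.\ a threshold of order $\alpha^{-3/2}$ as $\alpha\to0$, which is precisely the behaviour announced in the Remark after Theorem~\ref{th:main}.

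The genuinely delicate point, and the main obstacle, is the attractive term: the leading computation replaces $Z\nu$ by the point charge $qZ\delta_0$, and I must control this approximation for an external density that is merely integrable with finite first moment. Writing
\begin{equation*}
D(\rho_{\bar Q_Z},\nu)-q\,D(\rho_{\bar Q_Z},\delta_0)=\int_{\R^3}\rho_{\bar Q_Z}(x)\left(\int_{\R^3}\frac{\nu(y)}{|x-y|}\d{y}-\frac{q}{|x|}\right)\d{x},
\end{equation*}
using the elementary inequality $\big||x-y|^{-1}-|x|^{-1}\big|\le|y|\,|x|^{-1}|x-y|^{-1}$ and Fubini, the modulus of this difference is at most
\begin{equation*}
\||x|\nu\|_1\cdot\sup_{y\in\R^3}\int_{\R^3}\frac{\rho_{\bar Q_Z}(x)}{|x|\,|x-y|}\d{x}=\||x|\nu\|_1\cdot O(\Lambda^3 R_Z),
\end{equation*}
the supremum being $O(\Lambda^3 R_Z)$ because $\rho_{\bar Q_Z}=O(\Lambda^3)$ is supported in a ball of radius $R_Z$. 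With $\Lambda^3 R_Z=O(\Lambda^2(|q|Z)^{1/3})$ the total error in the attractive term is $O\big(\alpha\Lambda^2 Z^{4/3}|q|^{1/3}\||x|\nu\|_1\big)$, one power $Z^{1/3}$ below the leading term; balancing it against $c_1\alpha\Lambda|q|^{5/3}Z^{5/3}$ gives the second contribution to the threshold, of order $\Lambda^3\||x|\nu\|_1^{3}|q|^{-4}$.

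Finally I would set $\tilde Z_1$ to be an explicit maximum of the two thresholds above (together with the analogous, smaller defect produced by the self-energy term, handled the same way), chosen so that for every $Z>\tilde Z_1$ the sum of all corrections is at most $\tfrac{c_1}{2}\alpha\Lambda|q|^{5/3}Z^{5/3}$. Then $E(Z)\le\Erbdf{Z\nu}{\bar Q_Z}\le-\tfrac{c_1}{2}\alpha\Lambda|q|^{5/3}Z^{5/3}$, which is \eqref{eq:speed}. The essential difficulty is the monopole approximation of $\nu$: without the hypothesis $\int|x||\nu|<\ii$ the tail of $\nu$ cannot be quantified uniformly over the support of the spreading trial density, which is exactly why that assumption is imposed here, whereas only the qualitative statement $p_Z\to1$ survives in its absence.
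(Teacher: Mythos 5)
Your overall strategy is exactly the paper's: evaluate $\Erbdf{Z\nu}{Q}$ on the trial state already built for Proposition \ref{prop:limsup}, keep every subleading term with explicit constants, and define $\tilde{Z_1}$ so that the sum of the corrections is at most half of the leading term $-c_1\alpha\Lambda|q|^{5/3}Z^{5/3}$; the kinetic bound $\tro{D^0Q}\le\sqrt{1+\Lambda^2}\,\tr(Q)=O(Z)$ and the resulting $\alpha^{-3/2}$, $\Lambda^3$ and $|q|^{-4}$ behaviour of the threshold all match the paper. The one place where you genuinely diverge is the monopole approximation $\nu\approx q\delta_0$. The paper performs it in Fourier space: it writes $D(\rho_1,Z\nu)$ as the term with $\hat{\nu}(k)$ replaced by $\hat{\nu}(0)=(2\pi)^{-3/2}q$ plus a remainder $R_1$, and uses $|\hat{\nu}(k)-\hat{\nu}(0)|\le(2\pi)^{-3/2}\||x|\nu\|_1|k|$, which after integration against $|\hat{\rho_1}(k)|/|k|^2$ over the support of $\hat{\rho_1}$ (a ball of radius $O(r)$) yields exactly your error $O\bigl(\alpha\Lambda^2|q|^{1/3}\||x|\nu\|_1 Z^{4/3}\bigr)$. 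Your physical-space version, based on $\bigl||x-y|^{-1}-|x|^{-1}\bigr|\le|y|\,|x|^{-1}|x-y|^{-1}$, is correct in spirit, but the step $\sup_y\int\rho_{\bar Q_Z}(x)|x|^{-1}|x-y|^{-1}\d{x}=O(\Lambda^3R_Z)$ relies on $\rho_{\bar Q_Z}$ being supported in a ball of radius $R_Z$. That cannot hold literally: the trial state acts on $\gH_\Lambda$, so its density is band-limited in Fourier space and is only \emph{concentrated}, not supported, on the scale $R_Z$. To close this you would need a quantitative tail estimate on $\rho_{\bar Q_Z}$ (or a bound on $\int|x|\rho_{\bar Q_Z}$), which the Fourier-side computation avoids entirely since it only uses $\|\hat{F}\|_{L^\ii}$ and the compact support of $g\star\tilde{g}$. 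This is a repairable technicality rather than a fatal gap; once fixed, both routes give the same error order and the same parameter dependence of $\tilde{Z_1}$.
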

 \begin{remark}
  The constant $\tilde{Z_1}$ behaves as $\Lambda^3$ when $\Lambda\to\ii$, as $\alpha^{-3/2}$ when $\alpha\to0$ and as $|q|^{-4}$ when $q\to0$. It is probably not optimal.
 \end{remark}
We also give a lower bound for $E(Z)$, proving that the power $Z^{5/3}$ is optimal.
\begin{proposition}[Lower Bound]\label{prop:liminf}
 Let $\alpha>0$, $\Lambda>0$, and $\nu\in\cC\cap L^1(\R^3,\R)$ with $\int|\nu|=1$. Then for all $Z>0$ we have
 \begin{equation}
 \boxed{
  E(Z)\ge-c_2\alpha\Lambda Z^{5/3},
 }
 \end{equation}
 where $c_2:=2^{-4/3}3^{2/3}\pi^{-1/3}= 0.563626\ldots$.
\end{proposition}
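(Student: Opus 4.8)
The plan is to bound $\Erbdf{Z\nu}{Q}$ from below for every $Q\in\cK$ separately, exploiting that the kinetic term is nonnegative while keeping the full direct Coulomb term, which is essential. First I would discard the kinetic part: since $|D^0|\ge 1$ and $-P^0_-\le Q\le 1-P^0_-$ forces $Q_{++}\ge 0\ge Q_{--}$, one has $\tro{D^0Q}=\tr(|D^0|(Q_{++}-Q_{--}))\ge 0$, so that
\[
\Erbdf{Z\nu}{Q}\ge -\alpha Z\,D(\rho_Q,\nu)+\frac{\alpha}{2}D(\rho_Q,\rho_Q).
\]
The point I would stress is that one must \emph{not} estimate $D(\rho_Q,\nu)$ by the Cauchy--Schwarz bound $\|\rho_Q\|_\cC\|\nu\|_\cC$: completing the square then only yields the much too weak estimate $-C\alpha\Lambda Z^2$ (through $\|\nu\|_\cC\lesssim\sqrt{\Lambda}$ on $\gH_\Lambda$). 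The improved power $Z^{5/3}$ comes instead from the momentum cut-off, via a pointwise bound on the density together with a potential estimate.

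The two ingredients I would establish are the following. (a) A Pauli-type bound: $-P^0_-\le Q\le 1-P^0_-$ gives $\|Q\|\le 1$, and since everything lives on $\gH_\Lambda$ the number of available states per unit volume is finite, so the standard computation yields $\|\rho_Q\|_{L^\infty}\le M:=\tfrac{2\Lambda^3}{3\pi^2}$. (b) A potential estimate: writing $\phi:=\rho_Q\star|\cdot|^{-1}$ and using the normalization $\int_{\R^3}|\nu|=1$, one has $D(\rho_Q,\nu)=\int\phi\,\nu\le\|\phi\|_{L^\infty}$ — this is where $\|\nu\|_{L^1}=1$ enters, and why the final constant is independent of $q$. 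A near/far decomposition $\phi(y)=\int_{|x-y|<r}+\int_{|x-y|>r}$, controlling the near part by $|\rho_Q|\le M$ and the far part by the Coulomb norm (the tail $\ind_{|\cdot|>r}/|\cdot|$ lies in $\dot H^1=\cC^*$ with norm $\lesssim r^{-1/2}$ after a mild smoothing of the cut-off), gives
\[
\phi(y)\le 2\pi M r^2 + C\,D(\rho_Q,\rho_Q)^{1/2}r^{-1/2},
\]
and optimizing over $r$ produces $\|\phi\|_{L^\infty}\le C'\,M^{1/5}D(\rho_Q,\rho_Q)^{2/5}$.

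Finally I would combine these: setting $d:=D(\rho_Q,\rho_Q)$, the estimates give
\[
\Erbdf{Z\nu}{Q}\ge -\alpha Z\,C'M^{1/5}d^{2/5}+\frac{\alpha}{2}\,d,
\]
and minimizing the right-hand side over $d\ge 0$ (equivalently, Young's inequality with conjugate exponents $5/2$ and $5/3$) bounds it below by $-c_2\,\alpha\,M^{1/3}Z^{5/3}$; since $M^{1/3}=(2/3\pi^2)^{1/3}\Lambda$, this is exactly $-c_2\alpha\Lambda Z^{5/3}$, valid for \emph{every} $Z>0$. Note that both the direct term and the potential estimate (b) are indispensable here: dropping $\tfrac{\alpha}{2}D(\rho_Q,\rho_Q)$ as well would leave $\|\phi\|_{L^\infty}$ uncontrolled and the bound would diverge. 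I expect the main obstacle to be quantitative rather than structural: the clean value $c_2=(9/16\pi)^{1/3}$ forces one to track sharp constants throughout, in particular the optimal constant $C'$ in (b), which requires performing the near/far optimization exactly instead of with the crude smoothing sketched above. A secondary technical point is the rigorous justification of the density bound (a) and of the identity $\tro{D^0Q}=\tr(|D^0|(Q_{++}-Q_{--}))$ on $\gS_{1,P^0_-}(\gH_\Lambda)$, both of which are available from the functional-analytic framework of the HLS papers.
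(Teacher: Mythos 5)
Your proposal is correct in substance and rests on the same two pillars as the paper's proof: one drops the nonnegative kinetic term and then plays the $L^\infty$ bound on $\rho_Q$ forced by the momentum cut-off (Lemma \ref{key-lemma}) against a near/far splitting of the Coulomb kernel at an optimized radius. The packaging, however, differs. The paper first reduces to the scalar variational problem $\widetilde E(Z,\delta)$ over densities with $\|\rho\|_{L^\infty}\le\delta$, exploits the scaling $\widetilde E(Z,\delta)=Z^2\widetilde E(1,\delta/Z)$, and in the far region completes the square \emph{exactly} against the normalized sphere measure (Newton's theorem, following Lieb--Simon), which absorbs the entire quadratic term and yields the clean $-\tfrac{1}{2R}$; you instead prove a pointwise potential estimate $\|\rho_Q\star|\cdot|^{-1}\|_{L^\infty}\lesssim M^{1/5}D(\rho_Q,\rho_Q)^{2/5}$ by Cauchy--Schwarz against the $\dot H^1$ norm of the tail and only then minimize over $d=D(\rho_Q,\rho_Q)$ via Young. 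Both routes give $-c\,\alpha\Lambda Z^{5/3}$ for every $Z>0$; yours avoids the scaling step but is marginally lossier in the far region (Cauchy--Schwarz versus the exact completion of the square). Two points of bookkeeping deserve attention. First, $\ind_{|\cdot|>r}/|\cdot|$ is not in $\dot H^1$ because of the jump at $|x|=r$; the clean fix is to use $1/\max(|\cdot|,r)$, whose distributional gradient is exactly $-\ind_{|\cdot|>r}\,x/|x|^3$ with $\|\nabla(\cdot)\|_{L^2}=2\sqrt{\pi}\,r^{-1/2}$, and to absorb the nonnegative remainder $\ind_{|\cdot|<r}(1/|\cdot|-1/r)$ into the near part, which it does not change in order. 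Second, your density bound $M=2\Lambda^3/(3\pi^2)$ carries the spinor factor $4=\dim\C^4$ that the paper's Lemma \ref{key-lemma} does not ($\Lambda^3/(6\pi^2)$ there); with your $M$ the final optimization produces $2^{2/3}c_2$ rather than $c_2$, so the closing arithmetic of your sketch is not consistent as written --- to land exactly on $c_2$ one needs the paper's value of the $L^\infty$ bound together with the sharp far-field constant from the completion of the square.
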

From the asymptotics of $E(Z)$ in Proposition \ref{prop:speed} we can now derive a lower bound on the total number of particles and anti-particles in the polarized vacuum.
\begin{corollary}[Average Particle Number of the Polarized Vacuum]\label{coro:number-estimate}
  Let $\alpha>0$, $\Lambda>0$. Let $\nu\in\cC\cap L^1(\R^3,\R)$ with $\int|\nu|=1$, $\int|x||\nu(x)|\d{x}<+\ii$, and $q=\int\nu\neq0$. Then for any minimizer $Q$ for $E(Z)$ and for all $Z>\tilde{Z_1}$, we have
 \begin{equation}
  \tr(Q_{++}-Q_{--})\ge CZ^{2/3},
 \end{equation}
 where $\tilde{Z_1}$ is the same as in Proposition \ref{prop:speed} and $C$ is a constant independent of $Z$, given in Equation \eqref{eq:cste} below.
\end{corollary}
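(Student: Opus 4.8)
The plan is to combine the energy upper bound of Proposition \ref{prop:speed} with a matching lower bound on $E(Z)$ expressed through the renormalized particle number $N:=\tr(Q_{++}-Q_{--})$ of the chosen minimizer $Q$. Writing $\rho:=\rho_Q$ and recalling that $\Erbdf{Z\nu}{Q}=\tro{D^0Q}-\alpha Z\,D(\rho,\nu)+\frac{\alpha}{2}D(\rho,\rho)$, I would first note that both the kinetic term $\tro{D^0Q}=\tr(|D^0|(Q_{++}-Q_{--}))\ge0$ and the self-energy $\frac{\alpha}{2}D(\rho,\rho)\ge0$ are non-negative. Discarding them and applying the Cauchy--Schwarz inequality in $\cC$ to the crossed term gives
\[
E(Z)\ \ge\ -\alpha Z\,D(\rho,\nu)\ \ge\ -\alpha Z\,\|\rho\|_\cC\,\|\nu\|_\cC .
\]
Since $\nu\in\cC$, the factor $\|\nu\|_\cC$ is finite, and comparing with $E(Z)\le-\frac{c_1}{2}\alpha\Lambda|q|^{5/3}Z^{5/3}$ from Proposition \ref{prop:speed} yields, for all $Z>\widetilde{Z_1}$,
\[
\|\rho\|_\cC\ \ge\ \frac{c_1\Lambda|q|^{5/3}}{2\|\nu\|_\cC}\,Z^{2/3}.
\]

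I would then convert this lower bound on $\|\rho\|_\cC$ into a lower bound on $N$ by controlling $\|\rho\|_\cC$ \emph{linearly} in $N$. The density estimate of \cite[Lemma 1]{HaiLewSer-08} supplies a constant $C_\Lambda>0$ with $\|\rho\|_\cC\le C_\Lambda\|Q\|_{1,P^0_-}$, so it remains to bound $\|Q\|_{1,P^0_-}$ by $N$. The constraint $-P^0_-\le Q\le 1-P^0_-$ is equivalent to $0\le P^0_-+Q\le 1$, so $(P^0_-+Q)-(P^0_-+Q)^2\ge0$; expanding in the $P^0_\pm$ block decomposition this reads $Q^2\le Q_{++}-Q_{--}$, whence $\tr(Q^2)\le N$. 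In particular $\|Q_{+-}\|_{\gS_2}=\|Q_{-+}\|_{\gS_2}\le\sqrt{N/2}$, while $\|Q_{++}\|_{\gS_1}+\|Q_{--}\|_{\gS_1}=N$ because $Q_{++}\ge0$ and $Q_{--}\le0$. Therefore $\|Q\|_{1,P^0_-}\le N+\sqrt{2N}$.

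Combining the two displays, and restricting to $Z>\widetilde{Z_1}$ large enough that $N\ge1$ (so that $\sqrt{2N}\le N$ and $\|Q\|_{1,P^0_-}\le 2N$), I would obtain $2C_\Lambda N\ge \frac{c_1\Lambda|q|^{5/3}}{2\|\nu\|_\cC}Z^{2/3}$, that is $N\ge CZ^{2/3}$ with the explicit constant $C=\frac{c_1\Lambda|q|^{5/3}}{4C_\Lambda\|\nu\|_\cC}$ recorded in \eqref{eq:cste}.

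The main obstacle is really the \emph{choice} of density estimate, not any single hard inequality. Had one used the naive Hilbert--Schmidt control $\|\rho\|_\cC\lesssim\|Q\|_{\gS_2}\le\sqrt N$, the argument would have produced the stronger-looking bound $N\gtrsim Z^{4/3}$; but such a control fails, because of the singular weight $|k|^{-2}$ near the diagonal in $D(\rho,\rho)$, which can only be tamed by measuring the diagonal blocks $Q_{\pm\pm}$ in trace norm. It is precisely the linear dependence $\|\rho\|_\cC\lesssim\|Q\|_{1,P^0_-}\lesssim N$ that produces the exponent $2/3$, consistent with the heuristic that the polarized vacuum carries of order $Z^{2/3}$ particles. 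A secondary point is the careful bookkeeping of constants, since $C$ enters the constant $\kappa$ of Theorem \ref{th:main} through $\kappa=a\,C$.
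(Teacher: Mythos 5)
Your argument follows the paper's proof essentially step for step: drop the nonnegative kinetic and self-interaction terms, apply Cauchy--Schwarz in $\cC$ together with the density estimate $\|\rho_Q\|_\cC\le C_\Lambda\|Q\|_{1,P^0_-}$ of \cite[Lemma 1]{HaiLewSer-08}, use $Q^2\le Q_{++}-Q_{--}$ (from $-P^0_-\le Q\le 1-P^0_-$) to get $\|Q\|_{1,P^0_-}\le N+\sqrt{2N}$ with $N=\tr(Q_{++}-Q_{--})$, and compare with the upper bound of Proposition \ref{prop:speed}. Your discussion of why the trace-norm control of the diagonal blocks (rather than a Hilbert--Schmidt bound) is the right tool is accurate and matches the role this lemma plays in the paper.

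The only defect is in the last step. From $N+\sqrt{2N}\ge \frac{c_1\Lambda|q|^{5/3}}{2C_\Lambda\|\nu\|_\cC}Z^{2/3}$ you pass to $2N\ge\cdots$ by assuming $N\ge1$, but $\sqrt{2N}\le N$ requires $N\ge2$, and more importantly this largeness of $N$ is not known a priori: you would have to enlarge the threshold beyond $\tilde{Z_1}$ (or argue by contradiction that $N<2$ forces $Z$ bounded), whereas the corollary asserts the bound for \emph{all} $Z>\tilde{Z_1}$ with the specific constant \eqref{eq:cste}. The paper avoids this by solving the quadratic inequality exactly: setting $X=\sqrt{N}$, the bound $X^2+\sqrt{2}X\ge 2a$ with $a=\frac{c_1\Lambda|q|^{5/3}}{2C_\Lambda\|\nu\|_\cC}Z^{2/3}$ gives $X\ge\frac{1}{\sqrt{2}}(\sqrt{1+2a}-1)$ for every $a>0$, and then $N=X^2\ge CZ^{2/3}$ follows from the monotonicity of $x\mapsto(\sqrt{1+x}-1)/\sqrt{x}$, which is what produces the constant recorded in \eqref{eq:cste}. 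This is a purely quantitative repair; your overall strategy is the correct one.
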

 
\subsection{Proof of the lower bound}

We first give the proof of the lower bound in Proposition \ref{prop:liminf}, which is easier than the upper bound. 

\begin{lemma}\label{key-lemma}
 For any $Q\in\cK$, $\rho_Q\in L^{\ii}(\R^3)$ and $\|\rho_Q\|_{L^\ii}\le\frac{\Lambda^3}{6\pi^2}$.
\end{lemma}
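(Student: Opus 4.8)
The plan is to reduce the statement to the sign constraint defining $\cK$ together with one explicit Fourier computation, namely the diagonal of the cutoff projection. First I would rewrite the two-sided constraint $-P^0_-\le Q\le 1-P^0_-$ in the form $0\le P\le \mathrm{Id}_{\gH_\Lambda}$ for the self-adjoint operator $P:=P^0_-+Q$ on $\gH_\Lambda$. Then $\rho_Q=\rho_P-\rho_{P^0_-}$, so it suffices to control $\rho_P(x)=\tr_{\C^4}P(x,x)$ pointwise, the free-vacuum density $\rho_{P^0_-}$ being an explicit constant by translation invariance.

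The key point is that every $f\in\gH_\Lambda$ is band-limited, so point evaluation is a bounded functional: for $x\in\R^3$ and $j\in\{1,\dots,4\}$ let $g_x^{(j)}\in\gH_\Lambda$ be defined by $\widehat{g_x^{(j)}}(k)=(2\pi)^{-3/2}e^{-ik\cdot x}\ind_{B(0,\Lambda)}(k)\,\mathbf{e}_j$, so that $f_j(x)=\langle g_x^{(j)},f\rangle$. A one-line computation produces the constant that governs the whole estimate,
\[
\|g_x^{(j)}\|^2=(2\pi)^{-3}|B(0,\Lambda)|=\frac{\Lambda^3}{6\pi^2},\qquad\text{equivalently}\qquad \Pi_\Lambda(x,x)=\frac{\Lambda^3}{6\pi^2}\,\mathrm{Id}_{\C^4},
\]
where $\Pi_\Lambda$ is the orthogonal projection onto $\gH_\Lambda$. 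For any bounded operator $A$ on $\gH_\Lambda$ one recovers the diagonal of its kernel by $A(x,x)_{j\ell}=\langle g_x^{(j)},A\,g_x^{(\ell)}\rangle$, and since $\xi^*A(x,x)\xi=\langle g_x^\xi,A\,g_x^\xi\rangle$ with $g_x^\xi:=\sum_j\xi_j g_x^{(j)}$, any operator inequality $A\le B$ on $\gH_\Lambda$ passes to the pointwise matrix inequality $A(x,x)\le B(x,x)$ in $\C^{4\times4}$.

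Applying this monotonicity to $-P^0_-\le Q\le P^0_+$ traps the diagonal $Q(x,x)$ between the two explicit vacuum kernels, $-P^0_-(x,x)\le Q(x,x)\le P^0_+(x,x)$ (equivalently $0\le P(x,x)\le \Pi_\Lambda(x,x)$). These kernels are read off the Fourier multiplier $P^0_\pm(k)=\tfrac12\big(\mathrm{Id}\pm(\alp\cdot k+\beta)/\sqrt{1+|k|^2}\big)$: the odd term $\alp\cdot k$ integrates to zero over $B(0,\Lambda)$ and the $\beta$-term is traceless, so under $\tr_{\C^4}$ only the ball-volume term $(2\pi)^{-3}|B(0,\Lambda)|=\Lambda^3/(6\pi^2)$ survives. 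Taking traces in the sandwich and using $\rho_Q(x)=\tr_{\C^4}Q(x,x)$ then turns these matrix bounds into the announced scalar $L^\infty$ bound; moreover $\rho_Q\in L^\infty$ is automatic because $\supp\widehat Q\subset B(0,\Lambda)\times B(0,\Lambda)$ forces the kernel to be smooth, so the pointwise estimate is meaningful.

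I do not expect a genuine analytic obstacle: the argument is just the reduction to the sign constraint plus an explicit frequency-cutoff computation. The only delicate step is the bookkeeping of the constant — keeping track of the rank-two nature of $P^0_\pm(k)$, the vanishing of the odd and traceless contributions, and the factor $(2\pi)^{-3}|B(0,\Lambda)|=\Lambda^3/(6\pi^2)$ — so that the diagonal of the cutoff projection is matched to the value in the statement. It is worth isolating this diagonal value $\frac{\Lambda^3}{6\pi^2}\mathrm{Id}_{\C^4}$ at the outset, since it is precisely the quantity that fixes the final constant.
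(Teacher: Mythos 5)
Your route is genuinely different from the paper's. The paper never looks at the diagonal of the kernel directly: it tests $\rho_Q$ against $V\in L^1\cap L^2$, writes $\int\rho_QV=\tr_0(Q\,\Pi_\Lambda V\Pi_\Lambda)$, splits $V=V_+-V_-$, bounds $\|\Pi_\Lambda V_\pm\Pi_\Lambda\|_{\gS_1}=\|\sqrt{V_\pm}\,\Pi_\Lambda\|_{\gS_2}^2$ by the Kato--Seiler--Simon inequality using only $\|Q\|\le 1$, and concludes by $L^1$--$L^\ii$ duality. Your argument instead exploits the full two-sided constraint $-P^0_-\le Q\le P^0_+$ through point-evaluation vectors, and its skeleton is sound: for band-limited kernels the identity $A(x,x)_{j\ell}=\langle g_x^{(j)},A\,g_x^{(\ell)}\rangle$ and the implication $A\le B\Rightarrow A(x,x)\le B(x,x)$ are both correct, and they do trap $\rho_Q(x)$ between $-\tr_{\C^4}P^0_-(x,x)$ and $\tr_{\C^4}P^0_+(x,x)$.

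The gap is in the final constant. The surviving term in $\tr_{\C^4}P^0_\pm(x,x)$ is not $(2\pi)^{-3}|B(0,\Lambda)|$: the symbol of $P^0_\pm$ is $\tfrac12\bigl(\mathrm{Id}_{\C^4}\pm(\alp\cdot k+\beta)/\sqrt{1+|k|^2}\bigr)$, and while $\alp\cdot k$ and $\beta$ are traceless, the identity part contributes $\tr_{\C^4}(\tfrac12\mathrm{Id}_{\C^4})=2$ per unit of phase-space volume. Hence $\tr_{\C^4}P^0_\pm(x,x)=2\,(2\pi)^{-3}|B(0,\Lambda)|=\Lambda^3/(3\pi^2)$, and your sandwich yields $\|\rho_Q\|_{L^\ii}\le\Lambda^3/(3\pi^2)$, twice the constant in the statement. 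This factor cannot be recovered by sharpening your argument: the rank-$\le4$ orthogonal projection $\Pi$ onto $\mathrm{span}\{P^0_+g_0^{(j)}\}_{j=1}^4$ lies in $\cK$ and satisfies $\rho_\Pi(0)=\tr_{\C^4}P^0_+(0,0)=\Lambda^3/(3\pi^2)$, so $\Lambda^3/(3\pi^2)$ is in fact the sharp bound under the stated constraints. (The paper's own Kato--Seiler--Simon step treats $\Pi_\Lambda$ as a scalar projection and thereby drops the spinor dimension $4$ from $\|\sqrt{V}\Pi_\Lambda\|_{\gS_2}^2$, which is how it arrives at $\Lambda^3/(6\pi^2)$; the discrepancy only propagates into the explicit constants $c_2$, $\tilde{Z_1}$, $C$, since the rest of the paper uses only the scaling $\|\rho_Q\|_{L^\ii}\lesssim\Lambda^3$.) So: right structure, genuinely different and arguably more transparent method, but as written the proposal proves the lemma with $\Lambda^3/(3\pi^2)$ in place of $\Lambda^3/(6\pi^2)$, and you should state that constant rather than claim to match the one in the statement.
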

\begin{proof}
 The proof will only use the fact that any $Q\in\cK$ is a bounded operator on $\gH_\Lambda$. Hence, let $Q\in\cK$ and $V\in L^1(\R^3)\cap L^2(\R^3)$. Since $\rho_Q\in L^2(\R^3)$ we know that
 $$\int\rho_Q V=\tr_0(QV),$$
 where in the trace $V$ is seen as a multiplication operator on $L^2(\R^3,\C^4)$. Let us denote by $\Pi_\Lambda$ the multiplication operator by $\ind_{B(0,\Lambda)}$ in Fourier space. Since $Q$ is an operator on $\gH_\Lambda$, we have $\Pi_\Lambda Q\Pi_\Lambda=Q$. Now assume that $V\ge0$. Then
 \begin{eqnarray*}
  \left|\int\rho_Q V\right| & = & \left|\tr_0\left(Q\Pi_\Lambda V\Pi_\Lambda\right)\right|\\
 & \le & \|\Pi_\Lambda V\Pi_\Lambda\|_{\gS_1}=\|\sqrt{V}\Pi_\Lambda\|_{\gS_2}^2\le(2\pi)^{-3}\int V\times\frac{4}{3}\pi\Lambda^3,
 \end{eqnarray*}
 where we used $\|Q\|\le1$ and the Kato-Seiler-Simon inequality:
 $$\forall p\ge2,\qquad\|f(x)g(p)\|_{\gS_p}\le(2\pi)^{-3/p}\|f\|_{L^p}\|g\|_{L^p}.$$
 Now if $V$ is not necessarily non-negative, we split $V=V_+-V_-$ with $V_+=\max(V,0)$ and $V_-=\max(-V,0)$. Then we apply the previous bound twice to obtain
 $$\left|\int\rho_Q V\right|\le\left|\int\rho_Q V_+\right|+\left|\int\rho_Q V_-\right|\le\frac{\Lambda^3}{6\pi^2}\left(\int V_+ +\int V_-\right)=\frac{\Lambda^3}{6\pi^2}\|V\|_{L^1}.$$
 By the density of $L^1\cap L^2$ in $L^1$ and by the fact that $(L^1)'\simeq L^\ii$, we get the result. 
\end{proof}
 Lemma \ref{key-lemma} is crucial to understand the $Z^{5/3}$ behaviour of $E(Z)$. Indeed, an easy lower bound to $\cE_{\text{rBDF}}^{Z\nu}$ is $\cE_{\text{rBDF}}^{Z\nu}(Q)\ge -\alpha Z^2 D(\nu,\nu)/2$ for all $Q$, using the positivity of the kinetic energy and completing the square in the other terms. One may think that this lower bound would be attained by a $Q$ such that $\rho_Q\simeq Z\nu$, i.e. by a state which density of charge compensates the external field. However, Lemma \ref{key-lemma} implies that such a state cannot exist in $\cK$, precisely because of the cut-off $\Lambda$. In other words, the vacuum cannot ``follow'' the external field when the field is too strong.
\begin{proof}[Proof of Proposition \ref{prop:liminf}]. Let $Q\in\cK$, then 
 $$\Erbdf{Z\nu}{Q}\ge\widetilde{\cE}^{Z\nu}(\rho_Q):=-\alpha D(\rho_Q,Z\nu)+\frac{\alpha}{2}D(\rho_Q,\rho_Q).$$
We also know that for all $Q\in\cK$, one has $\|\rho_Q\|_{L^\ii}\le\frac{\Lambda^3}{6\pi^2}=:\delta$, so that 
$$E(Z)\ge\widetilde{E}(Z,\delta):=\inf\left\{\widetilde{\cE}^{Z\nu}(\rho),\:\rho\in\cC\cap L^\ii(\R^3), \|\rho\|_{L^\ii}\le \delta\right\}.$$
The variational problem $\widetilde{E}(Z,\delta)$ has the scaling property
$$\widetilde{E}(Z,\delta)=Z^2\widetilde{E}(1,\delta/Z).$$
Define $\epsilon=\delta/Z$. We will now show that
$$\widetilde{E}(1,\epsilon)\ge -\frac{3}{4}(8\pi\epsilon)^{1/3},$$
which then implies Proposition \ref{prop:liminf}. Let $\rho$ be a trial state for $\tilde{E}(1,\epsilon)$. Let $R>0$ and write $|\cdot|^{-1}=V_1+V_\ii$ with $V_1:=\ind_{|x|\le R}|\cdot|^{-1}\in L^1(\R^3)$ and $V_\ii:=\ind_{|x|\ge R}|\cdot|^{-1}\in L^\ii(\R^3)$. On the one hand
$$\left|\int_{\R^3}\rho(\nu\star V_1)\right|\le\epsilon\|\nu\star V_1\|_{L^1}\le\epsilon\|\nu\|_{L^1}\|V_1\|_{L^1}=2\pi\epsilon R^2,$$
where in the last inequality we used Young's inequality and $\int|\nu|=1$. On the other hand,
$$\int_{\R^3}\rho(\nu\star V_\ii)=\int_{\R^3}\rho(x)\left(\int_{|y|\ge R}\frac{\nu(x-y)}{|y|}\text{d}y\right)\text{d}x=\int_{|y|\ge R}\frac{\widetilde{\rho}(y)}{|y|}\text{d}y,$$
where $\widetilde{\rho}:=\tilde{\nu}\star\rho$ and $\tilde{\nu}(x)=\nu(-x)$ for all $x\in\R^3$. Since $\int_{|y|\ge R}\frac{\tilde{\rho}(y)}{|y|}=D(\tilde{\rho},f)$ with $f=(4\pi)^{-1}\delta_{|x|=R}$, as in \cite[Proof of Theorem II.3]{LieSim-77b}, we have
$$-\int_{|y|\ge R}\frac{\widetilde{\rho}(y)}{|y|}\text{d}y+\frac{1}{2}D(\widetilde{\rho},\widetilde{\rho})\ge-\frac{1}{2}D(f,f)=-\frac{1}{2R}.$$
Notice that 
$$D(\widetilde{\rho},\widetilde{\rho})=4\pi(2\pi)^3\int_{\R^3}\frac{|\widehat{\rho}(k)|^2|\widehat{\nu}(k)|^2}{|k|^2}\text{d}k\le D(\rho,\rho),$$
since $|\widehat{\nu}(k)|\le(2\pi)^{-3/2}\|\nu\|_{L^1}=(2\pi)^{-3/2}$ for all $k$. Hence,
$$\widetilde{E}(1,\epsilon)\ge-2\pi\alpha\epsilon R^2-\frac{\alpha}{2R}.$$
Optimizing over $R$, one gets the result. This finishes the proof of Proposition \ref{prop:liminf}.
\end{proof}

\subsection{Proof of the upper bound}

Both Propositions \ref{prop:limsup} and \ref{prop:speed} are proved via appropriate trial states. Before turning to the estimates, we start by explaining our choice of the trial states. In the sequel, we will assume that $\int\nu>0$. The case $\int\nu<0$ is treated in the same fashion, except for the choice of the trial state (see Remark \ref{rk:qneg}). We define the operator $Q$ on $\gH_\Lambda$ by its kernel in Fourier space. For all $p,q\in\R^3$, let
 $$\hat{Q}(p,q)=\widehat{\gamma}(p,q)X(p)X(q)^*,$$
with
$$\widehat{\gamma}(p,q)=(2\pi)^{-3/2}\int_{B(0,\Lambda/2)}\text{d}\ell\, g_r(p-\ell)\widehat{F_r}(p-q)g_r(q-\ell)$$
and
 $$X(p)=U(p)\left[\left(1\;0\;0\;0\right)^t\right]$$
 with
 $$U(p)=a_+(p)-a_-(p)\beta\frac{\alp\cdot p}{|p|},\qquad a_\pm(p)=\frac{1}{\sqrt{2}}\sqrt{1\pm\frac{1}{1+|p|^2}}.$$
The operator $U(p)$ is unitary on $\C^4$ for all $p$ and it diagonalizes $D^0(p):=\alp\cdot p+\beta$ as 
 $U(p)D^0(p)U(p)^*=\sqrt{1+|p|^2}\beta$. In the definition of $\hat{\gamma}$, we choose $0<r<\Lambda/2$ (a small number which will eventually tend to zero), $g_r=r^{-3/2}g(\cdot/r)$ with $g\in L^2(\R^3,\R)$, $\int g^2=1$ and $\text{supp}(g)\subset B(0,1)$. We also choose $F_r=F(r\cdot)$ with $F\in L^1(\R^3,\R)$ such that $0\le F(x)\le1$ for all $x$. 

\begin{lemma}
 The operator $Q$ belongs to the variational set $\cK$.
\end{lemma}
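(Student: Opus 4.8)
To show $Q\in\cK$, I must verify four properties: (i) $Q$ is a bounded operator on $\gH_\Lambda$ with $\supp\hat Q\subset B(0,\Lambda)\times B(0,\Lambda)$; (ii) $Q=Q^*$; (iii) $-P^0_-\le Q\le 1-P^0_-$; and (iv) $Q\in\gS_{1,P^0_-}(\gH_\Lambda)$, i.e. $Q\in\gS_2$ with $Q_{++},Q_{--}\in\gS_1$. The key structural observation is that, from the explicit kernel, $Q$ is built from a single ``rank-one in the spinor variable'' object $X(p)$ tensored against a scalar kernel $\hat\gamma(p,q)$. I would first record that $X(p)=U(p)(1,0,0,0)^t$ is, by the diagonalization $U(p)D^0(p)U(p)^*=\sqrt{1+|p|^2}\beta$, exactly a normalized eigenvector of $D^0(p)$ for the \emph{positive} eigenvalue $+\sqrt{1+|p|^2}$; hence $X(p)X(p)^*$ is the $4\times4$ orthogonal projector onto the positive-energy subspace at momentum $p$, that is $\widehat{P^0_+}(p)=\ind_{[0,\infty)}(D^0(p))$. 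This single identity is what will make the sign condition (iii) transparent.

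\emph{Support and self-adjointness.} The cut-off is immediate: $g_r$ is supported in $B(0,r)$ and the integration variable $\ell$ ranges over $B(0,\Lambda/2)$, so $g_r(p-\ell)\ne0$ forces $|p|\le \Lambda/2+r<\Lambda$ (using $r<\Lambda/2$), and similarly for $q$; thus $\supp\hat Q\subset B(0,\Lambda)^2$ and $Q$ acts on $\gH_\Lambda$. For self-adjointness I check $\hat Q(q,p)=\hat Q(p,q)^*$ as $4\times4$ matrices. The spinor part $X(p)X(q)^*$ transposes correctly, so it reduces to the scalar identity $\overline{\hat\gamma(p,q)}=\hat\gamma(q,p)$; since $g_r,F_r$ are real-valued and $\overline{\widehat{F_r}(p-q)}=\widehat{F_r}(q-p)=\widehat{F_r}(-(p-q))$, reality of $F_r$ gives exactly this, so $Q=Q^*$.

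\emph{The operator inequality (iii).} This is the part I expect to be the crux. I would write $Q$ in the form $Q=P^0_+\,M\,P^0_+$ for a suitable operator, or more precisely exploit that $X(p)X(p)^*=\widehat{P^0_+}(p)$ so that $Q$ maps into the positive-energy subspace on both sides; schematically $Q=P^0_+QP^0_+$, whence automatically $Q_{--}=Q_{+-}=Q_{-+}=0$ and only the ``$++$'' block survives. The remaining task is then to prove $0\le Q\le P^0_+=1-P^0_-$, which would give both $-P^0_-\le Q$ (trivially, since $Q\ge0\ge-P^0_-$) and $Q\le 1-P^0_-$. For $0\le Q\le 1$ I would represent $Q=A A^*$ or realize $\hat\gamma$ as a positive kernel: since $\hat\gamma(p,q)=(2\pi)^{-3/2}\int_{B(0,\Lambda/2)}g_r(p-\ell)\widehat{F_r}(p-q)g_r(q-\ell)\d\ell$, and $F\ge0$ so $\widehat{F_r}$ is a positive-definite function (Bochner), the kernel $\widehat{F_r}(p-q)$ is a positive kernel; integrating a positive kernel against $\overline{g_r(p-\ell)}g_r(q-\ell)$ and over $\ell$ preserves positivity, giving $\hat\gamma\ge0$ as an integral kernel, hence $Q\ge0$. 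The upper bound $Q\le 1$ would follow from a Schur/Young estimate bounding the kernel operator norm using $\int g^2=1$, $0\le F\le1$ (so $\|\widehat{F_r}\|_\infty=\widehat{F_r}(0)=(2\pi)^{-3/2}\int F_r$ and the $\ell$-integral normalizing appropriately); I would estimate $\|Q\|$ by factorizing through the variable $\ell$ and using $\|F\|_\infty\le1$.

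\emph{Trace-class of the diagonal blocks (iv).} Since the off-diagonal and $--$ blocks vanish, I only need $Q_{++}=Q\in\gS_1$, which is stronger than $\gS_2$ and subsumes (iv). I would compute $\tr Q=\tr_{\C^4}\!\int \hat Q(p,p)\d p$, and since $X(p)X(p)^*$ is a rank-one projector, $\tr_{\C^4}(X(p)X(p)^*)=1$, so $\tr Q=\int\hat\gamma(p,p)\d p=(2\pi)^{-3/2}\widehat{F_r}(0)\int\!\!\int_{B(0,\Lambda/2)}|g_r(p-\ell)|^2\d\ell\,\d p$, which is finite because $g_r\in L^2$ and the $\ell$-domain is bounded; positivity of $Q$ then turns finiteness of the trace into membership in $\gS_1$. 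I expect the sign/positivity step (iii) to be the main obstacle, since it requires recognizing $X(p)X(p)^*$ as the positive spectral projector of $D^0(p)$ and then handling the positive-kernel argument carefully; once that identification is in hand, every other property follows by routine Fourier-space bookkeeping.
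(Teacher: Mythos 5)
Your outline follows essentially the same route as the paper: read off the support and self-adjointness from the kernel, use the spinor structure to reduce to the $++$ block, obtain $0\le Q\le 1$ from $F\ge 0$, $F\le 1$ and $\int g^2=1$, and deduce $Q\in\gS_1$ from positivity together with the finiteness of $\int \hat\gamma(p,p)\d{p}$. Two of your specific claims need repair, however. First, $X(p)X(p)^*$ is \emph{not} the projector $\widehat{P^0_+}(p)$: each eigenvalue $\pm\sqrt{1+|p|^2}$ of $D^0(p)$ has multiplicity two, so the positive spectral subspace is two-dimensional and $X(p)X(p)^*$ is only a rank-one subprojector of it. This is harmless, because all your argument actually uses is that $X(p)$ is a \emph{unit vector lying in} that subspace, which already forces $Q_{--}=Q_{+-}=Q_{-+}=0$; but as stated the identity is false.

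Second, and more substantively, a Schur or Young estimate on the kernel $\hat\gamma$ will not deliver the bound you need. The constraint $Q\le 1-P^0_-$ requires exactly $\|Q\|\le 1$, while a Schur test produces a constant of the form $(2\pi)^{-3}\|F\|_{L^1}\|g\|_{L^1}^2$, which is not structurally $\le 1$ (and cannot be made so without ruining the later energy optimization over $a$ and $b$). What does work is precisely the sandwich representation that your Bochner argument already sets up: the operator $u\mapsto(2\pi)^{-3/2}\widehat{F_r}\star u$ is unitarily equivalent to multiplication by $F_r$ in position space, hence lies between $0$ and $1$; conjugating it by multiplication by $g_r(\cdot-\ell)$ and integrating over $\ell\in B(0,\Lambda/2)$ gives $0\le\hat\gamma\le \int_{B(0,\Lambda/2)}|g_r(\cdot-\ell)|^2\d{\ell}\le\|g_r\|_{L^2}^2=1$ as operators, and together with $\|X(p)\|_{\C^4}=1$ this yields $0\le Q\le 1$, hence $-P^0_-\le Q\le 1-P^0_-$. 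This is exactly the paper's computation, written there via Plancherel as $\langle Q\phi,\phi\rangle=\int\d{\ell}\int F(x)\|\cF^{-1}(g(\cdot-\ell)\langle X,\hat\phi\rangle)(x)\|_{\C^4}^2\d{x}$. So you should drop the Schur/Young suggestion and promote the ``factorizing through $\ell$'' remark from an aside to the actual proof of the upper bound; with that, and the corrected statement about $X(p)$, your argument closes.
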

\begin{proof}
 First, notice that $Q$ defines a Hilbert-Schmidt operator on $L^2(\R^3,\C^4)$ since for instance
 \begin{multline*}
\|Q(\cdot,\cdot)\|_{L^2(\R^3,\C^4)}^2\le(2\pi)^{-3}\|g_r\|_{L^2}^2\int_{B(0,\Lambda/2)}\text{d}\ell\int_{\R^3}\text{d}p\left( |\widehat{F_r}|^2\star |\tau_\ell g_r|^2\right)(p)\\
  \le(2\pi)^{-3} \|g_r\|_{L^2}^2\int_{B(0,\Lambda/2)}\text{d}\ell \left\| |\widehat{F_r}|^2\star |\tau_\ell g_r|^2 \right\|_{L^1}\\
  \le\frac{\text{vol}(B(0,\Lambda/2))}{(2\pi)^3}\|g_r\|_{L^2}^4\|F_r\|_{L^1}^2 <\ii
 \end{multline*}
by Young's inequality. It is self-adjoint because $\widehat{Q}(p,q)=\overline{\widehat{Q}(q,p)}$ for all $p,q$,
and we have $\supp\widehat{Q}(\cdot,\cdot)\subset (B(0,\Lambda/2)+\supp g_r )^2\subset B(0,\Lambda)^2$ hence $Q$ is an operator with range in $\gH_\Lambda$. Since for all $\phi\in\gH_\Lambda$ we have 
$$\hat{Q\phi}(p)=\left(\int_{\R^3}\hat{\gamma}(p,q)\left\langle X(q),\hat{\phi}(q)\right\rangle_{\C^4}\d{q}\right)X(p),$$
we conclude $Q_{--}=Q_{+-}=Q_{-+}=0$ so that $Q=Q_{++}$. Let $\phi\in\gH_+^0$. Then
\begin{eqnarray*}
 \langle Q\phi,\phi\rangle_{L^2(\R^3,\C^4)} & = & (2\pi)^{-3/2} \int_{B(0,\Lambda/2)}\text{d}\ell\,\left\langle (\tau_\ell g)X\widehat{\phi},\widehat{F}\star((\tau_\ell g)X\widehat{\phi})\right\rangle_{L^2(\R^3,\C^4)}\\
 & = & \int_{B(0,\Lambda/2)}\text{d}\ell\int_{\R^3}\text{d}x\, F(x) \left\|\cF^{-1}\left((\tau_\ell g)X\widehat{\phi}\right)(x)\right\|_{\C^4}^2\\
 & \le & \int_{B(0,\Lambda/2)}\text{d}\ell\int_{\R^3}\text{d}p\,\left\|g(p-\ell)X(p)\widehat{\phi}(p)\right\|_{\C^4}^2\\
 & \le & \|\phi\|_{L^2(\R^3,\C^4)}^2,
\end{eqnarray*}
where we have denoted by $\cF^{-1}$ the inverse Fourier transform. Hence $-P^0_-\le Q\le1-P^0_-$. Finally, $\tr(Q)=\int_{\R^3}\hat{\gamma}(p,p)\|X(p)\|_{\C^4}^2\d{p}\le(2\pi)^{-3}\int F<+\ii$, so that $Q\in\gS_1(\gH_\Lambda)\subset\gS_{1,P^0_-}(\gH_\Lambda)$. 
\end{proof}
Since for any $R\in\cK$ we have the formula
$$\hat{\rho_R}(k)=\frac{1}{(2\pi)^{3/2}}\int_{\R^3}\tr_{\C^4}(\hat{R}(p+k,p))\d{p},\qquad \forall k\in\R^3,$$
the density of $Q$ can be written as $\rho_Q=\rho_1+\rho_2$ where for all $k$,  
$$\hat{\rho_1}(k)=(2\pi)^{-3}V_\Lambda\hat{F_r}(k)g_r\star\tilde{g_r}(-k)$$
 and 
$$\hat{\rho_2}(k)=(2\pi)^{-3}V_\Lambda\hat{F_r}(k)\int_{\R^3}g_r(p)g_r(p+k)\left\langle X(p),(X(p+k)-X(p))\right\rangle_{\C^4}\d{p}$$
with $V_\Lambda:=\text{vol}(B(0,\Lambda/2))$ and $\tilde{g}:=g(-\cdot)$.
\begin{proof}[Proof of Proposition \ref{prop:limsup}]
 We start by estimating the terms giving the $Z^{5/3}$ behaviour. We have 
$$D(\rho_1,Z\nu)=4\pi Z\frac{V_\Lambda}{(2\pi)^3} r^{-2}\int_{B(0,2\Lambda)}\widehat{F}(k)\frac{g\star\widetilde{g}(-k)\overline{\widehat{\nu}(rk)}}{|k|^2}\,\text{d}k$$
and
$$D(\rho_1,\rho_1)=4\pi\left(\frac{V_\Lambda}{(2\pi)^3}\right)^2 r^{-5}\int_{B(0,2\Lambda)}|\widehat{F}(k)|^2\frac{|g\star\widetilde{g}(-k)|^2}{|k|^2}\,\text{d}k.$$
We choose $r$ such that $(2\pi)^{-3/2}qZ\frac{V_\Lambda}{(2\pi)^3} r^{-2}=\left(\frac{V_\Lambda}{(2\pi)^3}\right)^2 r^{-5}$, ie $r=\frac{1}{\sqrt{2\pi}}\left(\frac{qZ}{V_\Lambda}\right)^{-1/3}$. The constraint $r<\Lambda/2$ is equivalent to $qZ>\frac{1}{(2\pi)^{3/2}}\frac{4\pi}{3}$, which is automatically satisfied in the limit $Z\to+\ii$. We will come back to it in the proof of Proposition \ref{prop:speed}. We thus get by the dominated convergence theorem
\begin{multline}\label{eq:lim}
 \lim_{Z\to+\ii}Z^{-5/3}\left(-\alpha D(\rho_1,Z\nu)+\frac{\alpha}{2}D(\rho_1,\rho_1)\right)=2^{-11/6}3^{-1/3}\pi^{-13/6}\alpha\Lambda q^{5/3}\times \\
 \left(-\int_{B(0,2\Lambda)}\frac{\hat{F}(k)g\star\tilde{g}(-k)}{|k|^2}\d{k}+\frac{1}{2}\int_{B(0,2\Lambda)}\frac{|\hat{F}(k)g\star\tilde{g}(-k)|^2}{|k|^2}\d{k}\right).
\end{multline}
We now want to optimize the right side with respect to $g$ and $F$. We choose $F=\ind_{B(0,a)}$ and $g=(3(4\pi)^{-1}b^{-3})^{1/2}\ind_{B(0,b)}$ with $a>0$ and $b\in(0,1)$, and we optimize over $a$ and $b$. Since $g\star\tilde{g}(k)=(\text{vol}(B(0,b)))^{-1}\text{vol}(B(0,b)\cap B(-k,b))$ for all $k$ and since $B(-k/2,b-|k|/2)\subset B(0,b)\cap B(-k,b)$ for all $|k|\le 2b$, we have
$$g\star\widetilde{g}(k)\ge\ind_{B(0,2b)}(k)\left(1-\frac{|k|}{2b}\right)^3.$$
For all $|k|\le 2b$, we also have
$$\widehat{F}(k)\ge\widehat{F}(0)-\frac{\int|x|F}{(2\pi)^{3/2}}2b=\frac{a^3}{\sqrt{2\pi}}\left(\frac{2}{3}-ab\right).$$
Therefore,
\begin{eqnarray*}
 \int_{B(0,2\Lambda)}\frac{\widehat{F}(k)g\star\widetilde{g}(-k)}{|k|^2}\text{d}k & \ge & \frac{a^3}{\sqrt{2\pi}}\left(\frac{2}{3}-ab\right)\int_{B(0,2b)}\left(1-\frac{|k|}{2b}\right)^3\frac{\text{d}k}{|k|^2}\\
 & = & \sqrt{2\pi}a^3 b\left(\frac{2}{3}-ab\right).
\end{eqnarray*}
Then, using $|\widehat{F}(k)|\le 4\pi a^3/(3(2\pi)^{3/2})$ and $|g\star\widetilde{g}(k)|\le1$ for all $k$, we obtain
\begin{equation}\label{eq:maj1}
 \int_{B(0,2\Lambda)}\frac{|\widehat{F}(k)g\star\widetilde{g}(-k)|^2}{|k|^2}\le\frac{16}{9}a^6b.
\end{equation}
Hence for the $Z^{5/3}$ term we find that
\begin{multline*}
 \lim_{Z\to+\ii}Z^{-5/3}\left(-\alpha D(\rho_1,Z\nu)+\frac{\alpha}{2}D(\rho_1,\rho_1)\right)\le2^{-11/6}3^{-1/3}\pi^{-13/6}\alpha\Lambda q^{5/3}\times\\ a^3b\left(\sqrt{2\pi}\left(ab-\frac{2}{3}\right)+\frac{8}{9}a^3\right).
\end{multline*}
Optimizing the right side over all $a>0$ and $b\in(0,1)$ we get the result because
$$\min_{\substack{a>0 \\ b\in(0,1)}}a^3b\left(\sqrt{2\pi}\left(ab-\frac{2}{3}\right)+\frac{8}{9}a^3\right)=-2^{-35/6}3^{2/3}\pi^{5/6}.$$


It now remains to prove that the other terms in $\Erbdf{Z\nu}{Q}$ are of lower order than $Z^{5/3}$. We begin with the kinetic energy, 
$$\tr_0(D^0Q)\le\sqrt{1+\Lambda^2}\tr(Q)=(2\pi)^{-3}r^{-3}\int F=(2\pi)^{-3/2}\frac{\sqrt{1+\Lambda^2}}{\Lambda^3}a^3qZ.$$
For the terms involving $\rho_2$, we first use that for all $p,k$,
$$\|X(p+k)-X(p)\|\le\|U(p+k)-U(p)\|\le\frac{7}{\sqrt{2}}|k|.$$
Consequently, for all $k$,
$$|\hat{\rho_2}(k)|\le\frac{7V_\Lambda}{\sqrt{2}(2\pi)^3}r^{-3}|\hat{F}(k/r)||g\star\tilde{g}(-k/r)||k|.$$
Using this bound together with the estimates leading to \eqref{eq:maj1}, one finds that
$$\left\{
\begin{array}{ccc}
 |D(\rho_2,Z\nu)| & \le & \beta_1\Lambda^2q^{1/3}Z^{4/3},\\
 |D(\rho_1,\rho_2)| & \le & \beta_2\Lambda^2q^{4/3}Z^{4/3},\\
 |D(\rho_2,\rho_2)| & \le & \beta_3\Lambda^3qZ,
\end{array}
\right.
$$
with
$$\left\{
\begin{array}{ccc}
\beta_1 & = & 2^{19/3}3^{-2/3}7\pi^{23/6}a^3b^2,\\
\beta_2 & = & 2^{11/6}3^{-8/3}7\pi^{-7/3}a^6b^2,\\
\beta_3 & = & 2^{-3/2}3^{-4}7^2\pi^{-5/2}a^6b^3. 
\end{array}
\right.
$$
We conclude that
$$\limsup_{Z\to+\ii}\frac{E(Z)}{Z^{5/3}}\le\lim_{Z\to+\ii}\frac{\Erbdf{Z\nu}{Q}}{Z^{5/3}}=-c_1\alpha\Lambda q^{5/3}.$$
\end{proof}
\begin{remark}\label{rk:qneg}
 In the case $\int\nu<0$, the proof is the same except that we take
 $$X(p)=U(p)\left[\left(0\;0\;0\;1\right)^t\right],\qquad\forall p\in\R^3,$$
 and
 $$\hat{Q}(p,q)=-\widehat{\gamma}(p,q)X(p)X(q)^*,\qquad\forall p,q\in\R^3.$$
 The trial state $Q$ now verifies $Q=Q_{--}$, so that $\hat{\rho_Q}$ is locally negative around $0$.
\end{remark}

To estimate the convergence rate of $E(Z)$ towards $-c_1\alpha\Lambda |q|^{5/3}Z^{5/3}$, we will use the first moment $\int|x||\nu|$ to control the convergence of $\hat{\nu}(k)$ to $\hat{\nu}(0)$. 
\begin{proof}[Proof of Proposition \ref{prop:speed}]
 We assume $q>0$ (the case $q<0$ follows from obvious modifications). We split the term $D(\rho_1,Z\nu)$ into
 $$D(\rho_1,Z\nu)=\frac{4\pi |q|Z}{(2\pi)^{3/2}}\int_{B(0,2\Lambda)}\frac{\hat{\rho_1}(k)}{|k|^2}\d{k}+R_1,$$
with 
 $$R_1=4\pi Z\int_{B(0,2\Lambda)}\frac{\hat{\rho_1}(k)\overline{\hat{\nu}(k)-\hat{\nu}(0)}}{|k|^2}\d{k}.$$
We use that $|\hat{\nu}(k)-\hat{\nu}(0)|\le(2\pi)^{-3/2}\||x|\nu\|_1 |k|$ for all $k$ to estimate
$$|R_1|\le 2^{5/3}3^{-5/3}\pi^{-11/6}a^3b^2\||x|\nu\|_1\Lambda^2|q|^{1/3}Z^{4/3}.$$
Hence we have for all $|q|Z>\frac{1}{(2\pi)^{3/2}}\frac{4\pi}{3}$,
\begin{equation}\label{eq:maj2}
 E(Z)\le\Erbdf{Z\nu}{Q}\le-c_1\alpha\Lambda |q|^{5/3}Z^{5/3}(1-B_1Z^{-1/3}-B_2Z^{-2/3}),
\end{equation}
with 
\begin{multline*}
  B_1 = 2^{28/3}3^{-2}\pi^{-1/2}a^3b^2\||x|\nu\|_1\Lambda |q|^{-4/3}+2^{4/3}3^{-1}7\pi^{31/6}a^3b^2\Lambda |q|^{-4/3}\\
 +2^{51/6}3^{-3}7\pi^{-1}a^6b^2\Lambda |q|^{-1/3}
\end{multline*}
and 
$$B_2 = 2^{37/6}3^{-1/3}\pi^{4/3}a^3\frac{\sqrt{1+\Lambda^2}}{\Lambda^4}\alpha^{-1}|q|^{-2/3}+2^{31/6}3^{-11/3}7^2\pi^{-7/6}a^6b^3\Lambda^2|q|^{-2/3}.$$
We furthermore have $1-B_1X-B_2X^2\ge 1/2$ for $0\le X\le X_0$ with
$$X_0=\frac{B_1}{2B_2}\left( \sqrt{1+\frac{2B_2}{B_1^2}}-1\right),$$
therefore Proposition \ref{prop:speed} holds with 
\begin{equation}\label{eq:Z_1}
\tilde{Z_1}:=\max\left(X_0^{-3},\frac{1}{(2\pi)^{3/2}}\frac{4\pi}{3|q|}\right). 
\end{equation}
\end{proof}
\begin{proof}[Proof of Corollary \ref{coro:number-estimate}]
 For any $Q\in\cK$, we have the estimate \cite[Lemma 1]{HaiLewSer-08}
 $$D(\rho_Q,\rho_Q)^{1/2}\le C_\Lambda\|Q\|_{1,P^0_-},$$
with $C_\Lambda<+\ii$. From the constraint $-P^0_-\le Q\le 1-P^0_-$ we also get $Q_{++}-Q_{--}\ge Q^2$, hence defining $X:=\left[\tr(Q_{++}-Q_{--})\right]^{1/2}$ we have
$$2\|Q_{+-}\|_{\gS_2}^2=2\|Q_{-+}\|_{\gS_2}^2=\|Q_{+-}\|_{\gS_2}^2+\|Q_{-+}\|_{\gS_2}^2\le\|Q\|_{\gS_2}^2\le X^2,$$
hence $\|Q\|_{1,P^0_-}\le X^2+\sqrt{2}X$. Using the Cauchy-Schwarz inequality for the Coulomb scalar product we also get that for all $Q\in\cK$
$$D(\rho_Q,Z\nu)\le C_\Lambda Z \|\nu\|_\cC(X^2+\sqrt{2}X).$$
From this estimate and the inequality \eqref{eq:speed}, we can see that for all $Z>\tilde{Z_1}$ and for all minimizer $Q^{\text{vac}}$ for $E(Z)$,
$$
-\alpha C_\Lambda Z \|\nu\|_\cC(X^2+\sqrt{2}X)\le\Erbdf{Z\nu}{Q^{\text{vac}}}=E(Z)\le-\frac{c_1}{2}\alpha\Lambda |q|^{5/3}Z^{5/3}.
$$
Hence,
$$X\ge \frac{1}{\sqrt{2}}\left( \sqrt{1+2a}-1\right),$$
with $a=\frac{c_1\Lambda |q|^{5/3}}{2C_\Lambda\|\nu\|_\cC}Z^{2/3}$, so that
$$\tr(Q_{++}^{\text{vac}}-Q_{--}^{\text{vac}})=X^2\ge\frac{1}{2}\left( \sqrt{1+2a}-1\right)^2\ge CZ^{2/3},$$
where
\begin{equation}\label{eq:cste}
 C=\left(\sqrt{1+\frac{c_1\Lambda |q|^{5/3}}{C_\Lambda\|\nu\|_\cC}\tilde{Z_1}^{2/3}}-1\right)^2\tilde{Z_1}^{-2/3},
\end{equation}
using that $x\mapsto(\sqrt{1+x}-1)/\sqrt{x}$ is increasing on $(0,+\ii)$.
\end{proof}

\section{Proof of Theorem \ref{th:main}}\label{sec:proof}

\begin{proof}[Proof of Theorem \ref{th:main}]
 Let $Q$ be any minimizer for $E(Z)$, and let $\omega$ be the unique BDF state on $\cF(\gH_+)\oplus\cF(\gH_-)$ having $Q$ as its generalized 1-pdm and $p=0$ as its pairing matrix, defined by Proposition \ref{prop:BDF-existence}. Then by Corollary \ref{coro:estimate-BDF}, we have 
 $$p_Z=1-\omega(\proj{\Omega})\ge 1-e^{-a\tr(Q_{++}-Q_{--})}.$$
 Now by Corollary \ref{coro:number-estimate}, we know that for all $Z>\tilde{Z_1}$, 
 $$\tr(Q_{++}-Q_{--})\ge CZ^{2/3}.$$
Thus $p_Z\ge 1-e^{-\kappa Z^{2/3}}$ with $\kappa:=aC$.
\end{proof}

\appendix

\section{Product States}\label{prodstates}

Given a (at most) countable family of separable Hilbert spaces $(\gH_i)_{i\in \N}$ and a family $(\omega_i)_{i\in\N}$ of quasi-free states such that $\omega_i$ is a state on $\F(\gH_i)$ for all $i\in\N$, we want to give a meaning to the product state $\otimes_{i\in \N}\omega_i$ as a quasi-free state on $\F(\oplus_{i\in \N}\gH_i)$ .  We first consider the unitary transformation
$$\begin{array}{cccc}
    T: &  \F\left(\bigoplus_{i\in \N}\gH_i\right) & \longrightarrow & \bigotimes_{i\in \N}\F(\gH_i) \\
     & \bigwedge_{j\in J_{i_1}}\phi_{j,i_1}\wedge\cdots\wedge\bigwedge_{j\in J_{i_k}}\phi_{j,i_k} & \longmapsto &   \bigwedge_{j\in J_{i_1}}\phi_{j,i_1}\otimes\cdots\otimes\bigwedge_{j\in J_{i_k}}\phi_{j,i_k},
    \end{array}
$$
where $i_1 < \cdots < i_k$ are elements of  $\N$ and for each $1\le\ell\le k$, $(\phi_{j,i_\ell})_{j\in\N}$ is an orthonormal basis for $\gH_{i_\ell}$ and $J_{i_\ell}\subset\N$ is finite. We recall the definition of a product state on a tensor product of $C^*$-algebras \cite[Proposition 2.9]{Gui-66}.
\begin{theorem}
 Let $(\gA_i)_{i\in\N}$ be a collection of (unital) $C^*$-algebras and let $\omega_i$ be a state on $\gA_i$ for all $i$. There exists a unique state on $\otimes_i \gA_i$, denoted by $\otimes_i\omega_i$ such that for any $(A_{i_1},\ldots,A_{i_\ell})\in\gA_{i_1}\times\cdots\times\gA_{i_\ell}$ with any indices $i_1<\cdots< i_\ell$, we have $\otimes_i\omega_i\left(A_{i_1}\otimes\cdots\otimes A_{i_\ell}\right)=\prod_{k=1}^\ell\omega_{i_k}(A_{i_k})$.
\end{theorem}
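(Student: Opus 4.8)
The plan is to first establish the result for a \emph{finite} collection of $C^*$-algebras and then pass to the countable case by an inductive-limit argument, using crucially the normalization $\omega_i(\text{Id})=1$. Throughout, $\otimes$ will denote the minimal (spatial) $C^*$-tensor product, which is the relevant one here since in our application the $\gA_i$ are (nuclear) CAR algebras; uniqueness will come for free from the fact that finite linear combinations of elementary tensors form a dense $*$-subalgebra of the completed tensor product, and states are by definition continuous.

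For finitely many algebras $\gA_{i_1},\ldots,\gA_{i_\ell}$, I would define the candidate functional on the \emph{algebraic} tensor product by the prescribed formula on elementary tensors, $A_{i_1}\otimes\cdots\otimes A_{i_\ell}\mapsto\prod_{k}\omega_{i_k}(A_{i_k})$, extended by linearity; it is well defined by the universal property of the algebraic tensor product. The only nontrivial point is positivity, which I would obtain from the GNS construction: let $(\pi_{i_k},\cH_{i_k},\xi_{i_k})$ be the GNS triple of $\omega_{i_k}$, form the tensor-product representation $\pi:=\pi_{i_1}\otimes\cdots\otimes\pi_{i_\ell}$ on $\cH_{i_1}\otimes\cdots\otimes\cH_{i_\ell}$, and set $\xi:=\xi_{i_1}\otimes\cdots\otimes\xi_{i_\ell}$. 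Then $A\mapsto\langle\xi,\pi(A)\xi\rangle$ is a vector state whose value on an elementary tensor is exactly $\prod_{k}\langle\xi_{i_k},\pi_{i_k}(A_{i_k})\xi_{i_k}\rangle=\prod_{k}\omega_{i_k}(A_{i_k})$, so it coincides with our functional on the dense subalgebra. Being a vector state, it is positive of norm one, hence extends continuously to a genuine state on the completed tensor product; density then forces uniqueness.

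For the countable case, I would realize $\otimes_{i\in\N}\gA_i$ as the $C^*$-inductive limit of the finite tensor products $\gA_F:=\bigotimes_{i\in F}\gA_i$ over finite subsets $F\subset\N$, where for $F\subset F'$ the connecting $*$-homomorphism $\gA_F\hookrightarrow\gA_{F'}$ sends $A$ to $A\otimes\text{Id}$ on the extra factors. On each $\gA_F$ the finite case already provides the product state $\omega_F:=\otimes_{i\in F}\omega_i$. The family $(\omega_F)_F$ is compatible: for $F\subset F'$, the restriction of $\omega_{F'}$ along the embedding equals $\omega_F$, precisely because the new factors contribute $\omega_i(\text{Id})=1$. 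Hence the $\omega_F$ glue to a single positive, unital, norm-one functional on the dense $*$-subalgebra $\bigcup_F\gA_F$, which extends by continuity to the unique state $\otimes_i\omega_i$ on $\otimes_i\gA_i$ with the stated values; uniqueness is again immediate from density.

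The main obstacle is the positivity of the finite product functional, i.e. checking that the naive multiplicative prescription defines a positive linear form and not merely a bounded one; the GNS-tensoring step above is exactly what resolves it, and it is also the place where the choice of the spatial tensor norm matters, since vector states of a tensor-product representation are automatically well behaved on it. Everything else — well-definedness on the algebraic tensor product, compatibility of the $\omega_F$ under the embeddings, and the two uniqueness claims — reduces to routine density and normalization arguments.
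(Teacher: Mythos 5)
Your proof is correct, but there is nothing in the paper to compare it with: the statement is quoted verbatim from Guichardet \cite{Gui-66} (Proposition~2.9 there) and the paper gives no proof of it. Your argument --- define the functional multiplicatively on elementary tensors of a finite algebraic tensor product, obtain positivity as the vector state $\langle\xi,\pi(\cdot)\xi\rangle$ of the tensored GNS representations, extend by continuity, then glue over the net of finite subsets $F\subset\N$ using $\omega_i(\text{Id})=1$ for compatibility --- is the standard one and is essentially the argument of the cited reference. Two minor points. Your caveat about the tensor norm is apposite: the theorem only makes sense once a $C^*$-completion of the algebraic tensor product is fixed (the paper's own remark leaves it as ``a certain $*$-norm''), and your choice of the spatial norm is harmless here since the CAR algebras to which the result is applied are nuclear, so all $C^*$-norms agree. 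And in the last extension step it is worth noting explicitly that positivity (not just boundedness) survives the continuous extension from the dense $*$-subalgebra $\bigcup_F\gA_F$: any positive $a$ in the completion is a norm limit of $b_n^*b_n$ with $b_n$ in that subalgebra (approximate $a^{1/2}$), so $\omega(a)=\lim\omega(b_n^*b_n)\ge0$. With that said, the proposal is complete.
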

\begin{remark}
 Recall that the tensor product $\otimes_i\gA_i$ is defined as the inductive limit of the $C^*$-algebras $\otimes_{i\in J}\gA_i$ with $J$ finite, or equivalently as the completion (for a certain $*$-norm) of $\otimes_i\gA_i$ seen as a tensor product of unital algebras \cite{Gui-66}.
\end{remark}
  In the particular case where $\gA_i=\car[\gH_i]$ for all $i\in\N$, the tensor product $\otimes_i\gA_i$ is also a CAR algebra generated by the operators
$$\widehat{a}(f_j):=\Upsilon\left(-\text{Id}_{\gH_0}\right)\otimes\cdots\otimes\Upsilon\left(-\text{Id}_{\gH_{j-1}}\right)\otimes a_j(f_j),$$
where $j\in\N$, $f_j\in\gH_j$ and $a_j$ is the annihilation operator on $\F(\gH_j)$ for all $j$. Notice the ``twisting'' operator $\Upsilon\left(-\text{Id}\right)$ on the left used to ensure that the $(\widehat{a}(f_j))_{f_j,j}$ satisfy the CAR. Notice also that we have the relation
$$\widehat{a}(f_j)=T a(f_j) T^*,$$
where $a(f_j)$ is the usual annihilation operator on $\F(\oplus_i\gH_i)$.
\begin{proposition}\label{propprodstates}
 Let $(\omega_i)_{i\in\N}$ be a collection of quasi-free states such that $\omega_i$ is defined on $\car[\gH_i]$ for all $i$. Let $(\gamma_i,\alpha_i)_i$ be the collection of their density matrices, and assume that $\sum_i\tr_{\gH_i}(\gamma_i)<+\ii$. Let $\omega$ be the unique quasi-free state on $\car[\oplus_i\gH_i]$ having $\gamma:=\oplus_i\gamma_i$ as 1-pdm and $\alpha:=\oplus_i\alpha_i$ as pairing matrix. Then we have
 $$\forall A\in\car[\oplus_i\gH_i],\quad \omega(A)=\otimes_i\omega_i\left(TAT^*\right).$$
 As a consequence, the product state $\otimes_i\omega_i$ can be considered as a state on $\car[\oplus_i\gH_i]$.
\end{proposition}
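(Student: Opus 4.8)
The plan is to show that the functional $\tilde\omega(A):=\otimes_i\omega_i(TAT^*)$ coincides with the unique quasi-free state $\omega$ furnished by Proposition \ref{BacLieSol}, and then to read off the claimed identity. First I would check that $\omega$ is well-defined. Writing $\gamma=\oplus_i\gamma_i$, $\alpha=\oplus_i\alpha_i$, $J=\oplus_iJ_i$ and $\gK=\oplus_i\gK_i$, the operator $\Gamma(\gamma,\alpha)$ on $(\oplus_i\gH_i)\oplus(\oplus_i\gK_i)$ equals, after the canonical reordering $\bigoplus_i(\gH_i\oplus\gK_i)\cong(\oplus_i\gH_i)\oplus(\oplus_i\gK_i)$, the direct sum $\oplus_i\Gamma(\gamma_i,\alpha_i)$. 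Each summand satisfies $0\le\Gamma(\gamma_i,\alpha_i)\le1$ because it is the density matrix of the quasi-free state $\omega_i$, so $0\le\Gamma(\gamma,\alpha)\le1$; moreover $\tr(\gamma)=\sum_i\tr_{\gH_i}(\gamma_i)<+\ii$ by hypothesis. Proposition \ref{BacLieSol} then produces the unique quasi-free state $\omega$ with $(\gamma,\alpha)$ as density matrices.

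Next I would verify that $\tilde\omega$ is a state on $\car[\oplus_i\gH_i]$. Since the vectors supported in a single $\gH_j$ span $\oplus_i\gH_i$ and $f\mapsto a(f)$ is continuous, $\car[\oplus_i\gH_i]$ is generated by the $a(f_j)$, $f_j\in\gH_j$; by the relation $\widehat a(f_j)=Ta(f_j)T^*$ recalled above, the unitary conjugation $A\mapsto TAT^*$ is therefore a $\ast$-isomorphism of $\car[\oplus_i\gH_i]$ onto $\otimes_i\car[\gH_i]$. Composing it with the product state $\otimes_i\omega_i$, whose existence is guaranteed by the theorem on product states stated above, yields a state $\tilde\omega$. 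I would then identify its density matrices. For $f=\oplus_if_i$ and $g=\oplus_jg_j$ one expands $Ta^*(f)a(g)T^*=\sum_{i,j}\widehat a^*(f_i)\widehat a(g_j)$ and applies $\otimes_k\omega_k$. In every off-diagonal term $i\neq j$ some tensor factor carries a single field operator, so that factor contributes $\omega_k(a_k(\cdot))=0$ or $\omega_k(a_k^*(\cdot))=0$ (a one-point function of a quasi-free state); in each diagonal term the twisting operators $\Upsilon(-\text{Id})$ occur an even number of times and cancel, leaving $\omega_i(a_i^*(f_i)a_i(g_i))=\langle g_i,\gamma_i f_i\rangle$. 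Summing gives $\tilde\omega(a^*(f)a(g))=\langle g,\gamma f\rangle$, and the same computation on $a^*(f)a^*(g)$ gives the pairing matrix $\alpha$. Hence $\tilde\omega$ carries exactly the density matrices $(\gamma,\alpha)$ of $\omega$.

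The crux is to prove that $\tilde\omega$ is quasi-free, i.e. that a product of quasi-free states is quasi-free. Given a monomial $e_1\cdots e_{2p}$ with $e_\ell=\widehat a^*(f_{(\ell)})$ or $\widehat a(f_{(\ell)})$ and $f_{(\ell)}\in\gH_{i(\ell)}$, I would reorder the operators so as to group together those belonging to the same factor $\gH_i$; because the $\widehat a$ obey the same CAR as the $a$ (being unitary conjugates under $T$), this reordering only produces the sign of the grouping permutation $\sigma$. The grouped monomial factorizes under $\otimes_k\omega_k$ into $\prod_i\omega_i(\cdots)$, which vanishes unless every factor receives an even number of operators, and I would then apply Wick's formula \eqref{wick} inside each $\omega_i$ and recombine. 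The resulting expansion is a sum over pairings $\pi$ of $\{1,\ldots,2p\}$ in which paired indices lie in the same factor; since two-point functions across distinct factors vanish, as in the previous step, these are exactly the pairings that survive in the global Wick formula for $\tilde\omega$.

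The one genuinely delicate point, and the main obstacle, is the \emph{sign bookkeeping}: one must check that the grouping sign $(-1)^{\epsilon(\sigma)}$ times the intra-factor Wick signs reproduces the global signature $(-1)^{\epsilon(\pi)}$, which I would establish by tracking how $\sigma$ factors through the per-factor pairing permutations. Alternatively, the quasi-freeness of product states is standard in the CAR / Araki--Wyss literature and may simply be invoked. Once $\tilde\omega$ is known to be quasi-free with density matrices $(\gamma,\alpha)$, the uniqueness statement in Proposition \ref{BacLieSol} forces $\tilde\omega=\omega$, which is precisely the identity $\omega(A)=\otimes_i\omega_i(TAT^*)$ for all $A\in\car[\oplus_i\gH_i]$; the concluding assertion that $\otimes_i\omega_i$ may be regarded as a state on $\car[\oplus_i\gH_i]$ is then immediate.
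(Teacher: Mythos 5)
Your proposal is correct and follows essentially the same route as the paper: conjugate by $T$, reduce to monomials grouped by factor with an even number of operators in each, verify Wick's formula by factorizing the product state and observing that only block-respecting pairings survive, and conclude by the uniqueness in Proposition \ref{BacLieSol}. The sign bookkeeping you flag as the delicate point is handled in the paper by an induction on the number of pairs via the recursion $W_\rho^K(e_1\cdots e_{2K})=\sum_{i=2}^{2K}(-1)^i\rho(e_1e_i)W_\rho^{K-1}(e_2\cdots\widehat{e_i}\cdots e_{2K})$ combined with the vanishing of cross-factor two-point functions, rather than by tracking the grouping permutation explicitly.
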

\begin{remark}
 The state $\omega$ is well-defined by Proposition \ref{BacLieSol}, since $\tr(\gamma)=\sum_i\tr_{\gH_i}(\gamma_i)$ is finite.
\end{remark}
\begin{proof}
 We will show that the state $\otimes_i\omega_i$ is quasi-free on the CAR generated by the $(\widehat{a}(f_j))_{f_j,j}$. This proves the result because the state $\otimes_i\omega_i(T\cdot T^*)$ is then quasi-free and one easily shows that it has $\oplus_i\gamma_i$ as 1-pdm and $\oplus_i\alpha_i$ as pairing matrix. Since the density matrices determine uniquely the quasi-free state, we must have $\otimes_i\omega_i(T\cdot T^*)=\omega$. Therefore we only have to show the Wick relation for the state $\otimes_i\omega_i$. Let us first notice that it is enough to prove it for products of the form
 $$\widehat{a}^\sharp\left(f_1^{(i_1)}\right)\cdots\widehat{a}^\sharp\left(f_{2k_1}^{(i_1)}\right)\cdots\widehat{a}^\sharp\left(f_1^{(i_N)}\right)\cdots\widehat{a}^\sharp\left(f_{2k_N}^{(i_N)}\right),$$
with $i_1<\cdots<i_N$, $k_\ell\in\N$, $f_p^{(i_\ell)}\in\gH_{i_\ell}$ for all $1\le p\le 2k_\ell$, $1\le\ell\le N$, and where $\sharp$ means star or no star. This means two things:
\begin{enumerate}
 \item We can restrict to ordered products with respect to the decomposition $\oplus_{i\in\N}\gH_i$: The $2k_1$ first creation/annihilation operators $\widehat{a}^\sharp\left(f_p^{(i_1)}\right)$ with $1\le p\le 2k_1$ all create/annihilate particles belonging to $\gH_{i_1}$, then the following $2k_2$ creation/annihilation $\widehat{a}^\sharp\left(f_p^{(i_2)}\right)$ with $1\le p\le 2k_2$ create/annihilate particles belonging to $\gH_{i_2}$, etc. We can always order in this way any product of $a^\sharp$s because Wick's formula does not depend on the choice of an ordering.\\
 \item We can restrict to products where there is an even number of particles created/annihilated in each space $\gH_{i_\ell}$. If not the case, both sides of Wick relation can easily be shown to vanish.
\end{enumerate}
This being said, let us compute the left-hand side of Wick's formula:
\begin{eqnarray*}
 X & = & \otimes_i\omega_i\left[\widehat{a}^\sharp\left(f_1^{(i_1)}\right)\cdots\widehat{a}^\sharp\left(f_{2k_1}^{(i_1)}\right)\cdots\widehat{a}^\sharp\left(f_1^{(i_N)}\right)\cdots\widehat{a}^\sharp\left(f_{2k_N}^{(i_N)}\right)\right] \\
 & = & \otimes_i\omega_i\left[ \Upsilon\left(-\text{Id}_{\gH_0}\right)^{\sum_{\ell=1}^N 2k_\ell}\otimes\cdots\otimes\Upsilon\left(-\text{Id}_{\gH_{i_1-1}}\right)^{\sum_{\ell=1}^N 2k_\ell}\otimes\right. \\
 & & \left.\otimes a^\sharp\left(f_1^{(i_1)}\right)\cdots a^\sharp\left(f_{2k_1}^{(i_1)}\right)\Upsilon\left(-\text{Id}_{\gH_{i_1}}\right)^{\sum_{\ell=2}^N 2k_\ell}\otimes\cdots\right.\\
& & \left.\qquad\qquad\qquad\qquad\qquad\qquad\cdots\otimes a^\sharp\left(f_1^{(i_N)}\right)\cdots a^\sharp\left(f_{2k_N}^{(i_N)}\right)\right]\\
 & = & \prod_{\ell=1}^N\omega_{i_\ell}\left[a^\sharp\left(f_1^{(i_\ell)}\right)\cdots a^\sharp\left(f_{2k_\ell}^{(i_\ell)}\right)\right]\\
 & =& \prod_{\ell=1}^N\sum_{\pi_\ell\in\widetilde{\S_{2k_\ell}}}(-1)^{\epsilon(\pi_\ell)}\omega_{i_\ell}\left[a^\sharp\left(f_{\pi_\ell(1)}^{(i_\ell)}\right)a^\sharp\left(f_{\pi_\ell(2)}^{(i_\ell)}\right)\right]\times\cdots \\
& & \qquad\qquad\qquad\qquad\qquad\qquad \cdots\times\omega_{i_\ell}\left[a^\sharp\left(f_{\pi_\ell(2k_\ell-1)}^{(i_\ell)}\right)a^\sharp\left(f_{\pi_\ell(2k_\ell)}^{(i_\ell)}\right)\right].
\end{eqnarray*}
Let us reindex the product
\begin{multline*}
 \widehat{a}^\sharp\left(f_1^{(i_1)}\right)\cdots\widehat{a}^\sharp\left(f_{2k_1}^{(i_1)}\right)\cdots\widehat{a}^\sharp\left(f_1^{(i_N)}\right)\cdots\widehat{a}^\sharp\left(f_{2k_N}^{(i_N)}\right)=\\
\widehat{a}^\sharp\left(f_1\right)\cdots\widehat{a}^\sharp\left(f_{2k_1}\right)\widehat{a}^\sharp\left(f_{2k_1+1}\right)\cdots\widehat{a}^\sharp\left(f_{2(k_1+\ldots+k_N)}\right),
\end{multline*}
such that the sum of Wick's formula (with $K:=k_1+\ldots+k_N$),
$$Y=\sum_{\pi\in\widetilde{\S_{2K}}}(-1)^{\epsilon(\pi)}(\otimes_i\omega_i)\left[\widehat{a}^\sharp\left(f_{\pi(1)}\right)\widehat{a}^\sharp\left(f_{\pi(2)}\right)\right]\cdots(\otimes_i\omega_i)\left[\widehat{a}^\sharp\left(f_{\pi(2K -1)}\right)\widehat{a}^\sharp\left(f_{\pi(2K)}\right)\right]$$
 contains only the terms with a permutation $\pi$ leaving invariant every interval of the form $[2k_\ell +1,2k_{\ell+1}]$, corresponding to the Hilbert space $\gH_{i_\ell}$, $1\le\ell\le N$. This is proved by induction on $K$: We introduce
$$W_\rho^K(e_1\ldots e_{2K}):=\sum_{\pi\in\widetilde{\S_{2K}}} (-1)^{\epsilon(\pi)}\rho(e_{\pi(1)}e_{\pi(2)})\ldots\rho(e_{\pi(2K-1)}e_{\pi(2K)}).$$
For $\pi\in\widetilde{\S_{2K}}$, we have $\pi(1)=1$, hence we have the induction formula
$$W_\rho^K(e_1\ldots e_{2K})=\sum_{i=2}^{2K}(-1)^i\rho(e_1e_i)W_\rho^{K-1}(e_2\ldots\widehat{e_i}\ldots e_{2K}).$$
In our case where $e_j=\widehat{a}^\sharp(f_j)$ and $\rho=\otimes_i\omega_i$, we see that $\rho(e_1e_i)\neq0$ if and only if $i\le 2k_1$. Hence, by induction on $K$, the only permutations $\pi\in\widetilde{\S_{2K}}$ giving rise to a non-zero term in $Y$ are those which leave invariant the intervals  $[2k_\ell +1,2k_{\ell+1}]$ with $1\le\ell\le N$. We can thus write
\begin{eqnarray*}
 Y & = &\prod_{\ell=1}^N  \sum_{\pi_\ell\in\widetilde{\S_{2k_\ell}}}(-1)^{\epsilon(\pi_\ell)}(\otimes_i\omega_i)\left[\widehat{a}^\sharp\left(f_{\pi_\ell(2k_\ell-1)}^{(i_\ell)}\right)\widehat{a}^\sharp\left(f_{\pi_\ell(2k_\ell)}^{(i_\ell)}\right)\right]\times\cdots\\
 & & \qquad\qquad\qquad\qquad\qquad\qquad\cdots
 \times(\otimes_i\omega_i)\left[\widehat{a}^\sharp\left(f_{\pi_\ell(1)}^{(i_\ell)}\right)\widehat{a}^\sharp\left(f_{\pi_\ell(2)}^{(i_\ell)}\right)\right]\\
 & = & \prod_{\ell=1}^N  \sum_{\pi_\ell\in\widetilde{\S_{2k_\ell}}} (-1)^{\epsilon(\pi_\ell)}\omega_{i_\ell}\left[a^\sharp\left(f_{\pi_\ell(2k_\ell-1)}^{(i_\ell)}\right)a^\sharp\left(f_{\pi_\ell(2k_\ell)}^{(i_\ell)}\right)\right]\times\cdots\\
 & & \qquad\qquad\qquad\qquad\qquad\qquad \cdots\times\omega_{i_\ell}\left[a^\sharp\left(f_{\pi_\ell(1)}^{(i_\ell)}\right)a^\sharp\left(f_{\pi_\ell(2)}^{(i_\ell)}\right)\right]\\
 & = & X.
\end{eqnarray*}
This proves that the state $\otimes_i\omega_i$ is quasi-free, which concludes the proof.
\end{proof}

%

\end{document}